\theoremstyle{plain}
\newtheorem{mytheorem}               {Theorem}
\newtheorem{mylemma}[mytheorem]      {Lemma}
\newtheorem{myproposition}[mytheorem]{Proposition}
\newtheorem{mycorollary}[mytheorem]  {Corollary}
\theoremstyle{definition}
\newtheorem{mydefinition}[mytheorem] {Definition}
\newtheorem{myexample}[mytheorem]    {Example}
\theoremstyle{remark}
\newtheorem{myclaim}                 {Claim}
\crefname{mytheorem}    {Theorem}    {Theorems}
\crefname{mylemma}      {Lemma}      {Lemmas}
\crefname{mycorollary}  {Corollary}  {Corollaries}
\crefname{myproposition}{Proposition}{Propositions}
\crefname{mydefinition} {Definition} {Definitions}
\crefname{myexample}    {Example}    {Examples}
\crefname{section}      {Section}    {Sections}
\crefname{page}         {Page}       {Pages}
\title{Computational Complexity of the Minimum Cost Homomorphism Problem on Three-Element Domains}
\author{Hannes Uppman\thanks{Partially supported by the National Graduate School in Computer Science (CUGS), Sweden.}\\
{\small Department of Computer and Information Science,} \\
{\small Link{\"{o}}ping University, SE-581 83 Link{\"{o}}ping, Sweden} \\
{\small {\tt hannes.uppman@liu.se}}}
\date{}
\newcommand{\st} {\@ifnextchar{~}{s.t.}{s.t.\ }}
\newcommand{\ie} {\@ifnextchar{~}{i.e.}{i.e.\ }}
\newcommand{\eg} {\@ifnextchar{~}{e.g.}{e.g.\ }}
\newcommand{\wlg}{\@ifnextchar{~}{wlog}{wlog\ }}
\newcommand{\wrt}{\@ifnextchar{~}{wrt.}{wrt.\ }}
\renewcommand{\iff}{\@ifnextchar{~}{iff}{iff\ }}
\renewcommand{\phi}{\varphi}
\renewcommand{\rho}{\varrho}
\renewcommand{\epsilon}{\varepsilon}
\newcommand{\Qplus}{\mathbb{Q}_{\ge 0}}
\newcommand{\Q}{\mathbb{Q}}
\newcommand{\Qplusinf}{\Qplus \cup \{\infty\}}
\DeclareMathOperator*{\argmin}{\arg\,\min}
\DeclareMathOperator{\ar}{ar}
\DeclareMathOperator{\pr}{pr}
\DeclareMathOperator{\supp}{supp}
\DeclareMathOperator{\sol}{Sol}
\DeclareMathOperator{\optsol}{Optsol}
\DeclareMathOperator{\feas}{Feas}
\DeclareMathOperator{\pol}{Pol}
\DeclareMathOperator{\fpol}{f{}Pol}
\newcommand{\csp}{CSP}
\newcommand{\vcsp}{VCSP}
\newcommand{\minhom}{Min-Cost-Hom}
\newcommand{\minsol}{Min-Sol}
\newcommand{\minones}{Min-Ones}
\newcommand{\class}[1]{#1}
\newcommand{\specialsol}[3]{\phi^{#1}_{#2 \rightarrow #3}}
\def\clap#1{\hbox to 0pt{\hss#1\hss}}
\def\mathllap{\mathpalette\mathllapinternal}
\def\mathrlap{\mathpalette\mathrlapinternal}
\def\mathllapinternal#1#2{\llap{$\mathsurround=0pt#1{#2}$}}
\def\mathrlapinternal#1#2{\rlap{$\mathsurround=0pt#1{#2}$}}
\newcommand{\close}[1] {\langle #1 \rangle}
\newcommand{\wclose}[1]{\langle #1 \rangle_w}
\newcommand{\eclose}[1]{\langle #1 \rangle_e}
\newcommand{\opers}[1]{\mathcal O_D^{(#1)}}
\newcommand{\comm}[3]{\left.\begin{smallmatrix}#1\\#2\end{smallmatrix}\right|_{\begin{smallmatrix}#3\end{smallmatrix}}}
\newcommand{\Zivny}{{\v{Z}}ivn{\'{y}}}
\newcommand{\TZ}{Thapper and \Zivny}
\newcommand{\Jonsson}{Jonsson}
\newcommand{\PJonsson}{Peter \Jonsson}
\newcommand{\Kolmogorov}{Kolmogorov}
\newcommand{\HKP}{Huber, Krokhin and Powell}
\newcommand{\Uppman}{Uppman}
\newcommand{\Takhanov}{Takhanov}
\newcommand{\Kuivinen}{Kuivinen}
\newcommand{\JKN}{\Jonsson, \Kuivinen\ and Nordh}
\newcommand{\FV}{Feder and Vardi}
\begin{document}
\maketitle
\begin{abstract}
In this paper we study the computational complexity of the (extended) \emph{minimum cost homomorphism problem} (\minhom) as a function of a constraint language, i.e.~a set of constraint relations and cost functions that are allowed to appear in instances.
A wide range of natural combinatorial optimisation problems can be expressed as {\minhom}s and a classification of their complexity would be highly desirable, both from a direct, applied point of view as well as from a theoretical perspective.

\minhom\ can be understood either as a flexible optimisation version of the \emph{constraint satisfaction problem} (\csp) or a restriction of the (general-valued) \emph{valued constraint satisfaction problem} (\vcsp).
Other optimisation versions of {\csp}s such as the \emph{minimum solution problem} (\minsol) and the \emph{minimum ones problem} (\minones) are special cases of \minhom.

The study of {\vcsp}s has recently seen remarkable progress.
A complete classification for the complexity of finite-valued languages on arbitrary finite domains has been obtained \TZ~[STOC'13].
However, understanding the complexity of languages that are not finite-valued appears to be more difficult.
\minhom\ allows us to study problematic languages of this type without having to deal with with the full generality of the \vcsp.
A recent classification for the complexity of three-element \minsol, \Uppman~[ICALP'13], takes a step in this direction.
In this paper we extend this result considerably by determining the complexity of three-element \minhom.
\end{abstract}

\section {Introduction}
The \emph{constraint satisfaction problem} (\csp) is a decision problem where 
an instance consists of a set of variables, a set of values, and a collection of constraints expressed over the variables.
The objective is to determine if it is possible to assign values to the variables in such a way that all constrains are satisfied simultaneously.
In general the constraint satisfaction problem is \class{NP}-complete.
However, by only allowing constraint-relations from a fixed constraint language $\Gamma$ one can obtain tractable fragments.
A famous conjecture by \FV~\cite{federvardi} predicts that this restricted problem, denoted \csp$(\Gamma)$, is either (depending on $\Gamma$) in \class{P} or is \class{NP}-complete.

In this paper we will study an optimisation version of the \csp.
Several such variants have been investigated in the literature.
Examples are: the \emph{min ones problem} (\minones)~\cite{booleancsp},
the \emph{minimum solution problem} (\minsol)~\cite{intromaxsol}
and the \emph{valued constraint satisfaction problem} (\vcsp)~\cite{vcsp}.
The problem we will work with is called the \emph{(extended) minimum cost homomorphism problem} (\minhom).
The ``unextended'' version of this problem was, motivated by a problem in defence logistics, introduced in~\cite{minhom:1} and studied in a series of papers before its complexity was completely characterised in~\cite{rustem:minhom}.
The extended version of the problem was introduced in~\cite{rustem:minsol}.

\minhom\ is a more general framework than both \minones\ and \minsol;
a problem of one of the latter types is also a \minhom.
The \vcsp-framework on the other hand is more general than \minhom.
In fact, we can describe every \minhom\ as a \vcsp\ for a constraint language in which every cost function is either $\{0,\infty\}$-valued or unary.
\minhom\ captures, despite this restriction, a wealth of combinatorial optimisation problems arising in a broad range of fields.

The study of {\vcsp}s has recently seen remarkable progress;
\TZ~\cite{tz:lp} described when a certain linear programming relaxation solves instances of the problem,
\Kolmogorov~\cite{kolmogorov:binary} simplified this description for finite-valued languages,
\HKP~\cite{annaco} classified all finite-valued languages on three-element domains, and
\TZ~\cite{tz:fvcsp} found a complete classification of the complexity for finite-valued languages on arbitrary finite domains.

Most of the classifications that have been obtained are about finite-valued constraint languages (\cite{tz:lp} mentioned above being a notable exception).
Understanding the complexity of general languages appears to be more difficult.
\minhom\ allows us to study languages of this type without having to deal with with the full generality of the \vcsp.
Using techniques of the so called algebraic approach (see \eg~\cite{alg:3,alg:1} and \cite{alg:opt:1,alg:opt:2}),
and building on results by \Takhanov~\cite{rustem:minhom,rustem:minsol} and \TZ~\cite{tz:lp,tz:fvcsp}
we could in \cite{minsol3} take a step in this direction by proving a classification for the complexity of \minsol\ on the three-element domain.
In this paper we extend these results to \minhom. 
Namely, we prove the following theorem.
\begin{mytheorem}
\label{r:main}
Let $(\Gamma,\Delta)$ be a finite language on a three-element domain $D$
and define $\Gamma^+ = \Gamma \cup \{\{d\}:d\in D\} \cup \{\{x : \nu(x)<\infty\} : \nu \in \Delta\}$.
If $(\Gamma,\Delta)$ is a core, then one of the following is true.
\begin{itemize}
\item \minhom$(\Gamma^+,\Delta)$ can be proved to be in \class{PO} by \cref{r:semilattice}.
\item \minhom$(\Gamma^+,\Delta)$ can be proved to be in \class{PO} by \cref{r:gwtp}.
\item \minhom$(\Gamma,\Delta)$ is \class{NP}-hard.
\end{itemize}
\end{mytheorem}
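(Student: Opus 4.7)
The plan is to proceed via the algebraic approach: reduce the complexity of $\minhom(\Gamma,\Delta)$ to a question about the (fractional) polymorphisms of $(\Gamma,\Delta)$, and then case-analyse the possible weighted clones on a three-element core. First I would invoke the standard reductions: because $(\Gamma,\Delta)$ is a core, augmenting $\Gamma$ with the constant relations $\{d\}$ and with the supports $\{x:\nu(x)<\infty\}$ of the cost functions does not change the complexity of $\minhom$ up to polynomial-time reductions, so it suffices to classify $\minhom(\Gamma^+,\Delta)$. Working with the rigid language $\Gamma^+$ is convenient since it enables conservativity-style reasoning on the polymorphisms.

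Next I would leverage the author's earlier classification of $\minsol$ on three-element domains. Since every $\minsol$ instance is a $\minhom$ instance, every constraint language whose $\minsol$-fragment is \class{NP}-hard inherits hardness for $\minhom$; this disposes of a large part of the lower-bound case. The remaining work concentrates on those $(\Gamma^+,\Delta)$ whose $\minsol$-fragment is tractable but whose extra cost functions may break the nice behaviour.

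The main body of the argument is then a case analysis over the fractional polymorphism clones on the three-element domain. For each possible clone I would check whether the language admits a semilattice-style fractional operation preserving the costs, in which case \cref{r:semilattice} gives membership in \class{PO}; whether it admits the generalised weak tp structure used by \cref{r:gwtp}, which again yields membership in \class{PO}; and otherwise exhibit a gadget reduction from a known hard $\minhom$, combining pp-definitions of bad relations with unary cost-function gadgets to transfer hardness from either a Boolean hard case or from a hard three-element $\minsol$ subcase.

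The main obstacle will be the last step. On a three-element domain the lattice of weighted clones is substantially richer than on the Boolean domain, and the border between the tractable clones captured by \cref{r:semilattice} and \cref{r:gwtp} and the remaining hard ones is subtle; in particular, simultaneously engineering pp-expressions that witness both a hard relation and a nontrivial cost function is what should absorb the bulk of the effort. A careful bookkeeping over the possible conservativity classes and fractional operations compatible with $\Delta$ will be needed to ensure that no clone slips between the two tractability conditions without being proved hard.
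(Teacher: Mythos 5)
Your proposal is a plan rather than a proof, and the plan defers exactly the part that constitutes the paper's actual argument. You say the main body is "a case analysis over the fractional polymorphism clones on the three-element domain" and then concede that carrying it out "should absorb the bulk of the effort" --- but you give no organizing principle for that analysis and no mechanism for deciding, in a given case, whether a suitable fractional polymorphism exists or a hardness gadget is expressible. The paper's proof is organized not around an enumeration of weighted clones (which is not known and would be unwieldy) but around the quantity $|\binom D 2 \cap \wclose{\Gamma,\Delta}|$, the number of two-element subsets of the domain that are wpp-definable, split into the cases $3$, $2$, and $<2$. The engine that makes each branch terminate is a Motzkin/LP-duality dichotomy (\cref{r:use:dom}, \cref{r:subset:fpol}): \emph{either} the language admits a dominating fractional polymorphism of a prescribed shape, \emph{or} a specific cost function or relation is expressible in $\eclose{\Gamma,\Delta}$ or $\wclose{\Gamma,\Delta}$. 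In the first alternative one obtains shrinkability certificates feeding \cref{r:gwtp}; in the second one obtains the valuations and crosses needed to apply \cref{r:np:hard}, \cref{r:min:set} and \cref{r:cross} and eventually either a semilattice/tournament-pair construction or NP-hardness. Without some such "no fractional polymorphism $\Rightarrow$ expressible witness" tool, your case analysis has no way to make progress in the negative branches, and the shrinkability hypothesis of \cref{r:gwtp} is never discharged in your sketch.

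A second, more localized problem is the claim that the earlier three-element \minsol\ classification "disposes of a large part of the lower-bound case." \minsol\ requires a single injective finite-valued cost function; a general $\Delta$ need not contain (or express) any such function, so there may be no \minsol\ fragment to which hardness can be delegated. The paper uses the earlier result only in the single sub-case where all three two-element subsets are wpp-definable (the essentially conservative case, via \cite[Theorem~12]{minsol3}); the genuinely new work is precisely the cases $|\binom D 2 \cap \Gamma'| \le 2$, where the unary expressive power is too weak for conservative techniques and one must construct the tournament pairs and arithmetical operations by hand (\cref{r:two:subsets}, \cref{r:work:a}, \cref{r:class:one}, \cref{r:work:b}). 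Your proposal correctly identifies the preprocessing step (adding constants to a core, \cref{r:get:const}) but leaves the substance of the classification unproved.
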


We define cores in \cref{s:core}.
\cref{r:main} combined with the following result, which follows from \cite[Lemma~2.4]{tz:fvcsp}, yields a full classification for \minhom\ on three-elements.
\begin{myproposition}
If $(\Gamma',\Delta')$ is a core of $(\Gamma,\Delta)$ then
\minhom$(\Gamma,\Delta)$ and \minhom$(\Gamma',\Delta')$ are polynomial-time inter-reducible.
\end{myproposition}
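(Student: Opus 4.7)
The plan is to establish polynomial-time inter-reducibility in both directions. The easier direction — reducing \minhom$(\Gamma',\Delta')$ to \minhom$(\Gamma,\Delta)$ — is essentially trivial: a core $(\Gamma',\Delta')$, as will be defined in \cref{s:core} following \cite[Lemma~2.4]{tz:fvcsp}, is obtained by restricting each relation and each cost function of $(\Gamma,\Delta)$ to the image $D' \subseteq D$ of some endomorphism. An instance of \minhom$(\Gamma',\Delta')$ may therefore be viewed as an instance of \minhom$(\Gamma,\Delta)$ in which each variable is confined to $D'$ (for instance by adding a unary constraint $\{d\}$ for some $d \in D'$ to every variable that does not already force it into $D'$, or simply by direct inclusion when $D' = D$).

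The harder direction is to reduce \minhom$(\Gamma,\Delta)$ to \minhom$(\Gamma',\Delta')$. The plan is to take an instance $I$ on variables $V$ and produce an instance $I'$ on the same variable set by replacing each constraint relation $R$ of arity $k$ with $R \cap (D')^k$ and each cost function $\nu$ with its restriction $\nu|_{D'}$. Correctness rests on the characteristic property of a core: there exists an endomorphism $e$ of $(\Gamma,\Delta)$ with $e(D) \subseteq D'$ such that for every assignment $\phi:V\to D$, the composition $e \circ \phi$ is feasible for $I$ whenever $\phi$ is, and has cost no larger than $\phi$. Consequently an optimum of $I$ is attained at some $\phi^\ast : V \to D'$, which is also feasible and optimal for $I'$; conversely every feasible solution to $I'$ is feasible for $I$ with the same cost, so $\opt(I) = \opt(I')$.

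The main obstacle is producing the cost-non-increasing endomorphism $e$ with image in $D'$, which is precisely the content of \cite[Lemma~2.4]{tz:fvcsp}. What remains is to check that the \minhom\ framework meets the hypotheses of that lemma; this is immediate since a \minhom\ is a \vcsp\ whose cost functions are either $\{0,\infty\}$-valued or unary, and the notions of core, endomorphism, and polynomial-time reduction for the general \vcsp\ restrict cleanly to this subclass. The polynomial running time of both reductions is clear, since we only perform local replacements drawn from the fixed finite languages $\Gamma$, $\Delta$, $\Gamma'$, and $\Delta'$.
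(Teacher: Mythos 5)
Your two-way reduction is the right architecture, but the claim you rest correctness on is false. You assert that a core comes with a single endomorphism $e$ with $e(D)\subseteq D'$ such that $m(e\circ\phi)\le m(\phi)$ for \emph{every} assignment $\phi$. With the definition of cores used here and in \cite{tz:fvcsp}, what one actually has is a unary fractional polymorphism $\psi$ and some $g\in\supp(\psi)$ with $g(D)=D'$; the defining inequality $\sum_{h}\psi(h)\,\nu(h(x))\le\nu(x)$ only bounds a \emph{weighted average}, and no single member of the support need be pointwise cost-non-increasing. Concretely, take $D=\{0,1,2\}$, $\Gamma=\emptyset$, $\Delta=\{\nu,\sigma\}$ with $(\nu(0),\nu(1),\nu(2))=(0,1,\tfrac12)$ and $(\sigma(0),\sigma(1),\sigma(2))=(1,0,\tfrac12)$. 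The restriction to $D'=\{0,1\}$ is a core of this language, witnessed by $\psi : f\mapsto\tfrac12\delta_{g,f}+\tfrac12\delta_{h,f}$ where $g,h$ fix $0,1$ and send $2$ to $0$ and to $1$ respectively; yet a uniformly cost-non-increasing $e$ would need $\nu(e(2))\le\tfrac12$, forcing $e(2)=0$, and $\sigma(e(2))\le\tfrac12$, forcing $e(2)=1$. So the endomorphism you invoke does not exist in general.

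The repair is the averaging argument that \cite[Lemma~2.4]{tz:fvcsp} actually encodes: if $\phi$ is an \emph{optimal} solution of $I$, then every $h\in\supp(\psi)$ yields a feasible $h\circ\phi$, and $\sum_{h}\psi(h)\,m(h\circ\phi)\le m(\phi)$ together with optimality forces $m(h\circ\phi)=m(\phi)$ for all $h\in\supp(\psi)$; in particular $g\circ\phi$ is optimal and takes values in $D'$, which gives $\opt(I)=\opt(I')$ and lets you transport optimal solutions in both directions. Note that the same issue infects your ``essentially trivial'' direction: the relations $R\cap(D')^k$ and functions $\nu|_{D'}$ are not members of $\Gamma$ and $\Delta$, and neither $D'$ nor $\{d\}$ is guaranteed to lie in $\Gamma$, so you cannot confine variables to $D'$ by adding unary constraints. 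Instead one replaces each restricted relation and cost function by its unrestricted original, solves the resulting $(\Gamma,\Delta)$-instance, and composes the returned optimal solution with $g$ --- again justified by the averaging argument applied at the optimum, not by a pointwise non-increasing retraction.
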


To obtain the classification we apply tools from the algebraic approach, and, following \TZ,
we make repeated use of Motzkin's Theorem.
Our tractability results are formulated and proved for arbitrary finite domains and are therefore not restricted to the three-element case.
Many of the tools we derive to aid in proving our main theorem are also effective on domains of size larger than three.
One example is that we show that a relation fails to be in the wpp-closure of a language only if some fractional polymorphism of the language does not preserve the relation (\cref{r:subset:fpol}).
This complements results in~\cite{alg:opt:1,alg:opt:2}.
Another example is that we show that all constants can be added to a core language without significantly changing the complexity of the associated \minhom (\cref{r:get:const}).
This complements results in~\cite{tz:fvcsp}.

The rest of the paper is organised as follows.
In \cref{s:prelim} we define some fundamental concepts,
in \cref{s:tractable} we state and prove tractability results,
in \cref{s:tools} we collect a number of results that will be used later on (these might also be useful on domain of larger size),
in \cref{s:core} we define cores~\cite{tz:fvcsp} and prove a related result,
in \cref{s:three:el} we focus on the three-element domains and establish our main result; that core languages that are not tractable by the results in \cref{s:tractable} are in fact \class{NP}-hard,
and finally, in \cref{s:proofs}, we give proofs for results stated in \cref{s:tools} and \cref{s:three:el}.

\section {Preliminaries}
\label{s:prelim}
Let $D$ be a finite set.
The pair $(\Gamma,\Delta)$ 
is called a finite \emph{language} if $\Gamma$ is a finite set of finitary relations on $D$ and 
$\Delta$ is a finite set of functions $D \to \Qplusinf$.
For every finite language $(\Gamma,\Delta)$ we define the optimisation problem \minhom$(\Gamma,\Delta)$ as follows.
\begin{description}
\item[Instance:] A triple $(V,C,w)$ where
\begin{itemize}
\item
$V$ is a set of variables,
\item
$C$ is a set of $\Gamma$-allowed constraints, \ie a set of pairs $(s,R)$ where the constraint-scope $s$ is a tuple of variables, and the constraint-relation $R$ is a member of $\Gamma$ of the same arity as $s$,
\item
$w$ is a weight function $V \times \Delta \to \Qplus$.
\end{itemize}
\item[Solution:] A function $\phi : V \to D$ \st for every $(s,R) \in C$ it holds that $\phi(s) \in R$, where $\phi$ is applied component-wise.
\item[Measure:] The measure of a solution $\phi$ is $m(\phi) = \sum_{v \in V} \sum_{\nu \in \Delta} w(v,\nu) \nu( \phi(v) )$.
For every function $\phi : V \to D$ that is not a solution we define $m(\phi)=\infty$.
\end{description}
The objective is to find a solution $\phi$ that minimises $m(\phi)$.

For an instance $I$ we let $\sol(I)$ denote the set of all solutions and $\optsol(I)$ the set of all optimal solutions.
We define $0\,\infty = \infty\,0 = 0$, $x \le \infty$ and $x + \infty = \infty + x = \infty$ for all $x \in \Qplusinf$.

\subsection {Notation}
The $i$:th projection operation will be denoted $\pr_i$.
We define $\binom A 2 = \{ \{x,y\} \subseteq A : x \ne y \}$.
The set of operations on $D$ is denoted $\mathcal{O}_D$.
For binary operations $f$, $g$ and $h$ we define $\overline f$ through $\overline f(x,y) = f(y,x)$ and $f[g,h]$ through $f[g,h](x,y) = f(g(x,y),h(x,y))$.
A $k$-ary operation $f$ on $D$ is called \emph{conservative} if $f(x_1,\dots,x_k) \in \{x_1,\dots,x_k\}$ for every $x_1,\dots,x_k \in D$.
A ternary operation $m$ on $D$ is called \emph{arithmetical} (or \emph{2/3-minority})
if $m(x,y,y)=m(x,y,x)=m(y,y,x)=x$ for every $x,y \in D$.
We say that an operation $f$ on $D$ is conservative (arithmetical) on $S \subseteq D$ if $f|_S$ is conservative (arithmetical).
Similarly we say that $f$ is conservative (arithmetical) on $\mathcal{S} \subseteq 2^D$ if $f|_S$ is conservative (arithmetical) for every $S \in \mathcal{S}$.

For a set $A$ of operations (relations) we write $A^{(k)}$ for the set of all $k$-ary operations (relations) in $A$.
For a set $\Gamma$ of relations on $D$ we use $\Gamma^c$ to denote $\Gamma \cup \{\{d\}: d \in D\}$.

We use $\delta$ for the Kronecker delta function, \ie $\delta_{x,y}=1$ if $x=y$ and $\delta_{x,y}=0$ otherwise.

\subsection {Polymorphisms}
An function $f : D^m \to D$ is called a \emph{polymorphism} of $\Gamma$ if for every $R \in \Gamma$ and every $t_1,\dots,t_m \in R$ it holds that $f(t_1,\dots,t_m)\in R$  where $f$ is applied component-wise.
The set of all polymorphisms of $\Gamma$ is denoted $\pol(\Gamma)$.
A function $\omega : \pol^{(k)}(\Gamma) \to \Qplus$ is a $k$-ary \emph{fractional polymorphism}~\cite{alg:opt:1} of $(\Gamma,\Delta)$ \iff $\sum_{g\in \pol^{(k)}(\Gamma)} \omega(g)=1$ and
\begin{align*}
&
\sum_{g\in \pol^{(k)}(\Gamma)} \omega(g) \nu(g(x_1,\dots, x_k)) \le \frac{1}{k} \sum_{i=1}^k \nu(x_i)
&\nu \in \Delta, x_1,\dots,x_k \in D.
\end{align*}
The support of a fractional polymorphism $\omega$, denoted $\supp(\omega)$, if the set of polymorphisms for which $\omega$ is non-zero.
The set of all fractional polymorphisms of $(\Gamma,\Delta)$ is denoted $\fpol(\Gamma,\Delta)$.
\begin{myexample}
The function $\pr_i$ is a trivial polymorphism for any set of relations $\Gamma$, and
the function $f \mapsto \sum_{i=1}^k \frac{1}{k} \delta_{\pr_i,f}$ is a $k$-ary fractional polymorphism of every language $(\Gamma,\Delta)$.
\end{myexample}

\subsection {Reductions}
A relation 
$R$ is called \emph{pp-definable} in $\Gamma$ \iff there is an instance $I=(V,C)$ of \csp$(\Gamma)$ \st
$R = \{ (\phi(v_1),\dots,\phi(v_n)) : \phi \in \sol(I) \}$ for some $v_1,\dots,v_n \in V$.
The notation $\close{\Gamma}$ is used for the set of all relations that are pp-definable in $\Gamma$.
Similarly; 
$R$ is called \emph{weighted pp-definable} (wpp-definable) in $(\Gamma,\Delta)$ \iff there is an instance $I=(V,C,w)$ of \minhom$(\Gamma,\Delta)$ \st
$R = \{ (\phi(v_1),\dots,\phi(v_n)) : \phi \in \optsol(I) \}$ for some $v_1,\dots,v_n \in V$.
We use $\wclose{\Gamma,\Delta}$ to denote the set of all such relations.
A function $\nu : D \to \Qplusinf$ is called \emph{expressible} in $(\Gamma,\Delta)$ \iff there is an instance
$I=(V,C,w)$ of \minhom$(\Gamma,\Delta)$ and $v \in V$ \st
$\nu(x) = \min \{ m(\phi) : \phi : V \to D, \phi(v)=x \}$.
The set of all cost functions expressible in $(\Gamma,\Delta)$ is denoted $\eclose{\Gamma,\Delta}$.
We use $\feas(\Delta)$ for the set $\{ \{ x : \nu(x)<\infty \} : \nu \in \Delta \}$.

What makes these closures interesting is the following result, see \eg~\cite{alg:opt:1,softcsp,fourel}.
\begin{mytheorem}
\label{r:red}
Let $\Gamma' \subseteq \wclose{\Gamma,\Delta}$ and $\Delta' \subseteq \eclose{\Gamma,\Delta}$ be finite sets.
Then, it holds that
\minhom$(\Gamma' \cup \feas(\Delta'),\Delta')$
is polynomial-time reducible to \minhom$(\Gamma,\Delta)$.
\end{mytheorem}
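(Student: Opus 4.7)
The plan is to give a gadget-replacement reduction. Given an instance $I' = (V', C', w')$ of \minhom$(\Gamma' \cup \feas(\Delta'), \Delta')$, I will build in polynomial time an instance $I = (V, C, w)$ of \minhom$(\Gamma, \Delta)$ whose optima correspond to those of $I'$. Observe first that every $F \in \feas(\Delta')$ is in $\close{\Gamma} \subseteq \wclose{\Gamma,\Delta}$: if $\nu \in \eclose{\Gamma,\Delta}$ is expressed by the instance $(V_\nu, C_\nu, w_\nu)$ with distinguished variable $u_\nu$, then $\{x : \nu(x)<\infty\}$ is the pp-definition obtained from the same variables and constraints (ignoring $w_\nu$) by projecting on $u_\nu$. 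Hence every relation in $\Gamma' \cup \feas(\Delta')$ is wpp-definable in $(\Gamma,\Delta)$, and every cost function in $\Delta'$ is expressible in $(\Gamma,\Delta)$.

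For each $R \in \Gamma' \cup \feas(\Delta')$ fix a witnessing instance $I_R = (V_R, C_R, w_R)$ with distinguished variables $v_1^R,\dots,v_{\ar(R)}^R$ such that the tuples $(\phi(v_1^R),\dots,\phi(v_{\ar(R)}^R))$ for $\phi \in \optsol(I_R)$ are exactly $R$; let $\opt_R$ be its optimum. For each $\nu \in \Delta'$ fix an expressing instance $I_\nu$ with distinguished variable $u_\nu$. I will take $V = V' \sqcup \bigsqcup_{(s,R) \in C'} (V_R^{s,R} \setminus \{v_1^R,\dots\}) \sqcup \bigsqcup_{v,\nu} (V_\nu^{v,\nu} \setminus \{u_\nu\})$ with fresh copies, identifying $v_i^R$ with the $i$-th component of $s$ and $u_\nu$ with $v$. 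Translate the constraints of each gadget into $C$. Scale the weights in each constraint gadget $I_R^{s,R}$ by a large constant $M$, and the weights in each cost gadget $I_\nu^{v,\nu}$ by $w'(v,\nu)$.

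The crux is choosing $M$. For each $R$ there is a rational $\delta_R > 0$ equal to the smallest possible gap between $\opt_R$ and the minimum of $m$ over solutions whose projection to $(v_1^R,\dots)$ lies outside $R$ (here we use that all values are rational with bounded denominators, so such a minimum is well-defined and positive). Let $\delta = \min_R \delta_R$, and let $U$ be an upper bound on the total weight times the largest finite value of any $\nu \in \Delta$ summed over all cost gadgets in $I$; both $\delta^{-1}$ and $U$ are polynomial in the input. Taking $M > U/\delta$ ensures that any solution of $I$ in which some constraint gadget fails to be globally optimal has cost strictly higher than any solution in which all constraint gadgets are optimal, regardless of the rest.

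With this choice the equivalence is immediate. Given an optimum $\phi$ of $I$, every constraint gadget $I_R^{s,R}$ must be at an optimal state, which by the wpp-definition forces the projection of $\phi$ to $V'$ to satisfy every constraint of $I'$; and each cost gadget contributes at least $w'(v,\nu)\nu(\phi(v))$, with equality achievable by choosing the minimizer in the definition of $\nu$. Conversely, any solution $\phi'$ of $I'$ extends to $\phi \in \sol(I)$ by filling in each gadget with a witnessing optimal (resp.\ conditionally optimal) assignment, and the measure equals $m(\phi') + M\sum_R \opt_R$, an additive constant independent of $\phi'$. Thus $I$ and $I'$ have the same set of optimal projections, completing the reduction. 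The main obstacle, and essentially the only real content, is the bookkeeping to verify that $M$ can be chosen of polynomial size; the rest is a direct consequence of the definitions.
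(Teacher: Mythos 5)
The paper does not actually prove \cref{r:red}; it is quoted as a known result with citations, so there is nothing to compare against line by line. Judged on its own, your overall strategy (gadget replacement for constraints, $\epsilon$/$M$-style weight scaling for the cost functions, and a polynomial-bit-length bound on $M$) is the standard and correct way to prove this kind of expressibility reduction, and the analysis of the constraint gadgets for $R\in\Gamma'$ and of the cost gadgets for $\nu\in\Delta'$ is sound.

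There is, however, a genuine flaw in your very first step, the treatment of $\feas(\Delta')$. You claim that for $\nu\in\eclose{\Gamma,\Delta}$ the set $F=\{x:\nu(x)<\infty\}$ is pp-definable by taking the expressing instance and discarding the weights. That is false in general: $m(\phi)$ can be infinite for a $\phi$ that satisfies all crisp constraints, because some $\mu\in\Delta$ with positive weight takes the value $\infty$ on $\phi(v)$; dropping the weights loses exactly these implicit restrictions. Worse, $F$ need not be wpp-definable at all. For instance, with $\Gamma=\emptyset$ and $\Delta=\{\nu\}$ where $\nu(a)=0$, $\nu(b)=1$, $\nu(c)=\infty$, one has $F=\{a,b\}$, but every unary relation in $\wclose{\Gamma,\Delta}$ is either $\{a\}$ or $D$. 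So the ``constraint gadget'' $I_F$ your construction relies on for $F\in\feas(\Delta')$ may simply not exist, and this part of the reduction breaks. The standard repair is different in kind from your $M$-scaling: simulate the constraint $(v,F)$ by attaching a fresh copy of the expressing instance of $\nu$ at $v$ with a sufficiently \emph{small} positive weight $\epsilon$, so that $\phi(v)\notin F$ forces infinite measure while $\phi(v)\in F$ contributes a perturbation below the granularity of the remaining measure; one then has to verify that $\epsilon$ (like $M$) can be chosen with polynomially many bits. With that replacement, and with the routine handling of instances whose every solution has infinite measure, your argument goes through.
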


\section {Tractable languages}
\label{s:tractable}
We will make use of two tractability results.
The first
follows from
a theorem by \TZ~\cite[Theorem~5.1 (see remarks in Sect.~6)]{tz:lp}.
\begin{mytheorem}
\label{r:semilattice}
Let $(\Gamma,\Delta)$ be a finite language.
If there exists $\omega \in \fpol^{(2)}(\Gamma,\Delta)$ with $f \in \supp(\omega)$ \st $f$ is a semilattice operation,
then \minhom$(\Gamma, \Delta)$ is in \class{PO}.
\end{mytheorem}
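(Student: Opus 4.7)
The plan is to reformulate $\minhom(\Gamma,\Delta)$ as an instance of $\vcsp$ and then invoke Theorem~5.1 of~\cite{tz:lp} (together with the refinements noted in Section~6 of that paper). The latter shows that the basic LP relaxation (BLP) solves $\vcsp(\Lambda)$ for every valued language $\Lambda$ admitting a binary fractional polymorphism whose support contains a semilattice operation.

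The first step is to convert $(\Gamma,\Delta)$ into a valued language $\Gamma_{\text{val}}$: replace each $R \in \Gamma$ by its $\{0,\infty\}$-valued indicator (equal to $0$ on $R$ and $\infty$ elsewhere) and keep every $\nu \in \Delta$ as a unary cost function. Then an instance of $\minhom(\Gamma,\Delta)$ becomes an instance of $\vcsp(\Gamma_{\text{val}})$ of polynomial size and with the same optimum, and the notion of $k$-ary fractional polymorphism in \cref{s:prelim} agrees with the one used in \cite{tz:lp} for $\Gamma_{\text{val}}$. In particular $\omega$ becomes a binary fractional polymorphism of $\Gamma_{\text{val}}$ with the semilattice $f$ in its support.

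The second step is to apply Theorem~5.1 of~\cite{tz:lp} to conclude that BLP is exact on every instance of $\vcsp(\Gamma_{\text{val}})$. An optimal assignment can then be recovered in polynomial time by a standard self-reducibility argument (fix variables one by one and re-solve the LP, keeping any assignment that does not increase the LP optimum). This gives a polynomial-time algorithm for $\minhom(\Gamma,\Delta)$, proving it lies in \class{PO}. Since all the real work is done in~\cite{tz:lp}, the only obstacle on our side is to verify that the definitions of fractional polymorphism and of the underlying valued language align between the two frameworks; this is routine bookkeeping once the indicator-function correspondence above is in place.
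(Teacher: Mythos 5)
Your proposal matches the paper exactly: the paper does not give an independent proof of this theorem but derives it directly from Theorem~5.1 of \cite{tz:lp} (with the remarks in Section~6 there), which is precisely the reduction-to-\vcsp-plus-BLP argument you describe. The translation via $\{0,\infty\}$-valued indicator functions and the self-reducibility step for recovering an optimal assignment are the routine bookkeeping the paper leaves implicit, and you have filled them in correctly.
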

\begin{myexample}
\label{ex:smf}
Let $(\Gamma,\Delta)$ be a language on a totally ordered domain $D$ that admits the binary fractional polymorphism
$f \mapsto \frac{1}{2}\delta_{\min,f} + \frac{1}{2}\delta_{\max,f}$.
Certainly $\min$ is a semilattice operation, so by \cref{r:semilattice} it follows that \minhom$(\Gamma,\Delta)$ is in \class{PO}.
\end{myexample}
We remark that the theorem in \cite{tz:lp} from which \cref{r:semilattice} follows is very capable;
it explains the tractability of every finite-valued \vcsp\ that is not \class{NP}-hard~\cite{tz:fvcsp}.

The second tractability result generalises a family of languages that \Takhanov\ has proved tractable~\cite{rustem:minhom,rustem:minsol}.
The particular formulation we will use here is a bit more general than a version we previously used in~\cite[Theorem~8]{minsol3}.

To state the result we need to introduce a few concepts.
A central observation is given by the following lemma.
The result follows immediately from the definition of fractional polymorphisms and the measure function $m$.
We omit the proof.
\begin{mylemma}
\label{r:observation}
If $(\Gamma,\Delta)$ admits a $k$-ary fractional polymorphism $\omega$
and $I$ is an instance of \minhom$(\Gamma,\Delta)$ with $\phi_1,\dots,\phi_k \in \sol(I)$,
then $f(\phi_1,\dots,\phi_k) \in \sol(I)$ for every $f \in \supp(\omega)$ and
\begin{align*}
\sum_{f \in \pol^{(k)}(\Gamma)} \omega(f) m(f(\phi_1,\dots,\phi_k)) \le \frac{1}{k} \sum_{i=1}^k m(\phi_k).
\end{align*}
\end{mylemma}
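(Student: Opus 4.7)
The plan is to prove both conclusions directly from the definitions, without much machinery. The key observation is that the measure $m(\phi)$ decomposes as a non-negative linear combination over $v \in V$ and $\nu \in \Delta$ of the values $\nu(\phi(v))$, so the inequality defining a fractional polymorphism can be applied pointwise and then summed.

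First, for the feasibility claim, I would note that $\supp(\omega) \subseteq \pol^{(k)}(\Gamma)$ by definition of a fractional polymorphism, so every $f \in \supp(\omega)$ preserves each $R \in \Gamma$. Given an arbitrary constraint $(s, R) \in C$ with $s = (v_1, \dots, v_n)$, each solution $\phi_i$ gives $\phi_i(s) = (\phi_i(v_1), \dots, \phi_i(v_n)) \in R$; applying $f$ coordinatewise to these $k$ tuples yields $f(\phi_1, \dots, \phi_k)(s) \in R$. This holds for every constraint, so $f(\phi_1, \dots, \phi_k) \in \sol(I)$.

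For the inequality, I would expand $m$ using its definition and swap the order of summation:
\begin{align*}
\sum_{f \in \pol^{(k)}(\Gamma)} \omega(f)\, m(f(\phi_1, \dots, \phi_k))
&= \sum_{v \in V} \sum_{\nu \in \Delta} w(v, \nu) \sum_{f \in \pol^{(k)}(\Gamma)} \omega(f)\, \nu\bigl(f(\phi_1(v), \dots, \phi_k(v))\bigr).
\end{align*}
At each fixed $v$ and $\nu$, the inner sum is bounded above by $\frac{1}{k} \sum_{i=1}^k \nu(\phi_i(v))$ using the defining inequality of a fractional polymorphism, applied at the point $(\phi_1(v), \dots, \phi_k(v)) \in D^k$. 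Since $w(v, \nu) \ge 0$, these bounds aggregate to $\frac{1}{k} \sum_{i=1}^k m(\phi_i)$ after re-grouping the sums.

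There is no genuine obstacle here; the only mild subtlety is handling values in $\Qplusinf$, which is covered by the conventions $0 \cdot \infty = 0$ and $x + \infty = \infty$ stated in the preliminaries, making the bound meaningful even when some $m(\phi_i) = \infty$ (in which case the right-hand side is $\infty$ and the inequality is trivial). This is presumably why the author deems the argument routine enough to omit.
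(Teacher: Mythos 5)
Your proof is correct and is exactly the routine verification the paper has in mind when it declares the result immediate from the definitions and omits the proof: feasibility from $\supp(\omega)\subseteq\pol^{(k)}(\Gamma)$ applied constraint-by-constraint, and the inequality by expanding $m$, swapping sums, and applying the fractional-polymorphism inequality pointwise at each $(\phi_1(v),\dots,\phi_k(v))$ with non-negative weights $w(v,\nu)$. (You also correctly read the right-hand side as $\frac{1}{k}\sum_{i=1}^k m(\phi_i)$; the $m(\phi_k)$ in the statement is a typo.)
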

\begin{myexample}
\label{ex:smf:2}
Consider again \cref{ex:smf}.
It follows from \cref{r:observation} that, for any instance $I=(V,C,w)$
and any $\phi_1,\phi_2 : V \to D$,
we have $m(\min(\phi_1,\phi_2)) + m(\max(\phi_1,\phi_2)) \le m(\phi_1) + m(\phi_2)$.
Functions of this kind are called \emph{submodular} and are central characters in the field of discrete optimisation, see \eg~\cite{sfm}.
\end{myexample}
The following two definitions establishes some convenient notation.
\begin{mydefinition}
\label{def:w}
For functions $\omega \in \fpol^{(k)}(\Gamma,\Delta)$ and $x \in D, y \in D^k$ we define
$W^\omega_x(y) = \sum_{f \in \pol^{(k)}(\Gamma) : f(y)=x} \omega(f)$.
When there is no risk of confusion we drop the superscript and write $W_x(y)$.
\end{mydefinition}
\begin{mydefinition}
\label{def:special:sol}
For an instance $I=(V,C,w)$ of \minhom$(\Gamma,\Delta)$, a variable $v \in V$
and a value $x \in \{ \phi(v) : \phi \in \sol(I)\}$,
we denote by $\specialsol{I}{v}{x}$ an arbitrary solution of $I$ \st
$m(\specialsol{I}{v}{x}) = \min \{ m(\phi) : \phi \in \sol(I), \phi(v) = x \}$.
\end{mydefinition}
Using these definitions we obtain the following corollary of \cref{r:observation}.
\begin{mylemma}
\label{r:measure}
If $(\Gamma,\Delta)$ admits a $k$-ary fractional polymorphism $\omega$,
$I=(V,C,w)$ is an instance of \minhom$(\Gamma,\Delta)$ and $v \in V$ is \st $\{a_1,\dots,a_k\} \subseteq \{ \phi(v) : \phi \in \sol(I)\}$,
then
\begin{align*}
\sum_{d \in D} W_d(a_1,\dots,a_k) m(\specialsol{I}{v}{d}) \le \frac{1}{k} \sum_{i=1}^k m(\specialsol{I}{v}{a_i}).
\end{align*}
\end{mylemma}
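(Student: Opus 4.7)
The plan is to apply \cref{r:observation} to the particular $k$ solutions $\phi_i := \specialsol{I}{v}{a_i}$ (which exist by the hypothesis $\{a_1,\dots,a_k\} \subseteq \{\phi(v) : \phi \in \sol(I)\}$ and \cref{def:special:sol}), and then to regroup the resulting sum on the left-hand side according to the common value of $f$ at $(a_1,\dots,a_k)$.

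First I would invoke \cref{r:observation} with this choice of inputs to obtain
\[
\sum_{f \in \pol^{(k)}(\Gamma)} \omega(f)\, m\!\left(f(\phi_1,\dots,\phi_k)\right) \;\le\; \frac{1}{k}\sum_{i=1}^k m(\phi_i),
\]
where in particular each $f(\phi_1,\dots,\phi_k)$ is again a solution of $I$. Next I would partition the sum on the left according to the value $d = f(a_1,\dots,a_k) \in D$. Since $\phi_i(v)=a_i$, every $f$ in the class indexed by $d$ satisfies $f(\phi_1,\dots,\phi_k)(v) = f(a_1,\dots,a_k) = d$, so that $f(\phi_1,\dots,\phi_k)$ is a solution sending $v$ to $d$. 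By the defining minimality property of $\specialsol{I}{v}{d}$ we then have
\[
m\!\left(f(\phi_1,\dots,\phi_k)\right) \;\ge\; m(\specialsol{I}{v}{d}).
\]

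Plugging this lower bound into each group and summing over $d \in D$, the left-hand side becomes at least
\[
\sum_{d \in D} \left(\sum_{\substack{f \in \pol^{(k)}(\Gamma) \\ f(a_1,\dots,a_k)=d}} \omega(f)\right) m(\specialsol{I}{v}{d}) \;=\; \sum_{d \in D} W_d(a_1,\dots,a_k)\, m(\specialsol{I}{v}{d}),
\]
using \cref{def:w}. Chaining this with the inequality from \cref{r:observation} yields the claim. The only subtlety worth flagging is that $W_d(a_1,\dots,a_k)$ may be nonzero even for values $d$ that need not appear as $\phi(v)$ for some $\phi \in \sol(I)$; but since each $f(\phi_1,\dots,\phi_k)$ is itself a solution, this never happens, and the set of $d$'s with $W_d(a_1,\dots,a_k) > 0$ is automatically contained in $\{\phi(v) : \phi \in \sol(I)\}$, making the notation $\specialsol{I}{v}{d}$ well-defined whenever it is given a nonzero weight.
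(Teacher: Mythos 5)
Your proof is correct and is exactly the argument the paper intends: the paper states this lemma as an immediate corollary of \cref{r:observation} and omits the proof, and your instantiation with $\phi_i = \specialsol{I}{v}{a_i}$, regrouping by $d = f(a_1,\dots,a_k)$, and appeal to the minimality in \cref{def:special:sol} is precisely that omitted argument. Your closing remark about well-definedness of $\specialsol{I}{v}{d}$ for the $d$ with $W_d(a_1,\dots,a_k)>0$ is also handled correctly (and the terms with $W_d = 0$ vanish under the paper's convention $0\,\infty = 0$).
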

\begin{mydefinition}
We say that $S \subseteq D$ is \emph{shrinkable} to $S \setminus \{x\}$ in $(\Gamma,\Delta)$
if $(\Gamma,\Delta)$ admits a sequence of fractional polymorphisms $\omega_1,\dots,\omega_m$ 
and tuples $a^1 \in S^{\ar(\omega_1)},\dots,a^m \in S^{\ar(\omega_m)}$
\st 
for an instance $I=(V,C,w)$ of \minhom$(\Gamma,\Delta)$ and $v \in V$ \st $S \subseteq \{\phi(v) : \phi \in \sol(I)\}$
it holds that
the system of inequalities we obtain from \cref{r:measure} applied to $\omega_i$ and $a_i$, for $i\in[m]$,
implies that
\begin{align*}
\sum_{i=1}^n t_i m(\specialsol{I}{v}{a_i}) \le m(\specialsol{I}{v}{x})
\end{align*}
for some integer $n$, some $t_1,\dots,t_n \in \Qplus$ \st $\sum_{i=1}^n t_i = 1$, and some $a_1,\dots,a_n \in S\setminus \{x\}$.

We call a collection of fractional polymorphisms and tuples of this type a \emph{certificate} for the fact that $D$ is shrinkable to $D \setminus \{x\}$.
If $S$ is shrinkable to $S \setminus\{x\}$ and $S \setminus\{x\}$ is shrinkable to $S \setminus\{x,y\}$,
then we say that $S$ is shrinkable to $S \setminus\{x,y\}$.
\end{mydefinition}
\begin{myexample}
\label{ex:csp:shrinkable}
Consider the language $(\Gamma,\emptyset)$ on the domain $D$.
Let $\{a_1,\dots,a_m\} \subseteq D$.
It is not hard to see that $\omega : f \mapsto \sum_{i=1}^{m-1} \frac{1}{m-1} \delta_{\pr_i,f}$ is in $\fpol^{(m)}(\Gamma,\emptyset)$.
Hence, $\omega$ and $(a_1,\dots,a_{m})$ certifies that $\{a_1,\dots,a_m\}$ is shrinkable to $\{a_1,\dots,a_{m-1}\}$.
\end{myexample}

We can now state the second tractability result.
\begin{mytheorem}
\label{r:gwtp}
Let $(\Gamma,\Delta)$ be a finite language on the domain $D$ \st $\Gamma = \Gamma^c$ and \st \csp$(\Gamma)$ is in $\class{P}$.
\minhom$(\Gamma, \Delta)$ is in \class{PO} if there exits
$\mathcal{F} \subseteq \wclose{\Gamma,\Delta}^{(1)}$, $\mathcal{A} \subseteq \binom D 2$, $f_1,f_2 \in \pol^{(2)}(\Gamma)$ and $m \in \pol^{(3)}(\Gamma)$ \st the following holds.
\begin{itemize}
\item
If $\{a,b\} \subseteq B$ for some $B \in \mathcal{F}$, and $\{a,b\} \not\in \mathcal{A}$, then $f_1|_{\{a,b\}}$ and $f_2|_{\{a,b\}}$ are projections
and $m|_{\{a,b\}}$ is arithmetical.
\item
If $\{a,b\} \subseteq B$ for some $B \in \mathcal{F}$, and $\{a,b\} \in \mathcal{A}$, then $f_1|_{\{a,b\}}$ and $f_2|_{\{a,b\}}$ are different idempotent, conservative and commutative operations.
\item
For every $S \in \wclose{\Gamma,\Delta}^{(1)} \setminus \mathcal{F}$ there is a certificate showing that $S$ is shrinkable to some $S' \in \mathcal{F}$.
\item
$m$ is idempotent on every set in $\mathcal{F}$
and conservative on every set in $\mathcal{F}\setminus\mathcal{A}$.
\end{itemize}
\end{mytheorem}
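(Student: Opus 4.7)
The strategy is a two-phase algorithm: a \emph{shrinking} phase reduces every variable's feasible set to a member of $\mathcal{F}$, after which a \emph{solving} phase exploits the restricted structure of $f_1, f_2, m$ on those sets.

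In the shrinking phase, one computes $F_v = \{\phi(v) : \phi \in \sol(I)\}$ for each variable $v$; this is polynomial because $\csp(\Gamma) \in \class{P}$ and $\Gamma = \Gamma^c$ permits pinning a variable to any value. If $F_v \notin \mathcal{F}$, the hypotheses supply a certificate whose inequalities (\cref{r:measure}) chain together to produce $m(\specialsol{I}{v}{x}) \ge \sum_{i=1}^{n} t_i\, m(\specialsol{I}{v}{a_i})$ for a convex combination over $F_v \setminus \{x\}$. Hence there is an optimal solution with $\phi(v) \ne x$, and the corresponding unary restriction on $v$ (wpp-definable by construction) may be adjoined via \cref{r:red}. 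Iterating brings every $F_v$ into $\mathcal{F}$ in polynomially many rounds, since $|F_v|$ strictly decreases at each step where shrinking fires.

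In the solving phase we may assume (by adjoining the shrunken domains as unary constraints) that every variable $v$ has domain $S_v \in \mathcal{F}$. The stipulated structure on each pair $\{a,b\} \subseteq S_v$ is then of two possible types. If $\{a,b\} \in \mathcal{A}$, the binary polymorphisms restrict to a pair of distinct idempotent, conservative, commutative operations on $\{a,b\}$ --- effectively a $\min/\max$ pair for some linear order --- giving a submodular structure mediated by $f \mapsto \tfrac{1}{2}\delta_{f_1,f} + \tfrac{1}{2}\delta_{f_2,f}$ (compare \cref{ex:smf:2}). If $\{a,b\} \notin \mathcal{A}$, $m$ restricts to an arithmetical operation and $f_1, f_2$ to projections, so the pair has a Mal'tsev/affine character. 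This matches the tractable configuration identified by \Takhanov~\cite{rustem:minhom,rustem:minsol}, and his iterative scheme applies: at each round, select a pair and eliminate a value --- via Mal'tsev-style propagation on non-$\mathcal{A}$ pairs and via submodular function minimization on $\mathcal{A}$ pairs --- driving every $S_v$ to a singleton.

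The main obstacle will be adapting Takhanov's analysis (originally set up for globally conservative languages) to the $\mathcal{F}$-relative conditions here. One must verify that every elimination step keeps each variable's effective domain inside $\mathcal{F}$, which is where the idempotency of $m$ on $\mathcal{F}$-sets and its conservativity outside $\mathcal{A}$ are used. Checking that these local conditions suffice to support propagation --- and that the shrinking and solving phases compose correctly without the solving phase ever generating a new sub-instance whose feasible set escapes $\mathcal{F}$ --- is the technical heart of the proof.
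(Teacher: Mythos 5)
Your shrinking phase is exactly the paper's first step, and you are right that the solving phase should end by invoking Takhanov. But the proposal stops precisely at the point the paper calls on two concrete devices, and the "technical heart" you flag as remaining to be checked is not something you should re-derive by re-running Takhanov's elimination argument --- the paper avoids that entirely. First, from $f_1,f_2,m$ one constructs by superposition (as in the proof of Theorem~8 of \cite{minsol3}) operations $f_1',f_2',m'$ satisfying strictly stronger properties: on every non-$\mathcal{A}$ pair inside a set of $\mathcal{F}$ both binary operations become $\pr_1$, and $m'$ becomes idempotent \emph{and conservative} on every set in $\mathcal{F}$ (not merely on $\mathcal{F}\setminus\mathcal{A}$). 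The point of this canonicalisation is that $f_1',f_2',m'$ then preserve \emph{every} unary relation $S \subseteq B$ with $B \in \mathcal{F}$. Second, one reduces to the \emph{multi-sorted} version of \minhom, assigning each variable the sort $D_v \in \mathcal{F}$ obtained from the shrinking phase; relative to these sorts the language is conservative in the multi-sorted sense, and Takhanov's theorem for the conservative multi-sorted variant (\cite[Theorem~23]{rustem:minsol}) applies as a black box. Your worry that "the solving phase might generate a sub-instance whose feasible set escapes $\mathcal{F}$" dissolves in this formulation: the sorts are fixed once and for all, and conservativity of the canonicalised operations on each sort is exactly what the multi-sorted theorem needs.

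A smaller inaccuracy: you describe the $\mathcal{A}$-pairs as yielding "a submodular structure mediated by $f \mapsto \tfrac{1}{2}\delta_{f_1,f}+\tfrac{1}{2}\delta_{f_2,f}$", but nothing in the hypotheses makes this a fractional polymorphism of $(\Gamma,\Delta)$; $f_1,f_2$ are only polymorphisms of $\Gamma$. Takhanov's tournament-pair condition is purely combinatorial and his algorithm is not submodular function minimisation, so this part of your mechanism for the solving phase would not go through as stated even if you did try to carry out the elimination directly.
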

\begin{proof}[Proof sketch]
Given an instance $I$ of \minhom$(\Gamma,\Delta)$ we can, since \csp$(\Gamma^c)$ is in $\class{P}$,
compute for every variable $v$ the set $D_v = \{ \phi(v) : \phi \in \sol(I) \}$.
From the definition of shrinkable sets it is immediate that if $D_v$ is shrinkable to $S \in \wclose{\Gamma,\Delta}$,
then we can add the constraint $(v,S)$ to $I$ without worsening the measure of an optimal solution.
We can repeat this procedure until $D_v$ is in $\mathcal{F}$ for every variable $v$.

It is known, see~\cite[Proof of Theorem~8]{minsol3}, that
from $f_1,f_2,m$ one can construct (by superposition) operations $f_1,f_2,m$ that 
in addition to the
conditions of the theorem
also satisfy the following stronger properties:
\begin{itemize}
\item
If $\{a,b\} \subseteq B$ for some $B \in \mathcal{F}$
and $\{a,b\}\not\in \mathcal{A}$, then $f_1'|_{\{a,b\}} = f_2'|_{\{a,b\}} = \pr_1$.
\item
The operation $m'$ is idempotent and conservative on every set in $\mathcal{F}$.
\end{itemize}
Clearly $f_1',f_2',m' \in \pol(\Gamma)$.
Note that $f_1',f_2',m'$ preserves all unary relations $S \subseteq B$ for $B \in \mathcal{F}$.
The result therefore follows from an easy reduction to the multi-sorted version of the problem and a result due to 
\Takhanov\ for this conservative multi-sorted variant~\cite[Theorem~23]{rustem:minsol}.
\end{proof}

\begin{myexample}
Consider again \minhom$(\Gamma,\emptyset)$.
We saw in \cref{ex:csp:shrinkable} that for every $\{x\} \subseteq X \subseteq D$ it holds that $X$ is shrinkable to $\{x\}$.
Hence, if $\Gamma^c = \Gamma$ and \csp$(\Gamma)$ is in \class{P} it follows from \cref{r:gwtp} that \minhom$(\Gamma,\emptyset)$ is in \class{PO}.
This of course is no surprise as \minhom$(\Gamma,\emptyset)$ essentially is the same problem as \csp$(\Gamma)$.
\end{myexample}

\section {Tools}
\label{s:tools}

In this section we establish a few results that will come in handy later on.
Most of these results are used in proofs collected in \cref{s:proofs}.
However, we hope this section will provide an overview of the kind of techniques that are used to prove our main theorem.
Several of the results are proved with the help of the following classical theorem, see \eg~\cite[p.~94]{schrijver}.
\begin{mytheorem}[Motzkin's Transposition Theorem]
\label{r:motzkin}
For any
$A \in \Q^{m \times n}$,
$B \in \Q^{p \times n}$,
$b \in \Q^{m}$ and
$c \in \Q^{p}$, exactly one of the following holds:
\begin{itemize}
\item $A x \le b$, $Bx < c$ for some $x \in \Q^n$
\item $A^T y + B^T z = 0$ and ($b^T y + c^T z < 0$ or $b^T y + c^T z = 0$ and $z \ne 0$) for some $y \in \Qplus^m$ and $z \in \Qplus^p$
\end{itemize}
\end{mytheorem}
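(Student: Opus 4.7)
The plan is to derive Motzkin's theorem from the classical Farkas Lemma (the non-strict case), which itself admits a self-contained proof by induction on the number of variables via Fourier--Motzkin elimination. The easy half---that the two alternatives are mutually exclusive---will come first. Given $x$ with $Ax\le b$, $Bx<c$ and $(y,z)\in\Qplus^m\times\Qplus^p$ with $A^T y+B^T z=0$, multiplying the vector equation on the left by $x^T$ and invoking componentwise monotonicity yields $0 = y^T(Ax) + z^T(Bx) \le y^T b + z^T c$, with strict inequality whenever $z\neq 0$ (because then $z^T(Bx) < z^T c$). This contradicts each disjunct of the second alternative.

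For the main direction, assume no $x$ satisfies both $Ax\le b$ and $Bx<c$. I introduce a slack scalar $\sigma\in\Q$ and consider the auxiliary linear program
\begin{align*}
\max\,\sigma \qquad \text{subject to} \qquad Ax\le b,\quad Bx+\sigma\mathbf{1}\le c,
\end{align*}
where $\mathbf{1}\in\Q^p$ is the all-ones vector. By assumption its optimum is at most $0$, since any feasible $(x,\sigma)$ with $\sigma>0$ would certify the first alternative. I then split on feasibility. If the LP is infeasible, then $Ax\le b$ alone must be infeasible---any $x$ with $Ax\le b$ extends to an LP-feasible $(x,\sigma)$ by choosing $\sigma$ sufficiently negative---so Farkas applied to $Ax\le b$ produces $y\in\Qplus^m$ with $A^T y=0$ and $b^T y<0$, and $(y,0)$ is the desired dual certificate.

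If instead the LP is feasible with optimum at most $0$, I invoke strong LP duality (a standard corollary of Farkas) to obtain $(y,z)\in\Qplus^m\times\Qplus^p$ satisfying $A^T y+B^T z=0$, $\mathbf{1}^T z=1$, and $b^T y+c^T z\le 0$; the scalar equation $\mathbf{1}^T z=1$ comes from the $\sigma$-column of the primal matrix. Since $\mathbf{1}^T z=1$ forces $z\neq 0$, the pair $(y,z)$ satisfies the second disjunct of the dual alternative when $b^T y+c^T z=0$, and the first disjunct when the bound is strict. The main obstacle is justifying strong LP duality cleanly over $\Q$; a more elementary alternative is to apply Farkas directly to the perturbed system $Ax\le b$, $Bx+\delta\mathbf{1}\le c$ for rational $\delta>0$, extract a rational certificate for each $\delta$, and pass to a limit as $\delta\downarrow 0$ by exploiting rationality and a compactness argument on the resulting cone of certificates.
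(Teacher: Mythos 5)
Your proof is correct, but note that the paper does not prove this statement at all: it is quoted as a classical theorem with a pointer to Schrijver's book, so there is no in-paper argument to compare against. Your derivation---mutual exclusivity by left-multiplying $A^Ty+B^Tz=0$ with $x^T$, then reducing the existence direction to Farkas' lemma via the auxiliary program $\max\sigma$ subject to $Ax\le b$, $Bx+\sigma\mathbf{1}\le c$ and reading off the dual constraints $A^Ty+B^Tz=0$, $\mathbf{1}^Tz=1$---is the standard textbook route and is essentially the one Schrijver himself takes, so you have in effect supplied the omitted classical proof rather than an alternative to anything in the paper. Two small points deserve attention. First, the degenerate case $p=0$ (no strict inequalities) breaks your auxiliary LP, since $\sigma$ is then unconstrained and the program is unbounded whenever $Ax\le b$ is feasible; you should dispose of that case separately by observing the theorem collapses to Farkas' lemma there (it is harmless for this paper, whose applications always have at least one strict inequality). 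Second, your closing ``more elementary alternative'' via perturbation by $\delta>0$ and a limiting/compactness argument on certificates is only a sketch and would need real work to make rigorous (normalising the certificates so they live in a compact set, and arguing the limit is rational and still a certificate); since the strong-duality route you give first is complete---feasibility plus boundedness of a rational LP guarantees a rational dual optimum---you should present that as the proof and drop or clearly demote the sketch.
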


The first result concerns a slight generalisation of the concept of dominating fractional polymorphisms~\cite{minsol3}.
\begin{mydefinition}
\label{def:domp:fpol}
Let $k \ge 2$ and $a \in D^{k-1}$, $b \in D$ be \st $a_1,\dots,a_{k-1},b$ are distinct elements.
A fractional polymorphism $\omega \in \fpol^{(k)}(\Gamma,\Delta)$ is called \emph{$(a_1,\dots,a_{k-1},b)$-dominating} if
$W^\omega_{a_j}(a_1,\dots,a_{k-1},b) \ge \frac{1}{k}$
for every $j \in [k-1]$ and 
$\frac{1}{k} > W^\omega_{b}(a_1,\dots,a_{k-1},b)$.
\end{mydefinition}
\begin{myproposition}
\label{r:use:dom}
Let $(\Gamma,\Delta)$ be a finite language on a finite set $D$.
Let $k \ge 2$ and $a \in D^{k-1}$, $b \in D$ be \st $a_1,\dots,a_{k-1},b$ are distinct.
If $(\Gamma,\Delta)$ does not admit a fractional polymorphism that is $(a_1,\dots,a_{k-1},b)$-dominating, then $\eclose{\Gamma,\Delta}$ contains a unary function $\nu$ that satisfies $\infty > \nu(a_1),\dots,\nu(a_{k-1}),\nu(b)$
and $\nu(c)>\nu(b)$ for every $c \in D \setminus \{b\}$.
\end{myproposition}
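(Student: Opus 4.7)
The plan is to prove the contrapositive by LP duality, in the style of \TZ. Set $\alpha=(a_1,\dots,a_{k-1},b)$. The non-existence of a $(a_1,\dots,a_{k-1},b)$-dominating fractional polymorphism is equivalent to infeasibility of the rational linear system in the unknowns $\omega(f)$, $f\in\pol^{(k)}(\Gamma)$,
\begin{align*}
\omega(f)&\ge 0, \quad \sum_f\omega(f)=1,\\
\sum_f\omega(f)\nu(f(x))&\le\tfrac{1}{k}\sum_i\nu(x_i) \quad\text{for every } (\nu,x)\in\Delta\times D^k,\\
\sum_{f:f(\alpha)=a_j}\omega(f)&\ge\tfrac{1}{k} \quad\text{for every } j\in[k-1],\\
\sum_{f:f(\alpha)=b}\omega(f)&<\tfrac{1}{k}.
\end{align*}
By \cref{r:motzkin}, this yields rational dual multipliers $t\in\Q$ (for the equality), $y_{\nu,x}\ge 0$, $s_j\ge 0$ for $j\in[k-1]$, and $z\ge 0$ such that for every $f\in\pol^{(k)}(\Gamma)$
\begin{align*}
(\mathrm{A})\qquad t+\sum_{\nu,x}y_{\nu,x}\nu(f(x))-\sum_{j=1}^{k-1}s_j\delta_{f(\alpha),a_j}+z\delta_{f(\alpha),b}&\ge 0,
\end{align*}
together with
\begin{align*}
(\mathrm{B})\qquad t+\tfrac{1}{k}\sum_{\nu,x}y_{\nu,x}\sum_i\nu(x_i)-\tfrac{1}{k}\sum_{j=1}^{k-1}s_j+\tfrac{z}{k}&\le 0,
\end{align*}
with (B) strict when $z=0$.

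Averaging (A) over the $k$ projections $\pr_1,\dots,\pr_k$ reproduces the negation of (B)---pairwise distinctness of $a_1,\dots,a_{k-1},b$ ensures that each $\pr_i(\alpha)$ picks out exactly one $\delta$-term. Combined with (B) this forces equality throughout (B) and in (A) on every projection, and the Motzkin strictness clause then forces $z>0$. Writing $F(f)=\sum_{\nu,x}y_{\nu,x}\nu(f(x))$, the tight equalities read $F(\pr_i)=s_i-t$ for $i\in[k-1]$ and $F(\pr_k)=-z-t$, all finite because the right-hand side of (B) is, and $F(\pr_i)-F(\pr_k)=s_i+z>0$.

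To realise the desired $\nu\in\eclose{\Gamma,\Delta}$, I would use the polymorphism-encoding gadget $I=(V,C,w)$ with $V=D^k$: for each $R\in\Gamma$ of arity $n$ and each $r^{(1)},\dots,r^{(k)}\in R$, form $t_i=(r^{(1)}_i,\dots,r^{(k)}_i)\in V$ for $i\in[n]$ and include the constraint $(t_1,\dots,t_n)\in R$; set $w(x,\nu)=y_{\nu,x}$. Its solutions are exactly the $k$-ary polymorphisms of $\Gamma$, with measure $m(\phi)=F(\phi)$, and choosing $\alpha\in V$ as the distinguished variable produces the unary function
\begin{align*}
\nu(d)=\min\{F(\phi):\phi\in\pol^{(k)}(\Gamma),\ \phi(\alpha)=d\},
\end{align*}
which lies in $\eclose{\Gamma,\Delta}$. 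The lower bound from (A) meets the upper bound supplied by the corresponding projection to give $\nu(a_j)=F(\pr_j)<\infty$ and $\nu(b)=F(\pr_k)<\infty$, while for $c\in D\setminus\{a_1,\dots,a_{k-1},b\}$ (A) yields $\nu(c)\ge -t=\nu(b)+z>\nu(b)$, and $\nu(a_j)-\nu(b)=s_j+z>0$. The main obstacle is spotting this polymorphism-encoding gadget; once it is in hand the LP-duality bookkeeping runs on autopilot, and the only other delicate point is the projection-averaging step that converts Motzkin's strictness clause into the strict positivity $z>0$ underlying the separation at $b$.
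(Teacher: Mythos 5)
Your proposal follows exactly the route of the paper's own proof: linearise the existence of an $(a_1,\dots,a_{k-1},b)$-dominating fractional polymorphism, apply Motzkin's theorem, average the resulting dual inequality over the $k$ projections to force tightness and hence $z>0$, and then feed the dual multipliers $y_{\nu,x}$ as weights into the second-order indicator gadget on variable set $D^k$, reading off $\nu$ at the variable $(a_1,\dots,a_{k-1},b)$. All of the bookkeeping you describe (the values $\nu(a_j)=F(\pr_j)$, $\nu(b)=F(\pr_k)=-z-t$, and the separation $\nu(c)\ge -t>\nu(b)$) matches the paper.

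There is, however, a gap in the treatment of cost functions taking the value $\infty$ (recall $\Delta$ consists of functions $D\to\Qplusinf$). Your primal system imposes $\sum_f\omega(f)\nu(f(x))\le\frac{1}{k}\sum_i\nu(x_i)$ for \emph{every} $(\nu,x)\in\Delta\times D^k$; when some $\nu$ is not finite-valued this is not a rational linear system, so \cref{r:motzkin} cannot be applied to it as written. The paper's proof restricts the constraints to the pairs $(\nu,x)$ with finite right-hand side and restricts the variables to the set $P^{(k)}$ of polymorphisms $g$ with $\nu(g(x))<\infty$ for all such pairs; under the convention $0\cdot\infty=0$, requiring $\supp(\omega)\subseteq P^{(k)}$ is exactly the content of the infinite constraints. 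As a consequence your dual inequality (A) is only available for $f\in P^{(k)}$, and your gadget then needs a further repair: a polymorphism $\phi\notin P^{(k)}$ has $\nu(\phi(x))=\infty$ only at pairs where $y_{\nu,x}$ may vanish, so $F(\phi)$ can be finite yet is not bounded below by (A), and the minimum defining $\nu(d)$ could be attained outside $P^{(k)}$. The paper fixes this by adding a perturbation $\epsilon\sum_{\nu,x}\nu(\phi(x))$ to the measure, chosen small enough not to disturb the argmin over $P^{(k)}$ but large enough to give every $\phi\notin P^{(k)}$ infinite measure. If every member of $\Delta$ is finite-valued your argument is complete as it stands; for the general statement these two adjustments are needed, though they do not change the structure of your proof.
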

A proof is given in \cref{s:r:use:dom}.
Using similar arguments we can also prove the following characterisation of which relations that are wpp-definable in $(\Gamma,\Delta)$.
\begin{myproposition}
\label{r:subset:fpol}
Let $(\Gamma,\Delta)$ be a finite language on a finite set $D$ and let $\emptyset \ne R = \{t_1,\dots,t_k\} \subseteq D^n$.
Exactly one of the following is true.
\begin{enumerate}
\item
There exists $\omega \in \fpol^{(k)}(\Gamma,\Delta)$ with $f \in \supp(\omega)$ \st $f(t_1,\dots,t_k) \not\in \{t_1,\dots,t_k\}$.
\item
It holds that $R \in \wclose{\Gamma,\Delta}$.
\end{enumerate}
\end{myproposition}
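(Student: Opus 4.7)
The plan is to prove the two directions of the exclusive disjunction separately: the forward implication is immediate from \cref{r:observation}, while the reverse uses Motzkin's Transposition Theorem.

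For $(2) \Rightarrow \neg(1)$: suppose $R$ is wpp-defined by some instance $I=(V,C,w)$ at variables $v_1,\ldots,v_n$, let $\phi_i \in \optsol(I)$ satisfy $(\phi_i(v_1),\ldots,\phi_i(v_n))=t_i$, and let $M$ denote their common measure. For any $\omega \in \fpol^{(k)}(\Gamma,\Delta)$, \cref{r:observation} gives $f(\phi_1,\ldots,\phi_k) \in \sol(I)$ for every $f \in \supp(\omega)$ and $\sum_f \omega(f)\,m(f(\phi_1,\ldots,\phi_k)) \le M$. Since $m(f(\phi_1,\ldots,\phi_k)) \ge M$ for each $f \in \sol(I)$, equality must hold in each nontrivial summand; hence every $f \in \supp(\omega)$ produces an optimal solution, and so $f(t_1,\ldots,t_k) \in R$.

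For $\neg(1) \Rightarrow (2)$ I would wpp-define $R$ via the ``universal'' instance $I^\star$ obtained as follows. Take $V = D^k$, and for every $R' \in \Gamma$ of arity $r$ and every $(a^1,\dots,a^r) \in V^r$ whose row-tuples $(a^1[j],\dots,a^r[j])$ all lie in $R'$, include the constraint $((a^1,\dots,a^r),R')$. The solutions of $I^\star$ are then precisely the maps $f \in \pol^{(k)}(\Gamma)$ viewed as operations $D^k \to D$. Choosing the distinguished variables $v_j=(t_1[j],\dots,t_k[j]) \in V$, the tuple realised on $(v_1,\dots,v_n)$ by a solution $f$ is exactly $f(t_1,\dots,t_k)$, and each projection $\pr_i$ already realises $t_i$. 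Setting $m_\pi = \tfrac{1}{k}\sum_i m(\pr_i)$, it suffices to find weights $w \colon V \times \Delta \to \Qplus$ satisfying (i) $m_\pi \le m(f)$ for every $f \in \pol^{(k)}(\Gamma)$, and (ii) $m_\pi < m(f)$ whenever $f(t_1,\dots,t_k)\notin R$: then a standard averaging argument applied to (i) forces every projection into $\optsol(I^\star)$, while (ii) excludes every ``bad'' $f$.

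To produce such weights I would apply \cref{r:motzkin} to the LP in the unknowns $w(a,\nu) \ge 0$ with non-strict constraints (i) and strict constraints (ii). If the LP is infeasible, Motzkin yields non-negative duals $y_f, z_f, y_{a,\nu}$ with $z \ne 0$ and $z_f>0$ only for $f$ satisfying $f(t_1,\dots,t_k) \notin R$, such that reading off the coefficient of each $w(a,\nu)$ gives
\begin{align*}
\sum_{f \in \pol^{(k)}(\Gamma)} (y_f+z_f)\Bigl(\tfrac{1}{k}\sum_{i=1}^{k} \nu(a_i) - \nu(f(a))\Bigr) \;=\; y_{a,\nu} \;\ge\; 0
\qquad (a \in D^k,\ \nu \in \Delta).
\end{align*}
Normalising $\omega(f) = (y_f+z_f)/\sum_{f'}(y_{f'}+z_{f'})$, well-defined because $z \ne 0$ forces the denominator to be positive, then produces a $k$-ary fractional polymorphism of $(\Gamma,\Delta)$ whose support contains a bad $f$, contradicting $\neg(1)$. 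The main obstacle is engineering this Motzkin step so that the $\tfrac{1}{k}$ of the fractional-polymorphism inequality drops out of the dual: that is precisely why the reference value is chosen to be the \emph{average} $m_\pi$ of the projection measures rather than, say, $m(\pr_1)$. A further modicum of care is needed for cost functions taking the value $\infty$, which I would handle by restricting the LP to weights with $w(a,\nu)=0$ whenever $a \notin \{x : \nu(x) < \infty\}^k$, so that $m_\pi$ stays finite and the dual computation is unaffected.
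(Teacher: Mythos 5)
Your argument is essentially the paper's proof of this proposition, run from the dual side. The paper writes the linear system in the unknown fractional\hyp{}polymorphism weights $u_g$ and, when it is unsatisfiable, reads the instance weights $v_{\nu,x}$ off Motzkin's alternative; you write the linear program in the unknown instance weights $w(a,\nu)$ and read a fractional polymorphism off the alternative. Both proofs use the same universal (second-order indicator) instance on $D^k$, the same distinguished variables, and the same averaging device that makes $\tfrac{1}{k}\sum_i m(\pr_i)$ the reference value so that the $\tfrac1k$ of the fractional-polymorphism inequality survives dualisation. Your normalisation $\omega(f)=(y_f+z_f)/\sum_{f'}(y_{f'}+z_{f'})$ and the use of $z\ne0$ are exactly the intended computation, and the first direction via \cref{r:observation} is the same one-line argument the paper alludes to. Your formulation even makes the step ``all projections are optimal'' slightly more transparent than the paper's.

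\textbf{One step is not yet watertight: the treatment of $\infty$.} Zeroing out $w(a,\nu)$ for $a\notin\{x:\nu(x)<\infty\}^k$ keeps $m_\pi$ finite, but it does not deal with polymorphisms $f$ for which $\nu(f(a))=\infty$ at some \emph{retained} pair $(a,\nu)$. For such $f$ the coefficient $\tfrac1k\sum_i\nu(a_i)-\nu(f(a))$ in your constraints (i)--(ii) equals $-\infty$, so Motzkin's theorem (which needs rational data) does not apply to the LP as written; and if you instead drop these rows, nothing in the resulting feasible solution prevents such an $f$ from being a finite-measure optimal solution of $I^\star$ realising a tuple outside $R$ --- precisely when all the offending weights happen to be zero. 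The paper handles this by restricting the constraint rows to the class $P^{(k)}$ of polymorphisms that map every finite-cost argument tuple to finite-cost values, and then adding a term $\epsilon\sum_{\nu,a}\nu(\phi(a))$ (over the retained pairs) to the measure, which assigns infinite measure to every $f\notin P^{(k)}$ while $\epsilon$ is taken small enough not to disturb which solutions are optimal among $P^{(k)}$. You need an analogous device; note that the naive fix of simply adding $\epsilon$ to every retained weight after solving the LP does not obviously preserve your constraints (i), so some care is genuinely required here.
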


We give a proof in \cref{s:r:subset:fpol}.
Once established we can use the proposition to quickly derive a number of useful results.

\begin{mycorollary}
\label{r:wpp:computable}
Let $(\Gamma,\Delta)$ be a finite language on a finite set $D$.
For any fixed $k$ the set of wpp-definable $k$-ary relations, $\wclose{\Gamma,\Delta}^{(k)}$, can be computed.
\end{mycorollary}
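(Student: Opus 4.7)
The plan is to turn \cref{r:subset:fpol} into a decision procedure via linear programming. Since $D$ is finite, there are only $2^{|D|^k}$ candidate $k$-ary relations on $D$, so it suffices to describe an algorithm which, given any non-empty $R = \{t_1,\dots,t_m\} \subseteq D^k$, decides whether $R \in \wclose{\Gamma,\Delta}$; the corner case $R = \emptyset$ is easily handled separately. By \cref{r:subset:fpol}, $R \notin \wclose{\Gamma,\Delta}$ iff there exists $\omega \in \fpol^{(m)}(\Gamma,\Delta)$ and $f \in \supp(\omega)$ with $f(t_1,\dots,t_m) \notin R$.

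First I would enumerate $\pol^{(m)}(\Gamma)$, which is a subset of the finite set $\opers{m}$: for each $g \in \opers{m}$ and each $R' \in \Gamma$ one can in finite time verify whether $g$ preserves $R'$. I would then encode the existence of the $\omega$ above as a finite linear program over $\Q$ with variables $\omega(g)$, indexed by $g \in \pol^{(m)}(\Gamma)$: the constraints are $\omega(g) \ge 0$, $\sum_g \omega(g) = 1$, the cost inequalities from the definition of fractional polymorphism for every $(\nu, (x_1,\dots,x_m)) \in \Delta \times D^m$, and the strict inequality $\sum_{g : g(t_1,\dots,t_m) \notin R}\omega(g) > 0$. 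By \cref{r:subset:fpol}, $R \in \wclose{\Gamma,\Delta}$ iff this program is infeasible, and feasibility of rational linear programs is decidable.

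The step requiring care is the presence of $\infty$-entries in $\Delta$. For a pair $(\nu, x_1,\dots,x_m)$ with some $\nu(x_i) = \infty$ the cost inequality is vacuous. For a pair with all $\nu(x_i)$ finite, any $g$ satisfying $\nu(g(x_1,\dots,x_m)) = \infty$ must have $\omega(g) = 0$, since otherwise the left-hand side evaluates to $\infty$ and the inequality fails; once these $\omega(g)$ are fixed to zero, the residual is a finite rational linear inequality. Hence every defining condition of $\fpol^{(m)}(\Gamma,\Delta)$ rewrites cleanly into finitely many rational linear (in)equalities together with zero-forcing equalities, and a standard LP algorithm completes the computation. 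The main obstacle is conceptual rather than technical: recognising that \cref{r:subset:fpol} provides an exact characterisation of wpp-definability that is amenable to a finite LP test. Everything else is bookkeeping.
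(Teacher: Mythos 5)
Your proposal is correct and follows the same route as the paper: for each candidate $R$ with $|R|=m$, enumerate $\pol^{(m)}(\Gamma)$ and test via a rational LP whether \cref{r:subset:fpol}(1) holds. Your explicit treatment of the $\infty$-entries (vacuous inequalities versus zero-forcing of $\omega(g)$) is consistent with the conventions used in the paper's proof of \cref{r:subset:fpol} and merely spells out what the paper's proof sketch leaves implicit.
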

\begin{proof}[Proof sketch]
This is immediate from \cref{r:subset:fpol};
we can find all polymorphisms of arities $1,\dots,|D|^k$ and then, for every $R \subseteq D^k$, solve a linear program.
\end{proof}

\begin{mycorollary}
\label{r:two:subset}
Let $(\Gamma,\Delta)$ be a finite language on a finite set $D$ and let $\{a,b\} \subseteq D$.
If there is $\nu \in \eclose{\Gamma,\Delta}$ and $A \subseteq D$ \st $\{a,b\} \subseteq A$, $A \in \wclose{\Gamma,\Delta}$ and $\nu(a)<\nu(b)<\infty$ and $\nu(b) \le \nu(x)$ for any $x \in A\setminus\{a,b\}$, then one of the following is true.
\begin{enumerate}
\item
$\{a,b\} \in \wclose{\Gamma,\Delta}$
\item
There is $\omega \in \fpol^{(2)}(\Gamma,\Delta)$ that is $(a,b)$-dominating.
\end{enumerate}
\end{mycorollary}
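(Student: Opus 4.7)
The plan is to split on whether $\{a,b\} \in \wclose{\Gamma,\Delta}$. If yes, conclusion~(1) holds and we are done, so suppose $\{a,b\} \notin \wclose{\Gamma,\Delta}$. Then \cref{r:subset:fpol}, applied to the unary relation $R = \{a,b\}$, hands us $\omega \in \fpol^{(2)}(\Gamma,\Delta)$ together with some $f_0 \in \supp(\omega)$ such that $f_0(a,b) \notin \{a,b\}$. I will argue that this very $\omega$ is $(a,b)$-dominating.

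Before doing any arithmetic I would establish two transfer facts. First, every $f \in \supp(\omega)$ satisfies $f(a,b) \in A$: fix a wpp-definition $A = \{\phi(v) : \phi \in \optsol(I)\}$ and pick optimal $\phi_a, \phi_b \in \optsol(I)$ with $\phi_a(v)=a$, $\phi_b(v)=b$. \cref{r:observation} gives $\sum_g \omega(g)\,m(g(\phi_a,\phi_b)) \le (m(\phi_a)+m(\phi_b))/2$, which is the optimum of $I$. Since every $g(\phi_a,\phi_b)$ is already a solution, the inequality must be tight throughout $\supp(\omega)$, so each $g(\phi_a,\phi_b)$ is optimal and hence $g(a,b) \in A$. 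In particular $f_0(a,b) \in A \setminus \{a,b\}$. Second, an entirely analogous tightness argument applied to an instance witnessing $\nu \in \eclose{\Gamma,\Delta}$ shows that the fractional-polymorphism inequality of $\omega$ extends to $\nu$:
\[
\sum_{f \in \pol^{(2)}(\Gamma)} \omega(f)\,\nu(f(a,b)) \;\le\; \tfrac{1}{2}\bigl(\nu(a)+\nu(b)\bigr).
\]

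Next I would partition $\supp(\omega)$ by the image of $(a,b)$ and record the three $\omega$-masses $W_a = W^\omega_a(a,b)$, $W_b = W^\omega_b(a,b)$ and $W_c = \sum_{x \in A\setminus\{a,b\}} W^\omega_x(a,b)$. The first transfer fact gives $W_a + W_b + W_c = 1$, while $f_0$ forces $W_c > 0$. Using $\nu(x) \ge \nu(b)$ on $A \setminus \{a,b\}$, the left-hand side of the displayed inequality is bounded below by $W_a\nu(a) + (1-W_a)\nu(b)$; rearranging gives $(W_a - 1/2)(\nu(a)-\nu(b)) \le 0$, and since $\nu(a)<\nu(b)$ this forces $W_a \ge 1/2$. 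Combined with $W_c>0$ this yields $W_b = 1 - W_a - W_c < 1/2$, so $\omega$ is $(a,b)$-dominating.

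The principal obstacle is the pair of transfer statements used above: any $\omega \in \fpol(\Gamma,\Delta)$ preserves, through its support, every relation in $\wclose{\Gamma,\Delta}$, and satisfies its defining inequality against every cost function in $\eclose{\Gamma,\Delta}$. Both are folklore within the framework surrounding \cref{r:red} and follow from tightness of \cref{r:observation} at optima, but they are the one place the argument is not entirely mechanical; once they are in hand, the rest is a short LP-style weight manipulation.
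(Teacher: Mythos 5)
Your proposal takes exactly the paper's route: assume $\{a,b\}\notin\wclose{\Gamma,\Delta}$, apply \cref{r:subset:fpol} to get $\omega$ with some $f_0\in\supp(\omega)$ satisfying $f_0(a,b)\notin\{a,b\}$, and conclude from $\nu$ that this $\omega$ is $(a,b)$-dominating. The paper dismisses the last step as ``not hard to see''; your two transfer facts (support of $\omega$ preserves the wpp-definable $A$, and the $\omega$-inequality extends to $\nu\in\eclose{\Gamma,\Delta}$) together with the weight computation $W_a\ge\tfrac12$, $W_b<\tfrac12$ correctly supply the omitted details.
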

\begin{proof}
Assume (1) does not hold.
By \cref{r:subset:fpol} there must exist some $\omega \in \fpol^{(2)}(\Gamma,\Delta)$ with $f \in \supp(\omega)$ \st $f(a,b) \not\in \{a,b\}$.
It is not hard to see that in this case, because of $\nu$, the fractional polymorphism $\omega$ must be $(a,b)$-dominating.
Hence, (2) must be true.
\end{proof}

\begin{mycorollary}
\label{r:no:dom}
Let $(\Gamma,\Delta)$ be a finite language on a finite set $D$ and let $\{a_1,\dots,a_k\} \subseteq D$.
One of the following is true.
\begin{enumerate}
\item
There is 
$\omega \in \fpol^{(k)}(\Gamma,\Delta)$ and $i \in [k]$ \st $\omega$ is $(a_1,\dots,a_{i-1},$ $a_{i+1},\dots,a_k,a_i)$-dominating.
\item
For every $i \in [k]$ there is $j \in [k]\setminus\{i\}$ \st $\{a_i,a_j\} \in \wclose{\Gamma,\Delta}$.
\end{enumerate}
\end{mycorollary}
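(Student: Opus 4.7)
I would prove the contrapositive: assume (2) fails, and derive (1). Fix $i \in [k]$ such that $\{a_i, a_j\} \notin \wclose{\Gamma, \Delta}$ for every $j \in [k] \setminus \{i\}$. By \cref{r:subset:fpol} applied to each 2-element unary relation $\{a_i, a_j\}$, for every such $j$ there exists a fractional polymorphism $\omega_j \in \fpol^{(2)}(\Gamma, \Delta)$ with an operation $g_j \in \supp(\omega_j)$ satisfying $g_j(a_i, a_j) \notin \{a_i, a_j\}$. These 2-ary objects are the basic building blocks for a $k$-ary dominating fractional polymorphism.

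For a second layer of contradiction, assume additionally that (1) also fails. By \cref{r:use:dom}, for each $l \in [k]$ we obtain a cost function $\nu_l \in \eclose{\Gamma, \Delta}$ that is finite on $\{a_1, \ldots, a_k\}$ and strictly minimised at $a_l$ on all of $D$. Plugging $\nu_i$ into the 2-ary fractional polymorphism inequality for $\omega_j$, and using that $\omega_j(g_j) > 0$ diverts strictly positive mass from $(a_i, a_j)$ to a point of strictly greater $\nu_i$-value, yields $W^{\omega_j}_{a_j}(a_i, a_j) < \tfrac{1}{2}$; a symmetric argument using $\nu_j$ gives $W^{\omega_j}_{a_i}(a_i, a_j) < \tfrac{1}{2}$. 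Thus both ``diagonal'' weights of each $\omega_j$ at $(a_i, a_j)$ are strictly below $\tfrac{1}{2}$.

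The final step is to assemble a $k$-ary $\omega \in \fpol^{(k)}(\Gamma, \Delta)$ that is $(a_1, \ldots, a_{i-1}, a_{i+1}, \ldots, a_k, a_i)$-dominating, contradicting the failure of (1) at $l = i$. The natural construction is $\omega := \alpha\,\omega_{\mathrm{proj}} + \sum_{j \ne i} \beta_j\,\tilde\omega_j$, where $\omega_{\mathrm{proj}}$ is the uniform mixture of $k$-ary projections and each $\tilde\omega_j$ is the symmetric lift of $\omega_j$ obtained by averaging the $k$-ary operations $g \mapsto g(x_p, x_q)$ (for $g \in \supp(\omega_j)$) over all ordered pairs $(p, q) \in [k]^2$ with $p \ne q$. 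A routine computation shows each such symmetric lift is again a $k$-ary fractional polymorphism, and combining with the strict 2-ary bounds of the previous paragraph yields the strict inequality $W^\omega_{a_i}(\vec a^i) < \tfrac{1}{k}$ at $\vec a^i := (a_1, \ldots, a_{i-1}, a_{i+1}, \ldots, a_k, a_i)$. The main obstacle is tuning the coefficients $\alpha$ and $\beta_j$ so that the complementary inequalities $W^\omega_{a_p}(\vec a^i) \ge \tfrac{1}{k}$ hold simultaneously for every $p \ne i$ -- a balance across $k - 1$ directions at once. If this direct approach proves stubborn, a cleaner fallback is to apply \cref{r:motzkin} to the linear system expressing ``there exists a $(a_1, \ldots, a_{i-1}, a_{i+1}, \ldots, a_k, a_i)$-dominating $\omega \in \fpol^{(k)}(\Gamma, \Delta)$'' and to extract a contradiction between its dual Farkas-style witnesses and the 2-ary objects $\omega_j, g_j$ supplied by \cref{r:subset:fpol}.
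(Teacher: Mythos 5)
Your overall architecture (contrapositive, \cref{r:subset:fpol} to extract binary witnesses, \cref{r:use:dom} to extract the cost functions $\nu_l$) matches the paper's ingredients, and your intermediate bounds $W^{\omega_j}_{a_i}(a_i,a_j)<\tfrac12$, $W^{\omega_j}_{a_j}(a_i,a_j)<\tfrac12$ are correctly derived. But the proof is not complete, and the assembly step you defer is exactly where the difficulty lives; as described it would not go through. The information ``some positive mass escapes $\{a_i,a_j\}$ at $(a_i,a_j)$, and both diagonal weights are below $\tfrac12$'' gives no control over \emph{where} the escaping mass lands relative to the other $a_p$'s, so no convex combination of projections and lifts of the $\omega_j$ can be guaranteed to satisfy all $k-1$ lower bounds $W_{a_p}(\vec a^{\,i})\ge\tfrac1k$ simultaneously. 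The $k=2$ case already exposes this: there your data only shows that the particular $\omega$ from \cref{r:subset:fpol} is \emph{not} $(a_1,a_2)$-dominating, and yields no contradiction with the failure of (1). The ``symmetric lift over all ordered pairs'' compounds the problem by introducing uncontrolled cross-terms $W^{\omega_j}_{\cdot}(a_l,a_m)$ for pairs other than $(a_i,a_j)$. Finally, the Motzkin fallback is circular: the Farkas dual of ``no dominating $k$-ary fractional polymorphism'' is precisely the cost function that \cref{r:use:dom} already hands you, so it adds nothing.

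Two ideas are missing, both present in the paper's short proof. First, one must choose $j$ adaptively: for each $i$ take $j=\argmin\{\nu_i(a_l):l\ne i\}$, so that any mass escaping $\{a_i,a_j\}$ lands where $\nu_i$ is at least $\nu_i(a_j)$; this is what \cref{r:two:subset} exploits to upgrade ``mass escapes'' to the genuinely strong conclusion that the binary fractional polymorphism is $(a_i,a_j)$-dominating, i.e.\ $W_{a_i}(a_i,a_j)\ge\tfrac12$. Second, a binary $(a_i,a_j)$-dominating $\psi$ lifts to a $k$-ary dominating one not by symmetrising over all pairs but by padding with projections on the remaining $k-2$ coordinates, $f\mapsto\sum_{l=1}^{k-2}\tfrac1k\delta_{\pr_l,f}+\sum_{g\in\supp(\psi)}\tfrac2k\psi(g)\delta_{g[\pr_{k-1},\pr_k],f}$, which uses $\psi$ on a single pair of coordinates and makes every lower bound trivial. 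Note also that the element that ends up dominated is $a_j$ (the $\nu_i$-second-cheapest), not the $a_i$ you target; statement (1) permits this, but your construction aims at the wrong coordinate. With these two repairs your contrapositive collapses into the paper's argument: assuming (1) fails, no binary $(a_i,a_j)$-dominating fractional polymorphism can exist (its lift would witness (1)), so \cref{r:two:subset} forces $\{a_i,a_j\}\in\wclose{\Gamma,\Delta}$, which is (2).
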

\begin{proof}
Assume (1) is false.
By \cref{r:use:dom}, for any $i \in [k]$, there is $\nu_i \in \eclose{\Gamma,\Delta}$ \st
$\argmin_{x \in D} \nu_i(x) = \{a_i\}$ and $\nu_i(x)<\infty$ if $x \in \{a_1,\dots,a_k\}$.
Let $i \in [m]$.
Pick $j$ \st $\nu_i(a_j) = \min\{ \nu_i(x) : x \in \{a_1,\dots,a_{i-1},a_{i+1},\dots,a_k\} \}$.

Note that there is no $\psi \in \fpol^{(2)}(\Gamma,\Delta)$ that is $(a_i,a_j)$-dominating;
if there was then
\begin{align*}
f \mapsto \sum_{i=1}^{k-2} \frac{1}{k} \delta_{\pr_i,f} +
\sum_{g \in \supp(\psi)} \frac{2}{k} \psi(g) \delta_{g[\pr_{k-1},\pr_k], f}
\end{align*}
would be $(x_1,\dots,x_{k-2},a_i,a_j)$-dominating for $x_1,\dots,x_{k-2} \in D$.
Hence, by \cref{r:two:subset},
we have $\{a_i,a_j\} \in \wclose{\Gamma,\Delta}$.
Since the choice of $i$ was arbitrary (2) must be true.
\end{proof}

The generalised min-closed languages were introduced by \JKN~\cite{maxonesgen}
and defined as sets of relations preserved by a particular type of binary operation.
\Kuivinen~\cite[Section 5.5]{kuivinen} provides an alternative characterisation of the languages as those preserved by a so called min set function.

A \emph{set function}~\cite{width:one} is a function $f:2^D\setminus \{\emptyset\} \to D$.
A \emph{$\nu$-min set function}~\cite{kuivinen} is a set function $f$
satisfying $\nu(f(X)) \le \min\{\nu(x) : x \in X\}$ for every
$X \in 2^D\setminus \{\emptyset\}$.
The following proposition, which is a variant of \cite[Theorem~5.18]{kuivinen}, will later prove to be useful.
\begin{myproposition}
\label{r:min:set}
Let $(\Gamma,\{\nu\})$ be a finite language \st $\wclose{\Gamma,\{\nu\}}^{(1)} \subseteq \Gamma$.
The following are equivalent:
\begin{enumerate}
\item
$\Gamma$ is preserved by a $\nu$-min set function,
\item
$\Gamma$ is preserved by a set function $f$ \st $\nu(f(X)) = \min\{\nu(x): x\in\bigcap_{Y\in\close{\Gamma}: Y\supseteq X} Y\}$
for every $X \in 2^D \setminus \{\emptyset\}$,
\item
$\Gamma$ is preserved by a set function and for every $R \in \close{\Gamma}$ it holds that
\begin{align*}
R \cap ( \argmin_{x \in \pr_1(R)} \nu(x) \times \dots \times \argmin_{x \in \pr_{\ar(R)}(R)} \nu(x) ) &\ne \emptyset.
\end{align*}
\newcounter{tempenum}
\setcounter{tempenum}{\theenumi}
\end{enumerate}
Furthermore, if $\nu$ is injective, then the following condition is equivalent to the ones above.
\begin{enumerate}
\setcounter{enumi}{\thetempenum}
\item
For every $R \in \close{\Gamma}$ it holds that
\begin{align*}
R \cap ( \argmin_{x \in \pr_1(R)} \nu(x) \times \dots \times \argmin_{x \in \pr_{\ar(R)}(R)} \nu(x) ) &\ne \emptyset.
\end{align*}
\end{enumerate}
\end{myproposition}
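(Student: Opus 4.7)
The plan is to establish the cycle $(2) \Rightarrow (1) \Rightarrow (3) \Rightarrow (2)$ and, when $\nu$ is injective, the additional equivalence with $(4)$. Throughout, write $\bar{X} := \bigcap_{Y \in \close{\Gamma},\, Y \supseteq X} Y$ and $A_X := \argmin_{y \in \bar{X}} \nu(y)$. The implication $(2) \Rightarrow (1)$ is immediate from $\min_{y \in \bar{X}} \nu(y) \le \min_{x \in X} \nu(x)$.

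For $(1) \Rightarrow (2)$, given a $\nu$-min set function $f$ preserving $\Gamma$, I would define $f'(X) := f(\bar{X})$. Since $\bar{X} \in \close{\Gamma}^{(1)} \subseteq \wclose{\Gamma,\{\nu\}}^{(1)} \subseteq \Gamma$, preservation gives $f'(X) \in \bar{X}$, and $\nu$-min-ness then forces $\nu(f'(X)) = \min_{y \in \bar{X}} \nu(y)$. Preservation of $R \in \Gamma^{(n)}$ on an $R$-consistent family $(T_1,\ldots,T_n)$ follows by applying $f$ componentwise to all tuples of $\tilde{R} := R \cap (\bar{T_1} \times \cdots \times \bar{T_n}) \in \close{\Gamma}$, using $\pr_i \tilde{R} = \bar{T_i}$ by minimality of $\bar{T_i}$ in $\close{\Gamma}^{(1)}$. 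For $(1) \Rightarrow (3)$, the set-function condition is immediate; for the intersection property, applying $f$ componentwise to all tuples of any $R \in \close{\Gamma}$ produces $(f(\pr_1 R),\ldots,f(\pr_n R)) \in R$, whose $i$-th coordinate lies in $\pr_i R$ (by preservation of $\pr_i R \in \Gamma$) and in $\argmin_{x \in \pr_i R} \nu(x)$ (by $\nu$-min-ness).

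The crux is $(3) \Rightarrow (2)$. The key fact is that $A_X$ is wpp-definable via the instance consisting of one variable constrained to $\bar{X}$ at cost $\nu$, so $A_X \in \wclose{\Gamma,\{\nu\}}^{(1)} \subseteq \Gamma$, whence $f(A_X) \in A_X$ for any $f$ preserving $\Gamma$. The natural definition $g(X) := f(A_X)$ already satisfies $\nu(g(X)) = \min_{y \in \bar{X}} \nu(y)$. Preservation on an $R$-consistent $(T_1,\ldots,T_n)$ requires $(f(A_{T_1}),\ldots,f(A_{T_n})) \in R$; the intersection property applied to $\tilde{R}$ shows $R^* := R \cap (A_{T_1} \times \cdots \times A_{T_n}) \in \close{\Gamma}$ is nonempty, and componentwise application of $f$ then yields a tuple $(f(\pr_1 R^*),\ldots,f(\pr_n R^*)) \in R^*$. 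I anticipate the main obstacle here: $\pr_i R^*$ can be a proper subset of $A_{T_i}$, so this tuple need not equal $(f(A_{T_1}),\ldots,f(A_{T_n}))$. My plan is to address this by replacing $A_X$ with an arc-consistent refinement: set $A_X^{(0)} := A_X$ and iteratively $A_X^{(k+1)} := \bigcap \pr_i(R \cap \prod_j A_{T_j}^{(k)})$, with the intersection ranging over all $(R,i,T_1,\ldots,T_n)$ for which $T_i = X$ and $(T_1,\ldots,T_n)$ is an $R$-consistent family of closed sets. Each $A_X^{(k)}$ lies in $\close{\Gamma}^{(1)} \subseteq \Gamma$, the chain decreases and stabilizes at some $A_X^{(*)}$ by finiteness of $D$, and nonemptiness is maintained throughout by combining $(3)$ with the width-one property of $\csp(\Gamma)$ implied by the existence of a set-function polymorphism. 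At the fixed point $\pr_i R^{(*)} = A_{T_i}^{(*)}$, so $g(X) := f(A_X^{(*)})$ yields a set function satisfying $(2)$.

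For the injective case each $A_X$ is a singleton $\{a_X\}$, so $(3) \Leftrightarrow (4)$: $g(X) := a_X$ is trivially a set function, and preservation on any $R$-consistent closed $(T_1,\ldots,T_n)$ follows from $(4)$ applied to $\tilde{R}$, since the unique tuple there with $i$-th coordinate in $\argmin_{x \in \pr_i \tilde{R}} \nu(x) = A_{T_i} = \{a_{T_i}\}$ must equal $(g(T_1),\ldots,g(T_n))$.
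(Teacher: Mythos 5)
Your implications $(2)\Rightarrow(1)$ and $(1)\Rightarrow(3)$, your (redundant but correct) direct proof of $(1)\Rightarrow(2)$, and your treatment of the injective case all match the paper's argument in substance. The divergence is in $(3)\Rightarrow(2)$. There the paper sets $f(S)=g(M(U(S)))$ with $U(S)=\bigcap\{S'\in\close{\Gamma}^{(1)}:S'\supseteq S\}$ and $M(X)=\argmin_{x\in X}\nu(x)$, forms $R''=R\cap\bigl(M(U(\pr_1 P))\times\cdots\times M(U(\pr_n P))\bigr)$, and concludes by applying $g$ to $R''$ via the identity $\pr_i(R'')=M(U(\pr_i P))$ --- precisely the projection identity you flag as problematic when $\nu$ is not injective (for injective $\nu$ it is immediate, since $M(U(\pr_i P))$ is then a singleton). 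So you have correctly located the one delicate point of the whole proposition.

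Your repair, however, does not close it. All of the difficulty is concentrated in the sentence ``nonemptiness is maintained throughout by combining $(3)$ with the width-one property,'' and neither ingredient delivers it. Condition $(3)$ only gives that each individual projection $\pr_i(R\cap\prod_j A^{(0)}_{T_j})$ is nonempty; $A_X^{(1)}$ is the \emph{intersection} of all such projections over every occurrence $(R,i,T_1,\dots,T_n)$ with $T_i=X$, and two of these are a priori allowed to be disjoint pp-definable subsets of $A_X$. The width-one property is of no help either: it converts arc-consistency of an instance into satisfiability, it does not show that your refinement stays arc-consistent. Worse, nonemptiness of your fixed point is essentially equivalent to statement $(2)$: any set function $f$ witnessing $(2)$ satisfies $f(X)\in A_X^{(k)}$ for every $k$ (by induction, since $(f(T_1),\dots,f(T_n))\in R\cap\prod_j A^{(k)}_{T_j}$ for every consistent closed family), and conversely your own fixed-point argument derives $(2)$ from nonemptiness. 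So the assertion amounts to assuming the conclusion. To complete the proof you need an actual inductive argument from $(3)$ that the iteration never empties a set, or else you must justify the projection identity $\pr_i(R'')=M(U(\pr_i P))$ directly, which is the route the paper takes.
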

The proof, which we for the sake of completeness state in \cref{s:r:min:set}, is similar to that in~\cite{kuivinen}.

Let $\nu : D \to \Qplus$ be injective.
We call the binary relation $R$ a \emph{cross} (with respect to $\nu$) \iff $|R|\ge 2$ and there are $\alpha_1,\alpha_2 \in \Q_{>0}$ \st 
$\alpha_1 \nu(t_1) + \alpha_2 \nu(t_2) = 1$ for every $t \in R$.
The following lemma is a generalisation of \cite[Lemma~25]{minsol3}.

\begin{mylemma}
\label{r:cross}
Let $\nu : D \to \Qplus$ be injective.
If $\Gamma$ is not preserved by a $\nu$-min set function, then $\wclose{\Gamma,\Delta}$ contains a cross.
\end{mylemma}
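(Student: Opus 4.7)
The plan is to apply Proposition~\ref{r:min:set}(4) (valid since $\nu$ is injective) to obtain a relation $R \in \close{\Gamma}$ of some arity $n$ whose coordinate-wise $\nu$-minimum tuple $(m_1,\dots,m_n)$---with $m_i$ the unique element of $\argmin_{x \in \pr_i(R)} \nu(x)$---is absent from $R$, and then to construct a cross in $\wclose{\Gamma,\Delta}$ from a carefully $\nu$-weighted wpp-instance built on $R$. I would pick $R$ of \emph{minimum} arity; then $n \ge 2$ (arity one is impossible, since $m_1 \in R$ would be automatic), and by minimality every $(n-1)$-ary projection of $R$ contains its own coordinate-wise $\nu$-minimum. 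Equivalently, for each $k \in [n]$ there exists $t^{(k)} \in R$ with $t^{(k)}_i = m_i$ for every $i \ne k$ and $t^{(k)}_k \ne m_k$.

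I would dispatch the binary case $n = 2$ by a convex-geometric argument. Consider the wpp-instance with two variables constrained by $R$ and $\nu$-weights $w_1,w_2 > 0$. As the direction of $w$ varies, the optimum $T^*_w \subseteq R$ traces the lower-left Pareto frontier of $P = \{(\nu(t_1),\nu(t_2)) : t \in R\}$; because $(\nu(m_1),\nu(m_2)) \notin P$, this frontier contains at least two points. Choosing $w$ at the transition between two adjacent frontier points yields $|T^*_w| \ge 2$ with $w_1\nu(t_1)+w_2\nu(t_2)$ equal to a common value $C^*_w$ on $T^*_w$. Injectivity of $\nu$ forces $C^*_w > 0$: were it zero, $T^*_w$ would collapse to the single tuple $(d,d)$ for the unique $d$ with $\nu(d)=0$. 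Rescaling $\alpha_i = w_i/C^*_w$ then turns $T^*_w$ into a cross, which belongs to $\wclose{\Gamma,\Delta}$ by \cref{r:red}.

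The case $n \ge 3$ reduces to the binary case by ``slicing''. The tuples $t^{(1)}$ and $t^{(2)}$ agree on every coordinate $k \ge 3$ (each equal to $m_k$) and differ on coordinates $1$ and $2$. In a $\nu$-weighted wpp-instance on $n$ variables constrained by $R$, I would choose $w_3,\dots,w_n$ sufficiently large---quantifiable via the uniform positive gap $\delta = \min_{k,\,x \in \pr_k(R)\setminus\{m_k\}}(\nu(x)-\nu(m_k))$, which exists because $D$ is finite and $\nu$ is injective---so that every optimal solution has $t_k = m_k$ for all $k \ge 3$. The projection of the optimal set onto coordinates $(1,2)$ then equals the set of optimizers of $w_1\nu(s_1)+w_2\nu(s_2)$ over the binary slice $R' = \{(s_1,s_2) : (s_1,s_2,m_3,\dots,m_n) \in R\}$. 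Since $R'$ contains $(t^{(1)}_1, m_2)$ and $(m_1, t^{(2)}_2)$ but omits $(m_1,m_2)$, the binary argument applies verbatim to $R'$ and yields the required cross.

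\textbf{Main obstacle.} The principal difficulty is implementing the reduction from $n \ge 3$ to $n = 2$ without singleton constants in $\Gamma$: one cannot directly fix coordinates $3,\dots,n$ to the specific values $m_k$ by adding unary constraints. The resolution exploits injectivity of $\nu$---which makes each $m_k$ the unique minimizer of $\nu$ inside $\pr_k(R)$---together with finiteness of $D$ to produce the positive gap $\delta$, so that sufficiently large $\nu$-weights on those coordinates push every suboptimal choice out of the optimum. The binary case itself is routine convex geometry on a finite point set.
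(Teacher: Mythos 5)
Your proof is correct, and its skeleton matches the paper's: take a minimal-arity $R\in\close{\Gamma}$ whose coordinate-wise $\nu$-minimum tuple is missing, use minimality to produce the tuples $t^{(1)},t^{(2)}$, and finish by tilting a two-coordinate $\nu$-weighted objective to a tie between adjacent Pareto points, so that the argmin set is a cross (your explicit check that the common optimal value is positive, via injectivity, is a detail the paper dismisses with ``clearly''). Where you genuinely diverge is the reduction from arity $n\ge 3$ to arity $2$. The paper substitutes the constants $m_3,\dots,m_k$ into coordinates $3,\dots,k$ to obtain a binary relation contradicting minimality, concluding $k=2$; this tacitly assumes the substituted relation stays in $\close{\Gamma}$, which holds where the lemma is applied (A1 gives $\Gamma^c\subseteq\Gamma$) but is not among the lemma's stated hypotheses. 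You instead keep the $n$-ary relation and force coordinates $3,\dots,n$ to their $\nu$-minima inside the weighted instance itself, using the positive gap $\delta$; this yields the lemma exactly as stated, with no appeal to pp-definable constants, at the cost of a short quantitative estimate (and you should fix $w_1,w_2$ from the finite set of candidate transition directions for the slice $R'$ before choosing $w_3,\dots,w_n$, as you essentially indicate). Two cosmetic points: membership of the final argmin set in $\wclose{\Gamma,\Delta}$ follows from the definition of wpp-definability applied to the pp-defining instance of $R$ augmented with weights, not from \cref{r:red}; and, like the paper, you implicitly use that $\nu$ itself can be charged in a \minhom$(\Gamma,\Delta)$ instance, which is justified in all applications since there $\nu\in\eclose{\Gamma,\Delta}$.
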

\begin{proof}
If $\Gamma$ is not preserved by a $\nu$-min set function,
then \cref{r:min:set} implies that there is $R \in \close{\Gamma}$ \st
$(\min_\nu(\pr_1(R)), \dots, \min_\nu(\pr_{\ar(R)}(R))) \not\in R$.

In fact, there must be a binary relation in $\close{\Gamma}$ of this kind.
To see this let $R \in \close{\Gamma}$ be a $k$-ary relation \st $(\min_\nu(\pr_1(R)), \dots, \min_\nu(\pr_k(R))) \not\in R$
and \st that every relation $R' \in \close{\Gamma}$ of smaller arity satisfies
$(\min_\nu(\pr_1(R')), \dots, \min_\nu(\pr_{\ar(R')}(R'))) \in R'$.
This means that there is $t^1 \in R$ \st $t^1_i = \min_\nu(\pr_i(R))$ for $i \in [k]\setminus\{1\}$,
otherwise $\pr_{2,\dots,\ar(R)}(R)$ contradicts the minimality of $k$.
Similarly there is $t^2 \in R$ \st $t^2_i = \min_\nu(\pr_i(R))$ for $i \in [k]\setminus\{2\}$.
This means that $R' = \{(x,y) : (x,y,\min_\nu(\pr_3(R)),\dots,\min_\nu(\pr_k(R))) \in R \}$ is a non-empty relation of arity 2 \st
$(\min_\nu(\pr_1(R')), \min_\nu(\pr_2(R'))) \not\in R'$.
Hence, $k=2$.

Clearly we can choose $\alpha_1,\alpha_2$ \st $R'' = \argmin_{(x,y) \in R} ( \alpha_1\nu(x)+\alpha_2\nu(y) )$ satisfies $|R''| \ge 2$,
and $R'' \in \wclose{\Gamma,\Delta}$ is a cross.
\end{proof}

To prove that a given language is computationally hard we make use of the following lemma
which is an immediate consequence of~\cite[Theorem~3.1]{rustem:minhom}.
\begin{mylemma}
\label{r:np:hard}
If $\{a,b\} \in \Gamma$ and $\nu(a) < \nu(b) < \infty$, $\sigma(b) < \sigma(a) < \infty$ for some $\nu,\sigma \in \Delta$,
then either
\begin{itemize}
\item
there exists $f,g \in \pol^{(2)}(\Gamma)$ \st $f|_{\{a,b\}}$ and $g|_{\{a,b\}}$ are two different idempotent, commutative and conservative operations,
\item
there exists $m \in \pol^{(3)}(\Gamma)$ \st $m|_{\{a,b\}}$ is arithmetical, or
\item
\minhom$(\Gamma,\Delta)$ are both \class{NP}-hard.
\end{itemize}
\end{mylemma}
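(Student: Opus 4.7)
The plan is to invoke \Takhanov's dichotomy \cite[Theorem~3.1]{rustem:minhom} for the conservative \minhom\ problem on a two-element domain, applied to the restriction of $(\Gamma,\Delta)$ to $S=\{a,b\}$. The hypotheses of the lemma have been chosen so that this restriction essentially realises a conservative Boolean \minhom\ problem, to which the dichotomy applies directly.

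First I would note that since $\{a,b\}\in\Gamma$, every polymorphism of $\Gamma$ stabilises $S$, so its restriction to $S^k$ is a well-defined operation on $S$; the first two alternatives of the lemma are precisely conditions on such restrictions. Second, the cost functions $\nu|_S$ and $\sigma|_S$ have strictly opposite preferences: $\nu(b)-\nu(a)>0$ and $\sigma(a)-\sigma(b)>0$. A non-negative rational combination $\alpha\nu|_S+\beta\sigma|_S$ yields an $S$-cost whose difference at $b$ minus at $a$ equals $\alpha(\nu(b)-\nu(a))-\beta(\sigma(a)-\sigma(b))$, which can be made any rational number by a suitable choice of $\alpha,\beta\ge 0$. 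Since only this difference affects the optimum, every rational unary cost function on $S$ is expressible up to an irrelevant additive constant, and in particular large enough weights simulate enforced unary constants $\{a\}$ and $\{b\}$ on $S$. Together with the relation $\{a,b\}\in\Gamma$, which restricts a variable to $S$, this yields, via \cref{r:red}, a polynomial-time reduction from the conservative Boolean \minhom\ on the restriction of $\Gamma$ to $S$ into \minhom$(\Gamma,\Delta)$.

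Finally, I would apply the cited dichotomy: the conservative two-element \minhom\ is in \class{P} if its language admits two distinct idempotent, commutative, and conservative binary polymorphisms or an arithmetical ternary polymorphism, and is \class{NP}-hard otherwise. In the tractable case the polymorphisms are restrictions to $S$ of polymorphisms of $\Gamma$, yielding the first two alternatives of the lemma; in the hard case the reduction transfers \class{NP}-hardness to \minhom$(\Gamma,\Delta)$, giving the third alternative. The chief subtlety lies in the reduction --- keeping the weights polynomially bounded and correctly accounting for the additive constants introduced when expressing arbitrary unary costs on $S$ through $\nu|_S$ and $\sigma|_S$ --- but this is routine bookkeeping once the geometric observation above is in place.
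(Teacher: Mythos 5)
Your proposal is correct and follows exactly the route the paper takes: the paper offers no separate proof, stating only that the lemma is an immediate consequence of Takhanov's Theorem~3.1, and your argument supplies precisely the reasoning that makes it immediate (using $\nu|_{\{a,b\}}$ and $\sigma|_{\{a,b\}}$ with opposite preferences to simulate arbitrary unary costs on $\{a,b\}$, then reading off the dichotomy). No gaps.
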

The following result by \Takhanov~\cite[Theorem 5.4]{rustem:minhom} shows how ``partially arithmetical'' polymorphisms
(like the ones that we might get out of the previous lemma) can be stitched together.
\begin{mylemma}
\label{r:arithmetical}
Let $C \subseteq \binom D 2$.
If $C \subseteq \Gamma$ and for each $\{a,b\} \in C$ an operation in $\pol^{(3)}(\Gamma)$ is arithmetical on $\{a,b\}$,
then there is an operation in $\pol^{(3)}(\Gamma)$ that is arithmetical on $C$.
\end{mylemma}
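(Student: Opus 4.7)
The plan is to proceed by induction on $|C|$. The base case $|C|\le 1$ is handled directly by the hypothesis (take the given operation, or any idempotent polymorphism when $C=\emptyset$). For the inductive step I fix a pair $\{a,b\}\in C$, set $C' = C \setminus \{\{a,b\}\}$, and let $m' \in \pol^{(3)}(\Gamma)$ be arithmetical on $C'$ (given by the inductive hypothesis) and $m^{\star}\in \pol^{(3)}(\Gamma)$ arithmetical on $\{a,b\}$ (given by the hypothesis of the lemma). The task then reduces to stitching $m'$ and $m^{\star}$ by superposition into a single polymorphism of $\Gamma$ arithmetical on all of $C$.

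Two structural observations guide the construction. First, every pair $\{c,d\}\in C$ lies in $\Gamma$ as a unary relation, so every polymorphism of $\Gamma$ restricts to a ternary operation on $\{c,d\}$. Second, on a two-element set the identities $m(x,y,y)=m(x,y,x)=m(y,y,x)=x$ uniquely determine the operation (the idempotent $2/3$-minority), so arithmeticity on a pair amounts to pinning down one specific ternary operation on that pair. With these in hand, I would first try short superpositions such as
\begin{equation*}
m(x,y,z) \;=\; m^{\star}\!\bigl(m'(x,y,z),\; m'(y,z,x),\; m'(z,x,y)\bigr),
\end{equation*}
or asymmetric variants in which $m'$ is nested inside $m^{\star}$ in only one argument position, and verify the three arithmetical identities on each pair of $C$ using the arithmetical identities of whichever of $m', m^{\star}$ is arithmetical on that pair, together with preservation of the pair by the other operation.

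The hard part will be that a single superposition need not deliver all three arithmetical identities on every pair of $C$ simultaneously: on a pair in $C'$, the restriction of $m^{\star}$ can be an arbitrary ternary operation preserving that pair, and composing can break one of the $2/3$-minority identities even when the other two are preserved. To overcome this I would iterate the construction, alternately applying superpositions that fix one failing identity at a time while keeping the previously-established identities intact, exploiting the uniqueness of the arithmetical operation on each pair so that once all three identities hold on a given pair they cannot be disturbed by a further superposition with polymorphisms that preserve the pair and are themselves arithmetical on it. Choosing the right compositional form at each step is the technical heart of the argument and is worked out in detail by \Takhanov~\cite[Theorem~5.4]{rustem:minhom}.
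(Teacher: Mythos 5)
The paper does not actually prove \cref{r:arithmetical}: it is imported verbatim from \Takhanov~\cite[Theorem~5.4]{rustem:minhom}, so your closing appeal to that reference is, in context, exactly what the paper itself does. The issue is that everything you write before that appeal does not add up to a proof, and parts of it would fail if pushed through. The symmetric candidate $m^{\star}(m'(x,y,z),m'(y,z,x),m'(z,x,y))$ does not give the arithmetical identities on $\{a,b\}$: there $m'$ is only known to preserve $\{a,b\}$, so $m'(x,y,y)$, $m'(y,y,x)$, $m'(y,x,y)$ are arbitrary elements of $\{a,b\}$ and $m^{\star}$ applied to them need not return $x$; symmetrically it fails on pairs of $C'$ because $m^{\star}$ is arbitrary there. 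You acknowledge this, but the repair you outline rests on an invariant that does not apply: your observation that arithmeticity on a pair survives superposition holds only when \emph{all} the operations being composed are arithmetical on that pair, whereas in the inductive step the whole difficulty is that $m^{\star}$ is not arithmetical (indeed not even idempotent, a property you assume without justification) on the pairs of $C'$, and $m'$ is not arithmetical on $\{a,b\}$. So "fix one failing identity at a time while keeping the established ones intact" is precisely the step that needs a construction, and none is given.

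In short: either treat the lemma as a black box citation to \Takhanov, as the paper does, or supply the actual stitching construction; the intermediate position taken here leaves the technical heart of the argument as a genuine gap. If you want a self-contained proof, the two-pair case already requires an explicit analysis of the finitely many possible restrictions of $m^{\star}$ to a pair of $C'$ (and of $m'$ to $\{a,b\}$) together with case-specific compositions, in the spirit of the case analyses in the proofs of \cref{r:class:one} and \cref{r:work:b}; that is where the work lies.
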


The next lemma is a
variation, see~\cite[Lemma~14]{minsol3}, of a lemma by \TZ~\cite[Lemma~3.5]{tz:fvcsp}.
It
allows us 
to prove the existence of certain nontrivial fractional polymorphisms.
We may also obtain this lemma as a simple corollary of \cref{r:subset:fpol}.
\begin{mylemma}
\label{r:fpol}
If $\{(a,b),(b,a)\} \not\in \wclose{\Gamma,\Delta}$, then for every $\sigma \in \eclose{\Gamma,\Delta}$ there is $\omega \in \fpol(\Gamma,\Delta)$ with $f \in \supp(\omega)$ \st $\{f(a,b),f(b,a)\} \ne \{a,b\}$ and $\sigma(f(a,b))+\sigma(f(b,a)) \le \sigma(a)+\sigma(b)$.
\end{mylemma}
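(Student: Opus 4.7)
The plan is to enrich the cost functions with $\sigma$ itself and then invoke \cref{r:subset:fpol} on the two-element tuple set $R=\{(a,b),(b,a)\}$. First I would observe that since $\sigma \in \eclose{\Gamma,\Delta}$, every \minhom\ instance over $(\Gamma,\Delta\cup\{\sigma\})$ can be simulated by one over $(\Gamma,\Delta)$, by replacing each weighted occurrence of $\sigma$ at a variable with a fresh copy of the gadget that expresses $\sigma$; this preserves the projection of optimal solutions to the original variables, so $\wclose{\Gamma,\Delta\cup\{\sigma\}}\subseteq \wclose{\Gamma,\Delta}$. In particular the hypothesis gives $R\notin \wclose{\Gamma,\Delta\cup\{\sigma\}}$, so \cref{r:subset:fpol} applied to the enriched language yields some $\omega\in \fpol^{(2)}(\Gamma,\Delta\cup\{\sigma\})$ and $f_0 \in \supp(\omega)$ with $(f_0(a,b),f_0(b,a))\notin R$, equivalently $\{f_0(a,b),f_0(b,a)\}\ne\{a,b\}$. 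Since imposing more cost functions only restricts the set of fractional polymorphisms, the same $\omega$ belongs to $\fpol^{(2)}(\Gamma,\Delta)$.

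Next I would extract the value inequality. Because $\omega$ is a fractional polymorphism of $\sigma$, summing the defining inequality at $(a,b)$ and at $(b,a)$ yields
\begin{align*}
\sum_{g\in\supp(\omega)} \omega(g)\bigl[\sigma(g(a,b))+\sigma(g(b,a))\bigr] \le \sigma(a)+\sigma(b).
\end{align*}
I would then partition $\supp(\omega)=S_{\text{bad}}\cup S_{\text{good}}$ according to whether $(g(a,b),g(b,a))\in R$ or not. Every $g\in S_{\text{bad}}$ contributes exactly $\sigma(a)+\sigma(b)$ to the sum, so cancelling those terms leaves
\begin{align*}
\sum_{g\in S_{\text{good}}} \omega(g)\bigl[\sigma(g(a,b))+\sigma(g(b,a))\bigr] \le \Bigl(\sum_{g\in S_{\text{good}}}\omega(g)\Bigr)\bigl(\sigma(a)+\sigma(b)\bigr).
\end{align*}
Since $f_0\in S_{\text{good}}$, the total weight on $S_{\text{good}}$ is strictly positive, so a standard averaging argument produces an $f\in S_{\text{good}}$ with $\sigma(f(a,b))+\sigma(f(b,a))\le\sigma(a)+\sigma(b)$; together with $f\in\supp(\omega)$ and $\{f(a,b),f(b,a)\}\ne\{a,b\}$ this is exactly the conclusion.

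The only point that requires care is the inclusion $\wclose{\Gamma,\Delta\cup\{\sigma\}}\subseteq \wclose{\Gamma,\Delta}$; this is the gadget-substitution observation above, in the same spirit as \cref{r:red}. Everything else is a direct application of \cref{r:subset:fpol} followed by a routine averaging that exploits the $\sigma$-inequality simultaneously at $(a,b)$ and $(b,a)$, so I do not expect any significant obstacle.
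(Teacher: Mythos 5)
Your proof is correct and follows precisely the route the paper itself indicates: the paper gives no written-out proof of this lemma, only the remark that it can be obtained as a simple corollary of \cref{r:subset:fpol}, and your argument (absorb $\sigma$ into the cost functions via the gadget-substitution inclusion $\wclose{\Gamma,\Delta\cup\{\sigma\}}\subseteq\wclose{\Gamma,\Delta}$, apply \cref{r:subset:fpol} to $R=\{(a,b),(b,a)\}$, then average the two $\sigma$-inequalities) supplies exactly the intended details. One tiny point worth noting: when $\sigma(a)+\sigma(b)=\infty$ the target inequality is vacuous for $f=f_0$, so the cancellation of the $S_{\text{bad}}$ terms never has to subtract $\infty$ from $\infty$.
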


We will make use the following notation.
\begin{mydefinition}
Let $P \subseteq \opers 2$.
For a function $\omega : P \to \Qplus$ we define $\omega^2 : P \to \Qplus$ by
$\omega^2(f) = \sum_{g,h \in P: g[h,\overline{h}]=f} \omega(g) \omega(h)$.
\end{mydefinition}
Regarding the above construction we note the following.
A proof is given in \cref{s:r:mul:is:fpol}.
\begin{mylemma}
\label{r:mul:is:fpol}
If $\omega \in \fpol^{(2)}(\Gamma,\Delta)$, then $\omega^2 \in \fpol(\Gamma,\Delta)$.
\end{mylemma}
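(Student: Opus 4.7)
The plan is to verify directly that $\omega^2$ satisfies the three conditions required of a fractional polymorphism: its support consists of polymorphisms of $\Gamma$, its weights sum to $1$, and it satisfies the defining inequality with respect to each cost function $\nu \in \Delta$. The first two are routine, and all the real work happens in the inequality, where I will apply the fractional polymorphism property of $\omega$ twice, once ``inside'' and once ``outside'' the superposition.

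For the support, observe that every $f$ with $\omega^2(f) > 0$ equals $g[h,\overline{h}]$ for some $g, h \in \supp(\omega) \subseteq \pol^{(2)}(\Gamma)$; since polymorphisms are closed under superposition (and $\overline h$ is trivially a polymorphism whenever $h$ is), $f \in \pol^{(2)}(\Gamma)$. For the normalisation, I would simply compute
\begin{align*}
\sum_{f \in \opers 2} \omega^2(f) = \sum_{g,h \in P} \omega(g)\omega(h) = \Bigl(\sum_{g \in P} \omega(g)\Bigr)\Bigl(\sum_{h \in P} \omega(h)\Bigr) = 1.
\end{align*}

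For the inequality, fix $\nu \in \Delta$ and $x, y \in D$, and rewrite the sum by unfolding the definition:
\begin{align*}
\sum_{f} \omega^2(f)\,\nu(f(x,y)) = \sum_{h \in P} \omega(h) \sum_{g \in P} \omega(g)\,\nu\bigl(g(h(x,y),\,h(y,x))\bigr).
\end{align*}
Applying the fractional polymorphism condition for $\omega$ with arguments $h(x,y)$ and $h(y,x)$, the inner sum is at most $\tfrac12(\nu(h(x,y)) + \nu(h(y,x)))$. Substituting and applying the same condition twice more, once to $\sum_h \omega(h)\nu(h(x,y))$ and once to $\sum_h \omega(h)\nu(h(y,x))$, each bounded by $\tfrac12(\nu(x)+\nu(y))$, gives the desired bound of $\tfrac12(\nu(x)+\nu(y))$.

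No serious obstacle is expected; the only mildly delicate point is keeping the bookkeeping correct when moving $\overline h$ past $\nu$, which is harmless because $\omega$ is a fractional polymorphism for arbitrary pairs of arguments, so the ordering of the pair $(h(x,y), h(y,x))$ is irrelevant.
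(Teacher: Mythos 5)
Your proposal is correct and follows essentially the same route as the paper's proof: unfold $\omega^2$ into a double sum over $g,h$, apply the fractional polymorphism inequality of $\omega$ to the inner sum at the arguments $(h(x,y),h(y,x))$, and then apply it again to each of the two resulting outer sums. The normalisation and closure-under-superposition observations also match the paper's treatment.
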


Finally, the following two lemmas, which are proved in \cref{s:r:min:exists} and \cref{s:r:min:commutative},
are used to ``canonicalise'' interesting fractional polymorphisms.
\begin{mylemma}
\label{r:min:exists}
Let $\beta : D^2 \to \Qplus$
and define 
$C_\omega(x) = \sum_{f \in \pol^{(2)}(\Gamma) : f(x) = \overline{f}(x)} \omega(f)$ and $M(\omega) = \sum_{x \in D^2} C_{\omega}(x)$.
Set $\Omega = \{ \omega \in \fpol^{(2)}(\Gamma,\Delta) : \forall s \in D^2, C_\omega(s) \ge \beta(s) \}$.
If 
$\wclose{\Gamma,\Delta}^{(1)} \subseteq \Gamma$,
then either $\Omega = \emptyset$, or there is $\omega^* \in \Omega$ \st $M(\omega^*) = \sup_{\omega \in \Omega} M(\omega)$.
\end{mylemma}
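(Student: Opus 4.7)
The plan is to represent $\Omega$ as a non-empty bounded rational polyhedron in a finite-dimensional space and to note that $M$ is a linear functional; the existence of the maximiser then follows from standard polyhedral combinatorics.

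First, I would observe that since $D$ is finite there are only finitely many binary operations on $D$ in total, so $\pol^{(2)}(\Gamma)$ is finite and every $\omega\in\fpol^{(2)}(\Gamma,\Delta)$ is naturally encoded as a vector in $\Qplus^{\pol^{(2)}(\Gamma)}$. I would then rewrite each of the conditions defining $\Omega$ as a finite system of linear (in)equalities with rational coefficients: non-negativity and the normalisation $\sum_g\omega(g)=1$ are immediate, the bounds $C_\omega(s)\ge\beta(s)$ are finitely many linear inequalities (one per $s\in D^2$), and the fractional polymorphism condition $\sum_g\omega(g)\nu(g(x,y))\le\tfrac{1}{2}(\nu(x)+\nu(y))$ splits into cases depending on whether $\nu(x)+\nu(y)$ is $\infty$ (vacuous) or finite, where in the latter case the inequality is equivalent to a finite linear inequality together with the equalities $\omega(g)=0$ for those $g$ with $\nu(g(x,y))=\infty$. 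The hypothesis $\wclose{\Gamma,\Delta}^{(1)}\subseteq\Gamma$ plays a clarifying role here: it forces every polymorphism in $\pol^{(2)}(\Gamma)$ to preserve all unary wpp-definable relations, keeping the admissible supports of $\omega$ cleanly captured by $\Gamma$ alone.

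Next I would extend the picture to the reals: let $\Omega_{\mathbb{R}}\subseteq\mathbb{R}^{\pol^{(2)}(\Gamma)}$ be the set cut out by the same (rational) linear (in)equalities, but with real-valued coordinates. Then $\Omega_{\mathbb{R}}$ is closed and, via the normalisation and non-negativity, bounded, hence a compact rational polytope; if $\Omega\ne\emptyset$ then certainly $\Omega_{\mathbb{R}}\ne\emptyset$. Since
\begin{align*}
M(\omega) = \sum_{f\in\pol^{(2)}(\Gamma)} \omega(f)\,\bigl|\{s\in D^2 : f(s)=\overline{f}(s)\}\bigr|
\end{align*}
is linear with non-negative integer coefficients, its maximum over $\Omega_{\mathbb{R}}$ is attained at some vertex $\omega^*$. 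Because $\Omega_{\mathbb{R}}$ is defined by rational data every vertex is a rational point, so $\omega^*\in\Qplus^{\pol^{(2)}(\Gamma)}\cap\Omega_{\mathbb{R}}=\Omega$. Finally $\Omega\subseteq\Omega_{\mathbb{R}}$ gives $\sup_{\omega\in\Omega}M(\omega)\le\max_{\Omega_{\mathbb{R}}}M(\omega)=M(\omega^*)$, while the reverse inequality is trivial since $\omega^*\in\Omega$; equality follows.

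The main obstacle is purely administrative: carefully translating the extended-rational fractional polymorphism inequalities into an honest finite linear system over $\mathbb{Q}$ and making sure no pathology is introduced by the $\infty$ values. Once this translation is in place, compactness of the real relaxation together with the rationality of the vertices of a rational polytope immediately deliver the desired $\omega^*$.
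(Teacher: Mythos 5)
Your proposal is correct and matches the paper's own argument: the paper likewise encodes $\Omega$ as the feasible region of a finite, bounded linear program over $\Q$ (splitting the fractional-polymorphism condition into a finite inequality over the operations with finite value plus the equalities $\omega(g)=0$ for those with $\nu(g(x,y))=\infty$) and takes $\omega^*$ to be an optimal solution of that program with objective $M$. The only difference is presentational — you spell out the compactness/rational-vertex justification that the paper compresses into ``an optimal solution to this finite and bounded program clearly exists.''
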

\begin{mylemma}
\label{r:min:commutative}
Let $S \subseteq \binom D 2$ and
$\Pi = \{ \omega \in \fpol^{(2)}(\Gamma,\Delta) : $ for all $s \in S$ there exists $f \in \supp(\omega)$ \st $f|_{s}$ is commutative$\}$.
If
$\wclose{\Gamma,\Delta}^{(1)} \subseteq \Gamma$ and $\Pi \ne \emptyset$,
then there is $\omega \in \Pi$ \st for every $f \in \supp(\omega)$ and $x \in D^2$ it holds that $\{f(x),\overline{f}(x)\} \not\in S$.
\end{mylemma}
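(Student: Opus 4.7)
The plan is to apply \cref{r:min:exists} to pick an element $\omega^*$ of $\Pi$ that is ``maximally commutative'' in the sense of $M$, and then to exploit the squaring operation of \cref{r:mul:is:fpol} to rule out the bad configurations by contradiction.

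First I would fix some $\omega_0 \in \Pi$ and set $\beta(s) = C_{\omega_0}(s)$ for $s \in D^2$; since $\omega_0 \in \Pi$, we have $\beta(a,b)>0$ whenever $\{a,b\}\in S$. The set $\Omega = \{\omega \in \fpol^{(2)}(\Gamma,\Delta) : C_\omega \ge \beta \text{ pointwise}\}$ then contains $\omega_0$, so \cref{r:min:exists} supplies $\omega^* \in \Omega$ attaining $\sup_{\omega\in\Omega} M(\omega)$. In particular $\omega^* \in \Pi$, because $C_{\omega^*}(a,b)\ge\beta(a,b)>0$ for every $\{a,b\}\in S$, which forces some $f\in\supp(\omega^*)$ to be commutative at each such pair.

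The key step will be a direct computation of $C_{\omega^{*2}}$. Using the identities $g[h,\overline h](c,d) = g(h(c,d),h(d,c))$ and $g[h,\overline h](d,c) = g(h(d,c),h(c,d))$, the composition $g[h,\overline h]$ is commutative at $(c,d)$ iff either $h(c,d)=h(d,c)$ or $g$ is commutative at $(h(c,d),h(d,c))$. Summing with weights $\omega^*(g)\omega^*(h)$ and separating the two cases yields
\begin{align*}
C_{\omega^{*2}}(c,d) = C_{\omega^*}(c,d) + \sum_{\substack{h\in\supp(\omega^*)\\ h(c,d)\ne h(d,c)}} \omega^*(h)\, C_{\omega^*}\bigl(h(c,d),h(d,c)\bigr).
\end{align*}
This gives $C_{\omega^{*2}} \ge C_{\omega^*} \ge \beta$ pointwise, so by \cref{r:mul:is:fpol} the fractional polymorphism $\omega^{*2}$ lies in $\Omega$. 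Maximality then forces $M(\omega^{*2})=M(\omega^*)$, upgrading the pointwise inequality to an equality. Since every term in the extra sum above is non-negative, I can conclude that for every $h\in\supp(\omega^*)$ and every $(c,d)\in D^2$ with $h(c,d)\ne h(d,c)$ one has $C_{\omega^*}\bigl(h(c,d),h(d,c)\bigr)=0$.

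Finally I would close by contradiction: if some $f\in\supp(\omega^*)$ and some $(c,d)\in D^2$ satisfy $\{f(c,d),f(d,c)\}=\{a,b\}\in S$ with $f(c,d)\ne f(d,c)$, applying the vanishing statement just obtained with $h=f$ gives $C_{\omega^*}(a,b)=0$; but $\omega^*\in\Pi$ together with $\{a,b\}\in S$ provides some $g\in\supp(\omega^*)$ commutative at $(a,b)$, so $C_{\omega^*}(a,b)>0$, a contradiction. The main obstacle I foresee is performing the $C_{\omega^{*2}}$ calculation cleanly enough to isolate the extra contributions as exactly the ``commutativity at the image point'' witnesses — once that identity is in hand, the maximality of $M$ kills those witnesses automatically and the contradiction step is immediate.
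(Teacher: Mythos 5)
Your proposal is correct and follows essentially the same route as the paper: both start from some $\omega_0\in\Pi$, invoke \cref{r:min:exists} with $\beta$ built from $C_{\omega_0}$ to get a maximiser $\omega^*$ of $M$, and then use the squaring construction of \cref{r:mul:is:fpol} together with the identity $C_{(\omega^*)^2}(x)=C_{\omega^*}(x)+\sum_{h:h(x)\ne\overline h(x)}\omega^*(h)\,C_{\omega^*}(h(x),\overline h(x))$ to derive a contradiction from a bad witness. The only (cosmetic) difference is that the paper exhibits the strict inequality $M((\omega^*)^2)>M(\omega^*)$ directly, whereas you first extract the equality case and then contradict $C_{\omega^*}(a,b)>0$.
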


\section {Cores}
\label{s:core}
In this section we define cores and prove that one can add all constants to a language that is a core without making the associated \minhom\ much more difficult.
We use a definition of cores from \cite[Definition~3]{tz:fvcsp}.

\begin{mydefinition}
A finite language $(\Gamma,\Delta)$ is a \emph{core} \iff for every $\omega \in \fpol^{(1)}(\Gamma,\Delta)$
and every $f \in \supp(\omega)$ it holds that $f$ is injective.
A language $(\Gamma',\Delta')$ is a core of another language $(\Gamma,\Delta)$ if $(\Gamma',\Delta')$ is a core and
$(\Gamma',\Delta') = (\Gamma,\Delta)|_{g(D)}$ for some $\psi \in \fpol^{(1)}(\Gamma,\Delta)$ and $g \in \supp(\psi)$.
\end{mydefinition}

A result very similar to the following was given in \cite{annaco,tz:fvcsp} for finite-valued languages.
\begin{myproposition}
\label{r:get:const}
If $(\Gamma,\Delta)$ is a core,
then \minhom$(\Gamma^c,\Delta)$ is polynomial-time reducible to \minhom$(\Gamma,\Delta)$.
\end{myproposition}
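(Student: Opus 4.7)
The plan is to encode each constant constraint of an instance $I$ of $\minhom(\Gamma^c,\Delta)$ using a ``reference gadget'' built from $\Gamma$ and $\Delta$. The central object is the set $G_0 = \{g \in \pol^{(1)}(\Gamma) : \delta_g \in \fpol^{(1)}(\Gamma,\Delta)\}$, which I would show is a group of automorphisms of $(\Gamma,\Delta)$. By the core assumption every $g \in G_0$ is injective, hence bijective on the finite set $D$; and since $\sum_{x\in D}\nu(g(x)) = \sum_{x\in D}\nu(x)$ for a bijection $g$ while $\nu(g(x)) \le \nu(x)$ pointwise, equality $\nu \circ g = \nu$ holds for every $\nu \in \Delta$ (the bijection permutes $\{x : \nu(x) = \infty\}$ setwise and preserves $\nu$-values on its complement). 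Because the inverse of a bijective polymorphism on a finite set is again a polymorphism, $G_0$ is closed under inverses.

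Given $I = (V, C, w)$ with constant constraints $(v, \{\phi_0(v)\})_{v \in V_0}$, I would construct $I' = (V',C',w')$ of $\minhom(\Gamma,\Delta)$ by introducing reference variables $U = \{u_d : d \in D\}$, replacing each occurrence of $v \in V_0$ by $u_{\phi_0(v)}$ in the non-constant constraints, dropping the constant constraints themselves, and adding the gadget constraints $((u_{t_1}, \ldots, u_{t_k}), R)$ for every $R \in \Gamma^{(k)}$ and every $t \in R$. These constraints force the restriction $h : d \mapsto \phi'(u_d)$ to be an endomorphism of $\Gamma$ in every solution $\phi'$. The weights are $w'(v, \nu) = w(v,\nu)$ for $v \in V \setminus V_0$ and $w'(u_d, \nu) = \sum_{v \in V_0 : \phi_0(v) = d} w(v,\nu) + \beta(d,\nu)$, where $\beta \ge 0$ is a $G_0$-invariant ``balancing'' term discussed below. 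If an optimum $\phi^*$ of $I'$ has $h^* := \phi^*|_U \in G_0$, then $\psi := (h^*)^{-1} \circ \phi^*$, extended by $\psi(v) = \phi_0(v)$ on $V_0$, is a solution of $I$; the identity $m_{I'}(\phi^*) = m_I(\psi) + \sum_{d,\nu} \beta(d,\nu)\nu(d)$ follows by using $(h^*)^{-1} \in G_0$ to preserve $\Delta$ together with the $G_0$-invariance of $\beta$ (which makes $\sum_d \beta(d,\nu)\nu(h^*(d))$ equal to $\sum_d \beta(d,\nu)\nu(d)$ for any $h^* \in G_0$). Conversely, lifting an optimum $\phi$ of $I$ by $\phi'(u_d) = d$ gives an $I'$-solution of measure $m_I(\phi) + \sum_{d,\nu}\beta(d,\nu)\nu(d)$. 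Subtracting the (instance-computable) balancing constant from the oracle's answer therefore recovers $\opt(I)$.

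The hard part is selecting $\beta$ so that every optimum of $I'$ has $h^* \in G_0$. The naive choice $\beta \equiv 0$ already fails: for $\Gamma = \{\{(1,2),(2,1)\}\}$ and $\nu(1) > \nu(2)$ the bijective endomorphism swap lies outside $G_0$ yet strictly reduces the gadget cost. To rule out such non-$G_0$ optima I would pick $\beta$ large enough that, by a rearrangement-inequality argument on bijections, every bijective endomorphism outside $G_0$ pays strictly more for the gadget than any element of $G_0$, and large enough to also dominate any potential improvement from a non-bijective $h$. Existence (and polynomial size) of such $\beta$ is the main technical step: if no suitable $\beta$ existed, then averaging over the $G_0$-orbit of a hypothetical ``bad'' optimum would, via a Motzkin/LP-duality argument in the spirit of \cref{r:subset:fpol} and \cref{r:use:dom} (with \cref{r:motzkin}), produce a fractional unary polymorphism whose support contains a non-bijection, contradicting the core assumption on $(\Gamma,\Delta)$.
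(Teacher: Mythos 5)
Your overall architecture (an indicator-style gadget on reference variables $u_d$, weights forcing the induced endomorphism into the group $G_0$, and pull-back by the inverse) is viable and genuinely different from the paper's proof, which avoids hand-crafted weights altogether: there one takes the orbit $R'$ of the single tuple $(d_1,\dots,d_{|D|})$ under all operations occurring in supports of unary fractional polymorphisms, shows $R'\in\wclose{\Gamma,\Delta}$ via \cref{r:subset:fpol} so that the gadget constraint comes for free from \cref{r:red}, and recovers an optimal solution by composing with powers $\pi^k$ of the resulting permutation rather than with $\pi^{-1}$ (which sidesteps any need to argue that the permutation preserves $\Delta$ pointwise).

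There is, however, a concrete hole in what you yourself identify as the main technical step. Running Motzkin on the system ``$\sum_{d,\nu}\beta(d,\nu)\bigl(\nu(h(d))-\nu(d)\bigr)>0$ for every endomorphism $h\notin G_0$'' produces, in the infeasible case, a probability distribution $\lambda$ on such $h$ with $\sum_h\lambda_h\,\nu(h(d))\le\nu(d)$ for all $d,\nu$, i.e.\ a unary fractional polymorphism supported on endomorphisms \emph{outside} $G_0$. The core assumption only tells you that every $h$ in this support is injective; it does not hand you the non-bijection that your sketch relies on for the contradiction. The certificate can be supported entirely on bijective endomorphisms that fail to preserve some $\nu$ pointwise, and your pointwise-preservation argument ($\sum_{x}\nu(g(x))=\sum_{x}\nu(x)$ combined with $\nu(g(x))\le\nu(x)$) applies only to $g$ with $f\mapsto\delta_{g,f}$ itself a fractional polymorphism, not to a general support element, for which one only has the averaged inequality. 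To close the gap you need the further lemma: if $\omega\in\fpol^{(1)}(\Gamma,\Delta)$ and every $f\in\supp(\omega)$ is a bijection, then every such $f$ maps each sublevel set $\{x:\nu(x)\le t\}$ onto itself (induct on the values of $\nu$: at the minimum level the averaged inequality forces $f(x)$ into that level, injectivity makes $f|$ a permutation of it, and bijectivity lifts the argument to the next level), hence $\nu\circ f=\nu$ and $f\in G_0$ --- contradicting that the support consists of bad $h$. With that lemma both horns of the dichotomy yield a contradiction and your $\beta$ exists (and can be scaled to dominate the instance weight); without it, the step as written does not go through.
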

\begin{proof}[Proof sketch]
We will show that \minhom$(\Gamma^c,\Delta)$ is polynomial-time reducible to \minhom$(\Gamma \cup \wclose{\Gamma,\Delta}^{(|D|)}, \Delta)$.
By \cref{r:red} this is sufficient.

Assume $D = \{d_1,\dots,d_{|D|}\}$.
Let $R = \{ (d_1,\dots,d_{|D|}) \}$ and let $R'$ be the closure of $R$ under the operations $f \in \supp(\omega)$, $\omega \in \fpol^{(1)}(\Gamma,\Delta)$.

Note that there is no $k>1$, $\psi \in \fpol^{(k)}(\Gamma,\Delta)$ and $g \in \supp(\psi)$ \st $g$ does not preserve $R'$.
This follows from the fact that $R'$ was generated from a single tuple.
It is not hard to show that there is $\varpi \in \fpol^{(1)}(\Gamma,\Delta)$ \st
$R' = \{ f(d_1,\dots,d_{|D|}) : f \in \supp(\varpi) \}$.
Assume that there is $s=f(t^1,\dots,t^k) \not\in R'$ for some $f \in \supp(\psi)$ and $t^1,\dots,t^k \in R'$.
This means that we from $\psi$ and $\varpi$ can construct
$\varpi' \in \fpol^{(1)}(\Gamma,\Delta)$ with
$f \in \supp(\varpi')$ \st $s=f(d_1,\dots,d_{|D|})$, which is a contradiction.

From \cref{r:subset:fpol} it follows that $R' \in \wclose{\Gamma,\Delta}$.
Since $(\Gamma,\Delta)$ is a core, for every $\omega \in \fpol^{(1)}(\Gamma,\Delta)$ and $f \in \supp(\omega)$
we know that $f$ is injective.
Hence, every $t \in R'$ equals $(\pi(d_1),\dots,\pi(d_{|D|}))$ for some permutation $\pi$ on $D$.

We now use a construction that is applied for the corresponding result for {\csp}s~\cite[Theorem 4.7]{alg:3}.
Given an instance $I$ of \minhom$(\Gamma^c,\Delta)$ we create an instance of $I'$ of
\minhom$(\Gamma \cup \wclose{\Gamma,\Delta}^{(|D|)},\Delta)$ from $I$ by adding variables $v_{d_1},\dots,v_{d_{|D|}}$
and replacing every constraint $(v,\{d_i\})$ with the constraint $((v,v_{d_i}),=)$.
Finally we add the constraint $((v_{d_1},\dots,v_{d_{|D|}}), R')$.
If there is a solution to $I$, then there is also a solution to $I'$.
And, if $\psi$ is an optimal solution to $I'$, then
$\phi(v_{d_1},\dots,v_{d_{|D|}}) = (\pi(d_1),\dots,\pi(d_{|D|}))$
for some permutation $\pi$ on $D$ and $\omega \in \fpol^{(1)}(\Gamma,\Delta)$ \st $\pi \in \supp(\omega)$.
Hence $\pi^k \circ \psi$ is another optimal solution to $I'$, for any $k \ge 1$.
In particular there is an optimal solution $\phi^*$ to $I'$ \st $\phi^*(v_{d_1},\dots,v_{d_{|D|}}) = (d_1,\dots,d_{|D|})$.
This allows us to recover an optimal solution to $I$.
\end{proof}

\section {Proof of \cref{r:main}}
\label{s:three:el}
In this section we establish a sequence of lemmas that together imply our main result.
To save ink we begin by giving short names to a few statements.
\begin{description}
\item[A1:]
$(\Gamma,\Delta)$ is a finite language on $D=\{a,b,c\}$ \st $\Gamma^c \cup \feas(\Delta) \cup \wclose{\Gamma,\Delta}^{(1)} \cup \wclose{\Gamma,\Delta}^{(2)} \subseteq \Gamma$.
\item[G1:]
\minhom$(\Gamma,\Delta)$ can be shown to be in \class{PO} by \cref{r:gwtp}.
\item[G2:]
\minhom$(\Gamma,\Delta)$ can be shown to be in \class{PO} by \cref{r:semilattice}.
\item[G3:]
\minhom$(\Gamma,\Delta)$ is \class{NP}-hard.
\end{description}

The supporting lemma below is used to show the results that follow.
We give a proof in \cref{s:r:get:valuations}.
\begin{mylemma}
\label{r:get:valuations}
Assume A1.
If $\{a,b\} \not\in \Gamma$,
then either there is $\omega \in \fpol^{(2)}(\Gamma,\Delta)$ that is $(a,b)$ or $(b,a)$-dominating,
or there are $\nu_a,\nu_b \in \eclose{\Gamma,\Delta}$ \st $\nu_a(a) < \nu_a(c) < \nu_a(b)$
and $\nu_b(b) < \nu_b(c) < \nu_b(a)$.
\end{mylemma}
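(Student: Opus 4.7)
The plan is to argue by contrapositive: assume $(\Gamma,\Delta)$ admits neither an $(a,b)$-dominating nor a $(b,a)$-dominating binary fractional polymorphism and construct the two valuations.

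First I would apply \cref{r:use:dom} with $k=2$ twice. The absence of an $(a,b)$-dominating $\omega$ gives $\nu_b \in \eclose{\Gamma,\Delta}$ satisfying $\nu_b(a),\nu_b(b)<\infty$ and $\nu_b(b) < \nu_b(x)$ for each $x \in \{a,c\}$. Symmetrically, the absence of a $(b,a)$-dominating $\omega$ gives $\nu_a \in \eclose{\Gamma,\Delta}$ satisfying $\nu_a(a),\nu_a(b)<\infty$ and $\nu_a(a) < \nu_a(x)$ for each $x \in \{b,c\}$. The only remaining question is the relative order of $\nu_a(b)$ versus $\nu_a(c)$ (and symmetrically of $\nu_b(a)$ versus $\nu_b(c)$); pinning this down is the real content of the lemma.

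The key step is to apply \cref{r:two:subset} to the pair $\{a,b\}$ with $A=D$ and $\nu = \nu_a$. The whole domain $D$ is trivially wpp-definable (realise it via a single-variable instance with no constraints and no weights), so $D \in \wclose{\Gamma,\Delta}^{(1)}$. Suppose for contradiction that $\nu_a(b) \le \nu_a(c)$; since $c$ is the unique element of $A \setminus \{a,b\}$, the hypotheses $\nu_a(a) < \nu_a(b) < \infty$ and $\nu_a(b) \le \nu_a(c)$ hold. The conclusion of \cref{r:two:subset} yields either $\{a,b\} \in \wclose{\Gamma,\Delta}$ or an $(a,b)$-dominating $\omega \in \fpol^{(2)}(\Gamma,\Delta)$. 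The first alternative is excluded because A1 enforces $\wclose{\Gamma,\Delta}^{(1)} \subseteq \Gamma$ while by hypothesis $\{a,b\} \not\in \Gamma$; the second contradicts our standing assumption. Hence $\nu_a(c) < \nu_a(b)$, so $\nu_a(a) < \nu_a(c) < \nu_a(b)$. The same argument with the roles of $a$ and $b$ interchanged yields $\nu_b(b) < \nu_b(c) < \nu_b(a)$.

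I do not anticipate a substantive obstacle: the two ingredients (the dominating-versus-valuation dichotomy from \cref{r:use:dom} and the pair-extraction dichotomy from \cref{r:two:subset}) plug together almost mechanically once one spots the trick of choosing $A = D$, so that the only element of $A \setminus \{a,b\}$ is the single point $c$ whose $\nu$-value we need to control.
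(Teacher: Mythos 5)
Your proof is correct. The first half---invoking \cref{r:use:dom} twice (for the pairs $(a,b)$ and $(b,a)$) to obtain $\nu_a,\nu_b\in\eclose{\Gamma,\Delta}$ with the right unique minima and with $\nu_a,\nu_b$ finite on $\{a,b\}$---is exactly what the paper does. You diverge in how the position of $c$ is pinned down. The paper first notes that $\nu_a$ and $\nu_b$ must in fact be finite on all of $D$ (because $\{x:\nu(x)<\infty\}$ is wpp-definable, hence in $\Gamma$ by A1, and cannot equal $\{a,b\}$), and then argues directly that if, say, $\nu_a(b)\le\nu_a(c)$, a suitable positive combination $\alpha\nu_a+\nu_b$ has $\{a,b\}$ as its argmin, so $\{a,b\}\in\wclose{\Gamma,\Delta}^{(1)}\subseteq\Gamma$, a contradiction. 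You instead apply \cref{r:two:subset} with $A=D$, which delivers the same dichotomy ($\{a,b\}$ wpp-definable, or an $(a,b)$-dominating binary fractional polymorphism) as a black box. Both routes are sound; yours uses only one valuation at a time and absorbs the case $\nu_a(c)=\infty$ automatically (since $\nu_a(b)\le\infty$ is permitted in the corollary's hypothesis), while the paper's is more self-contained in that it avoids a further appeal to the Motzkin-type duality hidden inside \cref{r:two:subset} and \cref{r:subset:fpol}.
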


We are going to analyse a few different cases depending on the number of two-element subsets of the domain that is wpp-definable in $(\Gamma,\Delta)$.
The following lemma, which follows immediately from \cref{r:no:dom}, connects this number to dominating fractional polymorphisms.
\begin{mylemma}
\label{r:get:two:subsets}
Assume A1.
Either $|\Gamma \cap \binom D 2| \ge 2$ or there is 
$\omega \in \fpol^{(3)}(\Gamma,\Delta)$ and $a_1,a_2,a_3 \in D$ \st $\omega$ is $(a_1,a_2,a_3)$-dominating and $\{a_1,a_2,a_3\}=D$.
\end{mylemma}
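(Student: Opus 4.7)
The plan is to invoke \cref{r:no:dom} with $k=3$ and the three elements chosen to be the full domain, namely $a_1=a$, $a_2=b$, $a_3=c$. The corollary then forces one of two possibilities. In case (1) we obtain $\omega \in \fpol^{(3)}(\Gamma,\Delta)$ and an index $i \in [3]$ such that $\omega$ is dominating on the triple of domain elements (with $a_i$ in the dominated position), which yields exactly the second alternative of the lemma with $\{a_1,a_2,a_3\}=D$.

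It remains to show that case (2) of \cref{r:no:dom} delivers the first alternative $|\Gamma \cap \binom{D}{2}| \ge 2$. Case (2) says that for each $i \in \{a,b,c\}$ there is some $j \in \{a,b,c\}\setminus\{i\}$ with $\{i,j\} \in \wclose{\Gamma,\Delta}$. Suppose toward contradiction that at most one of the three subsets in $\binom{D}{2}$ lies in $\wclose{\Gamma,\Delta}$; without loss of generality the only candidate is $\{a,b\}$ (the remaining situations are symmetric). Then applying the existence clause of case (2) to $i = c$ forces either $\{a,c\}$ or $\{b,c\}$ to lie in $\wclose{\Gamma,\Delta}$, contradicting our assumption. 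Hence at least two elements of $\binom{D}{2}$ lie in $\wclose{\Gamma,\Delta}$, and the hypothesis A1 gives $\wclose{\Gamma,\Delta}^{(2)} \subseteq \Gamma$, so $|\Gamma \cap \binom{D}{2}| \ge 2$.

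There is essentially no obstacle beyond picking the correct instantiation of the earlier corollary; the pigeonhole argument on three two-element subsets of a three-element set is routine. The role of A1 is only invoked at the very end to promote \emph{wpp-definability} of the two-element subsets to actual membership in $\Gamma$.
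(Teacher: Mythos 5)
Your proof is correct and is exactly the paper's intended argument: the paper derives this lemma as an immediate consequence of \cref{r:no:dom} applied to the three elements of $D$, with the same pigeonhole observation that a single two-element subset cannot witness condition~(2) for all three choices of $i$. One small notational slip: the two-element subsets $\{a_i,a_j\}$ are \emph{unary} relations (subsets of $D$), so the relevant clause of A1 is $\wclose{\Gamma,\Delta}^{(1)} \subseteq \Gamma$ rather than $\wclose{\Gamma,\Delta}^{(2)} \subseteq \Gamma$; since both inclusions are part of A1 this does not affect the argument.
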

To understand languages that admit a ternary dominating fractional polymorphism we use the following lemma. We give a proof in \cref{s:r:two:subsets}.
\begin{mylemma}
\label{r:two:subsets}
Assume A1.
If $\{a,b\} \not\in \Gamma$
and there is $\omega \in \fpol^{(3)}(\Gamma,\Delta)$ \st $\omega$ is $(a,b,c)$-dominating, then either $\{a,c\},\{b,c\} \in \Gamma$,
or G1, G2 or G3 is true.
\end{mylemma}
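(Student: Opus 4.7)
The plan is to suppose at least one of $\{a,c\}, \{b,c\}$ is missing from $\Gamma$ and to deduce G1, G2 or G3. Since swapping $a$ and $b$ preserves both the assumption $\{a,b\} \not\in \Gamma$ and the $(a,b,c)$-dominating property of $\omega$ (the latter being symmetric in the values of $W_a$ and $W_b$), we may \wlg take $\{a,c\} \not\in \Gamma$. I would then invoke \cref{r:get:valuations} twice -- once for the pair $\{a,b\}$ and once for the pair $\{a,c\}$ -- each invocation giving either a binary dominating fractional polymorphism on the pair (in one of two possible directions), or a pair of expressible unary cost functions with opposite orderings on the pair. A 2-by-2 case split then ensues.

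If either application of \cref{r:get:valuations} produces a binary dominating fractional polymorphism $\psi$, I would combine $\psi$ with the ternary $\omega$ by superposition, in the spirit of the construction used in the proof of \cref{r:no:dom} and with the aid of \cref{r:mul:is:fpol} when symmetrisation is needed. The aim is either to manufacture a new fractional polymorphism which forces the corresponding pair into $\wclose{\Gamma,\Delta}^{(2)}$ -- contradicting A1 together with our subcase assumption -- or to isolate a binary operation in the support of the resulting fractional polymorphism whose restriction to some two-element subset is a semilattice, triggering G2 via \cref{r:semilattice}.

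In the remaining subcase all four unary cost functions $\nu_a, \nu_b, \mu_a, \mu_c$ supplied by \cref{r:get:valuations} are available in $\eclose{\Gamma,\Delta}$. A suitable positive combination of them produces an injective cost function $\nu$ on $D$, and I would split on whether $\Gamma$ is preserved by a $\nu$-min set function. If not, \cref{r:cross} yields a cross in $\wclose{\Gamma,\Delta}$; combined with $\omega$ through \cref{r:observation} this forces a semilattice operation on some two-element subset of $D$ and thus G2. If so, the rigidity provided by \cref{r:min:set}, together with $\omega$ and the gluing step of \cref{r:arithmetical}, lets me take $\mathcal{F} = \wclose{\Gamma,\Delta}^{(1)}$, identify the ``bad'' pairs $\mathcal{A}$, and assemble $f_1, f_2, m$ meeting the hypotheses of \cref{r:gwtp}, yielding G1. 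Should both attempts fail, the opposite-preference cost functions restricted to the only remaining pair in $\Gamma \cap \binom D 2$ -- namely $\{b,c\}$, since both $\{a,b\}$ and $\{a,c\}$ are excluded -- activate \cref{r:np:hard}, delivering G3 once the commutative-conservative and arithmetical alternatives of that lemma have been ruled out in the preceding steps.

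The hardest part will be the combinatorial bookkeeping across the case split: combining fractional polymorphisms of different arities without destroying the $(a,b,c)$-dominating nature of $\omega$, tracking which two-element relations enter $\wclose{\Gamma,\Delta}^{(2)}$ as the argument proceeds, and assembling the polymorphism collection required by \cref{r:gwtp} while respecting its delicate projection, arithmetical and conservativity restrictions. The canonicalisation results \cref{r:min:exists} and \cref{r:min:commutative} are the natural tools to streamline this process, and I expect them to be used repeatedly to replace the fractional polymorphisms extracted from \cref{r:get:valuations,r:no:dom,r:two:subset} with versions on which the subsequent case analysis is tractable.
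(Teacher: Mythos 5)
Your overall architecture (split on whether a binary dominating fractional polymorphism exists; otherwise extract opposite-preference valuations via \cref{r:get:valuations}; then use \cref{r:min:set}/\cref{r:cross} versus min set functions; finish with \cref{r:gwtp} or \cref{r:np:hard}) does match the paper's proof in outline. But two of the mechanisms you rely on are wrong as stated, and the part of the argument where the real work happens is absent. First, a binary operation whose restriction to a \emph{two-element subset} is a semilattice does not trigger G2: \cref{r:semilattice} requires an operation in the support of a binary fractional polymorphism that is a semilattice operation on all of $D$. In the branches where the paper concludes G2 it first proves that the relevant $g$ is commutative, idempotent and satisfies $g(a,b)=g(a,c)=g(b,c)=c$ (using that $g$ must preserve the cross $\{(a,b),(b,a),(c,c)\}$, which lies in $\Gamma$ by A1), i.e.\ it is a genuine semilattice on the whole domain. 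Second, manufacturing a new fractional polymorphism cannot ``force the corresponding pair into $\wclose{\Gamma,\Delta}^{(2)}$'': by \cref{r:subset:fpol}, wpp-definability of $\{a,c\}$ follows from the \emph{non-existence} of fractional polymorphisms whose support maps $\{a,c\}$ outside itself, so exhibiting one goes in the wrong direction (and $\{a,c\}$ is unary, so it would live in $\wclose{\Gamma,\Delta}^{(1)}$). The paper's treatment of the binary-dominating case is quite different and much simpler: an $(a,b)$-dominating $\psi$ together with the $(a,b,c)$-dominating $\omega$ shows via \cref{r:measure} that $\{a,b,c\}$ is shrinkable to $\{a\}$, after which $\mathcal F$ can be taken to consist of singletons and whichever of $\{a,c\},\{b,c\}$ are wpp-definable, and \cref{r:np:hard} plus \cref{r:gwtp} give G1 or G3.

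The more serious omission is the mixed case, say $\{a,c\}\in\wclose{\Gamma,\Delta}$ but $\{b,c\}\not\in\wclose{\Gamma,\Delta}$, which your WLOG relabelling does not eliminate and which occupies most of the paper's proof. There one cannot simply ``take $\mathcal F=\wclose{\Gamma,\Delta}^{(1)}$ and assemble $f_1,f_2,m$'': one has to prove that suitable operations exist. The paper does this by first canonicalising to a fractional polymorphism $\varpi$ (via \cref{r:fpol}, \cref{r:mul:is:fpol} and \cref{r:min:commutative}) whose support operations are commutative on every pair except possibly $\{a,c\}$, then extracting six binary polymorphisms with prescribed values on $(b,c)$ and $(a,b)$ (one of which requires a separate pp-definability argument with the relation generated by $(a,b),(b,a)$ to rule out $\{b,c\}$ becoming wpp-definable), and finally writing down explicit terms, split into seven subcases on $(h,\overline h)(b,c)$ and $(g,\overline g)(b,c)$, that form a tournament pair on $\{a,b\},\{b,c\}$. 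None of this is replaced by anything concrete in your proposal, and the tools you name for it (\cref{r:observation} applied to a cross, \cref{r:arithmetical}) do not produce these operations. As written, the proposal would not go through.
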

The following four lemmas are used to handle languages that contain two unary two-element relations.
We prove them in \cref{s:r:twoset:dom,s:r:work:a,s:r:class:one,s:r:work:b}.
\begin{mylemma}
\label{r:twoset:dom}
Assume A1.
If
$\{a,c\},\{c,b\} \in \Gamma$
and
there is $\omega \in \fpol^{(2)}(\Gamma,\Delta)$ that is $(a,b)$-dominating, then G1 or G3 is true.
\end{mylemma}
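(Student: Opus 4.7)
The first step is to verify the shrinkability hypothesis of \cref{r:gwtp} for the natural choice $\mathcal{F}=\{\{a\},\{b\},\{c\},\{a,c\},\{c,b\}\}$. Applying \cref{r:measure} to the hypothesised binary $(a,b)$-dominating $\omega$ at $(a,b)$ yields, for any instance $I$ and variable $v$ with $\{a,b,c\}\subseteq\{\phi(v):\phi\in\sol(I)\}$,
\[
W^\omega_a(a,b)\,m(\specialsol{I}{v}{a}) + W^\omega_b(a,b)\,m(\specialsol{I}{v}{b}) + W^\omega_c(a,b)\,m(\specialsol{I}{v}{c}) \le \tfrac{1}{2}\bigl(m(\specialsol{I}{v}{a}) + m(\specialsol{I}{v}{b})\bigr),
\]
and rearranging, using $W^\omega_a+W^\omega_b+W^\omega_c=1$ together with $W^\omega_a\ge 1/2>W^\omega_b$, shows that the coefficients of $m(\specialsol{I}{v}{a})$ and $m(\specialsol{I}{v}{c})$ normalise to sum to exactly one, certifying that $D$ is shrinkable to $\{a,c\}$. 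If in addition $\{a,b\}\in\Gamma$, preservation of $\{a,b\}$ by every $f\in\supp(\omega)$ forces $W^\omega_c(a,b)=0$, and the same inequality degenerates to $m(\specialsol{I}{v}{a})\le m(\specialsol{I}{v}{b})$, certifying that $\{a,b\}$ is shrinkable to $\{a\}$. Since by A1 the elements of $\wclose{\Gamma,\Delta}^{(1)}\setminus\mathcal{F}$ can only be $D$ and possibly $\{a,b\}$, this discharges the shrinkability clause of \cref{r:gwtp}.

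What remains is to produce $f_1,f_2\in\pol^{(2)}(\Gamma)$ and an $m\in\pol^{(3)}(\Gamma)$ satisfying the pair-wise conditions of \cref{r:gwtp} on the pairs $\{a,c\},\{c,b\}\subseteq\Gamma$, or else to conclude G3. I would handle each pair $p$ separately and decide whether to place it in $\mathcal{A}$. On such a pair, if $\eclose{\Gamma,\Delta}$ contains two unary cost functions whose strict order is inverted on $p$, then after enlarging $\Delta$ by these functions using \cref{r:red}, the hypotheses of \cref{r:np:hard} are met and we obtain either two different idempotent commutative conservative binary polymorphisms on $p$ (placing $p\in\mathcal{A}$), an arithmetical ternary polymorphism on $p$ (placing $p\notin\mathcal{A}$), or the conclusion G3 outright. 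In the remaining case every cost function restricted to $p$ is monotone in a common direction, and then \cref{r:use:dom}, \cref{r:cross} and \cref{r:min:set} combine to either produce a fractional polymorphism witnessing that $p$ itself is shrinkable to a singleton (so $p$ can be dropped from $\mathcal{F}$ entirely) or to deliver another dominating polymorphism with which the same pair-wise analysis can be restarted.

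The global ternary polymorphism $m\in\pol^{(3)}(\Gamma)$ arithmetical on $\mathcal{F}\setminus\mathcal{A}$ is then assembled by \cref{r:arithmetical}, and the standard superposition step from the sketched proof of \cref{r:gwtp} turns the pair-wise $f_1,f_2,m$ into global operations satisfying the stronger projection/idempotent/conservative conditions. The principal obstacle I anticipate is ensuring that the choices on the two pairs $\{a,c\}$ and $\{c,b\}$ — which share the element $c$ — yield mutually compatible global polymorphisms. A case split on whether $\{a,b\}\in\Gamma$, together with the extra structural constraint $W^\omega_c(a,b)=0$ available in that sub-case, is what I expect will be needed to resolve the awkward configurations in which the natural pair-wise semilattice operations on $\{a,c\}$ and $\{c,b\}$ would disagree on $c$.
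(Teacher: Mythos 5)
Your opening is sound and matches the paper's first step: the $(a,b)$-dominating $\omega$ certifies via \cref{r:measure} that $\{a,b,c\}$ is shrinkable to $\{a,c\}$, and the remaining sets of $\mathcal{F}$ are handled by the dichotomy ``either the pair admits a dominating binary fractional polymorphism (hence is shrinkable to a singleton and leaves $\mathcal{F}$), or by \cref{r:use:dom} two cost functions in $\eclose{\Gamma,\Delta}$ with inverted strict order on the pair exist and \cref{r:np:hard} applies.'' (Your detour through \cref{r:cross} and \cref{r:min:set} is unnecessary here -- those tools are not needed for this lemma -- and the phrase ``restart the pair-wise analysis'' hints at a loop whose termination you never address; the clean statement is just the contrapositive of \cref{r:use:dom}.)

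The genuine gap is exactly the obstacle you flag in your last sentence, and your proposed resolution does not work. Applying \cref{r:np:hard} to $\{a,c\}$ and to $\{c,b\}$ \emph{separately} gives, for each pair, \emph{some} operations with the right restriction to that pair; but \cref{r:gwtp} demands a single $f_1,f_2\in\pol^{(2)}(\Gamma)$ and a single $m\in\pol^{(3)}(\Gamma)$ behaving correctly on both pairs at once, and nothing in your argument produces them. A case split on whether $\{a,b\}\in\Gamma$ (and the ensuing $W^\omega_c(a,b)=0$) says nothing about how the operations chosen on $\{a,c\}$ and on $\{c,b\}$ interact at the shared element $c$. The paper resolves this by case-splitting instead on whether $\{(a,c),(c,a)\}$ and $\{(b,c),(c,b)\}$ lie in $\wclose{\Gamma,\Delta}$ (hence in $\Gamma$ by A1). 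When such a disequality relation is present, no binary polymorphism can be commutative and conservative on that pair (it would map $((a,c),(c,a))$ outside the relation), so the first branch of \cref{r:np:hard} is excluded and one is forced into the arithmetical branch; two arithmetical operations are then merged by \cref{r:arithmetical}. When a disequality relation is absent, \cref{r:fpol} supplies commutative operations \emph{inside the support of a fractional polymorphism}, and when both are absent, \cref{r:min:commutative} together with the weight computation $W_a(a,c)=W_c(a,c)=W_b(b,c)=W_c(b,c)=\tfrac12$ (forced by the expressed cost functions) yields one fractional polymorphism whose support already contains a tournament pair simultaneously on $\{a,c\}$ and $\{c,b\}$. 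This fractional-polymorphism machinery, not a case split on $\{a,b\}\in\Gamma$, is what makes the per-pair data globally compatible; without it your proof does not go through.
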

\begin{mylemma}
\label{r:work:a}
Assume A1.
If $\{a,b\} \not\in \Gamma$
and $\{a,c\},\{c,b\} \in \Gamma$,
then either $\{(a,c),(c,a)\} \in \Gamma$,
$\{(b,c),(c,b)\} \in \Gamma$,
or G1, G2, or G3 is true.
\end{mylemma}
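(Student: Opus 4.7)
The plan is to prove the contrapositive: assuming that neither $\{(a,c),(c,a)\}$ nor $\{(b,c),(c,b)\}$ lies in $\Gamma$, I will show that G1, G2, or G3 must hold. First I would apply \cref{r:get:valuations}: if it produces an $(a,b)$- or $(b,a)$-dominating binary fractional polymorphism, then since $\{a,c\},\{c,b\} \in \Gamma$ I can invoke \cref{r:twoset:dom} to obtain G1 or G3 immediately. Otherwise, I have cost functions $\nu_a,\nu_b \in \eclose{\Gamma,\Delta}$ with $\nu_a(a) < \nu_a(c) < \nu_a(b)$ and $\nu_b(b) < \nu_b(c) < \nu_b(a)$. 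Using these I would apply \cref{r:np:hard} to the two pairs $\{a,c\} \in \Gamma$ and $\{c,b\} \in \Gamma$ (after a reduction via \cref{r:red} placing $\nu_a,\nu_b$ in $\Delta$); the strict inequalities required by the lemma hold in both cases. This gives, for each pair, one of three alternatives --- an arithmetical ternary polymorphism on the pair, two distinct idempotent commutative conservative binary polymorphisms on the pair, or NP-hardness. If NP-hardness arises anywhere we conclude G3; otherwise I have nontrivial local polymorphism structure on both pairs.

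The next step is to assemble these local structures into global polymorphisms $f_1,f_2,m$ fitting the hypotheses of \cref{r:gwtp}, taking $\mathcal{F}$ to consist of the singletons together with $\{a,c\}$ and $\{c,b\}$, and letting $\mathcal{A}$ record which of these pairs received the ``two commutative'' alternative. When both pairs are arithmetical, \cref{r:arithmetical} provides a single ternary polymorphism that is arithmetical on both pairs and $f_1 = f_2 = \pr_1$ works; the mixed cases are handled by superposing the available commutative operations on one pair with the arithmetical operation (or projections) on the other. In parallel, since $\{(a,c),(c,a)\},\{(b,c),(c,b)\} \notin \wclose{\Gamma,\Delta}$, \cref{r:fpol} (applied with a $\sigma$ built from $\nu_a,\nu_b$) yields binary polymorphisms that preserve but collapse each of these pairs and hence restrict to semilattice operations on $\{a,c\}$ and on $\{c,b\}$; these supply the remaining local structure and, crucially, the raw material for shrinking certificates.

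The main obstacle will be the shrinkability clause. Since $D$ itself is always a member of $\wclose{\Gamma,\Delta}^{(1)}$, it must either be placed in $\mathcal{F}$ --- which would force strong conditions on the pair $\{a,b\}$ about which we have essentially no information, as $\{a,b\} \notin \Gamma$ --- or be shrunk to a proper subset lying in $\mathcal{F}$. My plan is to combine the collapsing binary polymorphisms obtained from \cref{r:fpol} with the trivial projection fractional polymorphism to build certificates shrinking $D$ to $\{a,c\}$ or $\{c,b\}$, eliminating $b$ or $a$ respectively by exploiting the valuations $\nu_a,\nu_b$ to break ties. Should no such certificate be constructible, the collapsing operations on the overlapping pairs $\{a,c\}$ and $\{c,b\}$ (which share the element $c$) should compose to a binary polymorphism that acts as a semilattice on all of $D$, and G2 would then follow from \cref{r:semilattice}.
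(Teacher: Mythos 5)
Your opening moves coincide with the paper's: dispose of the case of an $(a,b)$- or $(b,a)$-dominating binary fractional polymorphism via \cref{r:twoset:dom}, then extract $\nu_a,\nu_b$ from \cref{r:get:valuations}. After that, however, your plan breaks down. You propose to take $\mathcal F=\{\{a\},\{b\},\{c\},\{a,c\},\{c,b\}\}$ and to discharge the set $D\in\wclose{\Gamma,\Delta}^{(1)}$ by shrinking it to $\{a,c\}$ or $\{c,b\}$. This is impossible in precisely the regime you are in: a certificate shrinking $D$ to $D\setminus\{b\}$ must yield, for every instance $I$ in which all three values are attainable at $v$, a convex combination $\sum_i t_i\, m(\specialsol{I}{v}{a_i}) \le m(\specialsol{I}{v}{b})$ with $a_i\in\{a,c\}$. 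Applied to an instance realising $\nu_b$ (which is finite everywhere, so all three values are attainable), this forces a convex combination of $\nu_b(a)$ and $\nu_b(c)$ to be at most $\nu_b(b)$, contradicting $\nu_b(b)<\nu_b(c)<\nu_b(a)$; symmetrically $\nu_a$ blocks shrinking $D$ to $\{c,b\}$. So the very valuations you obtained in your first step rule out your main construction. Your fallback --- that the collapsing operations on $\{a,c\}$ and $\{c,b\}$ ``should compose to a binary polymorphism that acts as a semilattice on all of $D$'' --- is asserted without argument and is not true in general: an operation absorbing towards $a$ on $\{a,c\}$ and one absorbing towards $b$ on $\{b,c\}$ carry no information about behaviour on $\{a,b\}$ and have incompatible orientations for a single semilattice; moreover \cref{r:semilattice} needs the semilattice to sit in the support of a fractional polymorphism, which requires the \cref{r:min:commutative} machinery you never invoke.

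The missing ingredient is the min-set-function dichotomy. The paper applies \cref{r:min:set} and \cref{r:cross} with $\nu_a,\nu_b$ to conclude that either a cross is wpp-definable --- which, since $\{a,b\}\notin\Gamma$ while $\{a,c\},\{c,b\}\in\Gamma$, must be $\{(a,c),(c,a)\}$ or $\{(b,c),(c,b)\}$ (the lemma's stated conclusion, so nothing to prove) or $\{(a,b),(b,a),(c,c)\}$ --- or else $\Gamma$ is preserved by $\nu_a$- and $\nu_b$-min set functions. In the $\{(a,b),(b,a),(c,c)\}$ case, preservation of that relation combined with \cref{r:fpol} and \cref{r:min:commutative} produces a genuine semilattice in the support of a fractional polymorphism, giving G2. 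In the min-set-function case the two set functions induce $\min$ and $\max$ for the order $a<c<b$, which are idempotent, conservative and commutative on \emph{every} pair, including $\{a,b\}$; one then takes $\mathcal F$ to be all of $\wclose{\Gamma,\Delta}^{(1)}$ (in particular $D\in\mathcal F$), so no shrinking is required and G1 follows. Your sketch never establishes this trichotomy, and without it there is no source for the behaviour on $\{a,b\}$ that \cref{r:gwtp} demands once $D$ must be placed in $\mathcal F$.
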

\begin{mylemma}
\label{r:class:one}
Assume A1.
If $\{a,b\} \not\in \Gamma$,
$\{a,c\},\{c,b\} \in \Gamma$ and
$\{(a,c),(c,a)\} \in \Gamma$
and $\{(b,c),(c,b)\} \not\in \Gamma$,
then G1 or G3 holds.
\end{mylemma}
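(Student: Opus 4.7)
The plan is to obtain the conclusion as a short consequence of \cref{r:no:dom} and \cref{r:twoset:dom}. First, I would combine the hypothesis $\{a,b\} \not\in \Gamma$ with A1's closure condition $\wclose{\Gamma,\Delta}^{(1)} \subseteq \Gamma$ to conclude $\{a,b\} \not\in \wclose{\Gamma,\Delta}^{(1)}$. Applying \cref{r:no:dom} to the two-element set $\{a,b\}$ (so $k=2$), its second alternative would require $\{a,b\} \in \wclose{\Gamma,\Delta}^{(1)}$ and is therefore ruled out. Hence the first alternative must hold, producing a binary fractional polymorphism $\omega \in \fpol^{(2)}(\Gamma,\Delta)$ that is either $(a,b)$- or $(b,a)$-dominating.

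Next, I would invoke \cref{r:twoset:dom}; its remaining hypothesis $\{a,c\},\{c,b\} \in \Gamma$ is given. In the $(a,b)$-dominating subcase the application is immediate and yields G1 or G3. In the $(b,a)$-dominating subcase I would apply \cref{r:twoset:dom} after interchanging the labels $a$ and $b$: the collection $\{\{a,c\},\{b,c\}\}$ is invariant under this swap, so the hypothesis $\{a,c\},\{c,b\} \in \Gamma$ continues to hold under the relabeling; an $(a,b)$-dominating fpol in the relabeled world is exactly a $(b,a)$-dominating fpol in the original; and the conclusion ``G1 or G3'' is a statement about $(\Gamma,\Delta)$ that is insensitive to the naming of the elements of $D$. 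Either subcase thus yields G1 or G3.

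Since the heavy lifting already sits inside \cref{r:no:dom} and \cref{r:twoset:dom}, the main ``obstacle'' here is essentially bookkeeping --- namely verifying the symmetry argument in the $(b,a)$-dominating subcase, which is immediate from the invariance of the hypothesis $\{a,c\},\{c,b\} \in \Gamma$ under the swap $a \leftrightarrow b$. I would also note in passing that the additional hypotheses $\{(a,c),(c,a)\} \in \Gamma$ and $\{(b,c),(c,b)\} \not\in \Gamma$ are not invoked in the derivation; they locate this branch within the larger case analysis originating from \cref{r:work:a} (and will carry weight in the companion \cref{r:work:b}), but our conclusion G1 or G3 is insensitive to them.
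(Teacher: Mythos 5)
Your reduction to \cref{r:twoset:dom} disposes of only half of the problem, and the half you dispatch via \cref{r:no:dom} is where the gap sits. Read literally with $k=2$, \cref{r:no:dom} would indeed say that $\{a,b\}\notin\wclose{\Gamma,\Delta}$ forces an $(a,b)$- or $(b,a)$-dominating binary fractional polymorphism, but that reading cannot be sustained: it would make the second alternative of \cref{r:get:valuations} vacuous and would render the long case analyses of \cref{r:work:a,r:class:one,r:work:b} pointless. The proof of \cref{r:no:dom} passes through \cref{r:two:subset}, which needs a set $A\in\wclose{\Gamma,\Delta}$ with $\{a_i,a_j\}\subseteq A$ on which $\nu_i$ attains its second-smallest value at $a_j$; when $\{a_1,\dots,a_k\}=D$ (the only way the paper invokes \cref{r:no:dom}, namely in \cref{r:get:two:subsets}) one may take $A=D$, but for the pair $\{a,b\}$ on a three-element domain this requires $\nu_i(b)\le\nu_i(c)$, which is precisely what can fail. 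The genuinely possible remaining case --- no $(a,b)$- or $(b,a)$-dominating fractional polymorphism, whence by \cref{r:get:valuations} there are $\nu_1,\nu_2\in\eclose{\Gamma,\Delta}$ placing $c$ strictly between $a$ and $b$ under both --- is where all the work of this lemma lies: the paper extracts an operation arithmetical on $\{a,c\}$ via \cref{r:np:hard}, canonicalises a binary fractional polymorphism $\psi$ using \cref{r:fpol,r:mul:is:fpol,r:min:commutative}, and then splits on the values $W^\psi_x(a,b)$ to assemble tournament pairs on $\{\{a,b\},\{b,c\}\}$ with \cref{r:constructions}. None of that is covered by your argument.

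Your closing observation is the tell-tale sign: if the hypotheses $\{(a,c),(c,a)\}\in\Gamma$ and $\{(b,c),(c,b)\}\notin\Gamma$ were truly irrelevant, the same two-line argument would prove \cref{r:work:a} and \cref{r:work:b} outright and would show that G2 never arises in this branch, contradicting the semilattice case that \cref{r:work:a} explicitly produces. The symmetric application of \cref{r:twoset:dom} in the $(b,a)$-dominating subcase is fine, and indeed the paper opens its proof by discharging the dominating case exactly this way; it is the claim that a dominating fractional polymorphism must exist that has to be withdrawn and replaced by the full analysis.
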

\begin{mylemma}
\label{r:work:b}
Assume A1.
If $\{a,b\} \not\in \Gamma$
and $\{(a,c),(c,a)\},\{(b,c),(c,b)\} \in \Gamma$,
then G1 or G3 holds.
\end{mylemma}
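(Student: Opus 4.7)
The plan is to use the rigidity imposed by the two swap relations on binary polymorphisms, combine it with \cref{r:get:valuations} and \cref{r:np:hard} to force an arithmetical ternary polymorphism, and then assemble these ingredients into a certificate for \cref{r:gwtp}.

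\textbf{Rigidity.} By A1 we have $\Gamma^c \subseteq \Gamma$, so every $f \in \pol^{(2)}(\Gamma)$ is idempotent. Preservation of $\{(a,c),(c,a)\}$ applied to the pair of tuples $(a,c),(c,a)$ gives $(f(a,c),f(c,a)) \in \{(a,c),(c,a)\}$, hence $f|_{\{a,c\}} \in \{\pr_1,\pr_2\}$; analogously $f|_{\{b,c\}} \in \{\pr_1,\pr_2\}$. Thus $\Gamma$ admits no binary polymorphism restricting to a non-projection commutative conservative idempotent operation on either of these pairs.

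\textbf{Splitting.} Since $\{a,b\} \notin \Gamma$, \cref{r:get:valuations} applies. If a binary $(a,b)$- or $(b,a)$-dominating fractional polymorphism exists, observe that $\{a,c\}$ and $\{b,c\}$ lie in $\close{\Gamma} \subseteq \wclose{\Gamma,\Delta}^{(1)} \subseteq \Gamma$ as projections of the swap relations, so \cref{r:twoset:dom} applies and yields G1 or G3. Otherwise we obtain $\nu_a,\nu_b \in \eclose{\Gamma,\Delta}$ with $\nu_a(a)<\nu_a(c)<\nu_a(b)$ and $\nu_b(b)<\nu_b(c)<\nu_b(a)$. Assume henceforth that G3 fails and aim to establish G1.

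\textbf{Arithmetical polymorphism.} Restricted to $\{a,c\} \in \Gamma$, the valuations $\nu_a$ and $\nu_b$ satisfy the hypothesis of \cref{r:np:hard}; the rigidity above rules out its first alternative and our assumption rules out NP-hardness, so there is $m_1 \in \pol^{(3)}(\Gamma)$ arithmetical on $\{a,c\}$. By the symmetric argument there is $m_2 \in \pol^{(3)}(\Gamma)$ arithmetical on $\{b,c\}$. Applying \cref{r:arithmetical} with $C = \{\{a,c\},\{b,c\}\}$ combines these into a single $m \in \pol^{(3)}(\Gamma)$ arithmetical on both pairs.

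\textbf{Assembly, and the main obstacle.} Take $\mathcal{A} = \emptyset$, $f_1 = f_2 = \pr_1$, and let $\mathcal{F}$ consist of those members of $\wclose{\Gamma,\Delta}^{(1)}$ not containing both $a$ and $b$, so that $\mathcal{F}$ includes the singletons together with $\{a,c\}$ and $\{b,c\}$. The projection and arithmetical conditions of \cref{r:gwtp} on every pair $\{x,y\} \subseteq B$ with $B \in \mathcal{F}$ are then immediate from the rigidity and from $m$. The remaining obstacle is the set $D \in \wclose{\Gamma,\Delta}^{(1)} \setminus \mathcal{F}$: we must either place $D$ in $\mathcal{F}$, which forces $m$ to be arithmetical on $\{a,b\}$ and conservative on $D$, or produce a shrinkability certificate from $D$ to a member of $\mathcal{F}$. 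The two valuations $\nu_a,\nu_b$ push $\{a,b\}$ in opposite directions, so a naive shrinking of $D$ to $\{a,c\}$ or $\{b,c\}$ fails. One expects to extract, from $\{a,b\} \notin \wclose{\Gamma,\Delta}$ via \cref{r:subset:fpol}, a binary fractional polymorphism whose support contains some $f$ with $f(a,b)=c$ or $f(b,a)=c$, and to combine it with $m$ via Motzkin's theorem (in the spirit of \cref{r:use:dom}) either to upgrade $m$ to be arithmetical also on $\{a,b\}$ (allowing $D \in \mathcal{F}$), or to exhibit a ternary dominating fractional polymorphism that witnesses shrinkability of $D$. This final step is where the bulk of the technical work lies.
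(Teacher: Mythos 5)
Your first three steps coincide with the paper's proof: the rigidity forced by the swap relations, the reduction via \cref{r:twoset:dom} and \cref{r:get:valuations} to the case where $\nu_a,\nu_b$ exist, and the construction of a single $m \in \pol^{(3)}(\Gamma)$ arithmetical on both $\{a,c\}$ and $\{b,c\}$ via \cref{r:np:hard} and \cref{r:arithmetical}. The gap is your final step, which you yourself flag as unfinished: you only say what ``one expects'' to do with the pair $\{a,b\}$, and the route you gesture at (a Motzkin-style argument producing a ternary dominating fractional polymorphism, or shrinking $D$ out of the way) is not how the difficulty is resolved and, as you note, cannot work by shrinking alone since $\nu_a$ and $\nu_b$ order $a$ and $b$ oppositely.

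The missing argument is the following. Since $\{a,b\} \notin \Gamma \supseteq \wclose{\Gamma,\Delta}^{(1)}$, the relation $\{(a,b),(b,a)\}$ is not wpp-definable (its projection would be $\{a,b\}$), so \cref{r:fpol} applies; taking $\sigma=\nu_a$ and $\sigma=\nu_b$ in turn yields binary polymorphisms $f,g$ with $\{f(a,b),f(b,a)\} \subseteq \{a,c\}$ and $\{g(a,b),g(b,a)\} \subseteq \{b,c\}$ (the inequality $\sigma(f(a,b))+\sigma(f(b,a)) \le \sigma(a)+\sigma(b)$ excludes the other possibilities). One then runs a finite case analysis on the unordered pairs $\{f(a,b),f(b,a)\}$ and $\{g(a,b),g(b,a)\}$, using explicit superpositions (\cref{r:constructions}) to reach, in every case, either an $h \in \pol^{(2)}(\Gamma)$ with $h(a,b)=h(b,a)=c$ --- which combined with $m$ via \cref{r:constructions}(1) yields an operation arithmetical on all three two-element subsets, so $D$ can be placed in $\mathcal{F}$ with $\mathcal{A}=\emptyset$ --- or a tournament pair on $\{a,b\}$, in which case one takes $\{a,b\} \in \mathcal{A}$. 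Without carrying out this case analysis (or an equivalent construction) the certificate for \cref{r:gwtp} is not established, so the proof is incomplete as written.
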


We can now prove the main theorem.
\begin{proof}[Proof of \cref{r:main}]
Let $\Gamma'=\wclose{\Gamma,\Delta}^{(1)} \cup \wclose{\Gamma,\Delta}^{(2)} \cup \Gamma^c \cup \feas(\Delta)$.
Note that if \minhom$(\Gamma',\Delta)$ can be shown to be in \class{PO}
using \cref{r:semilattice} or \cref{r:gwtp}, then so can \minhom$(\Gamma^c \cup \feas(\Delta),\Delta)$.
Furthermore, by \cref{r:red} and \cref{r:get:const} we know that 
\minhom$(\Gamma',\Delta)$ is polynomial time reducible to \minhom$(\Gamma,\Delta)$.
Hence, if \minhom$(\Gamma',\Delta)$ is \class{NP}-hard, then so is \minhom$(\Gamma,\Delta)$.

Clearly, if \csp$(\Gamma')$ is \class{NP}-hard, then so is \minhom$(\Gamma',\Delta)$.
And, if \csp$(\Gamma')$ is not \class{NP}-hard, then it is in \class{PO}.
This follows from~\cite{bulatov}.

If $|\binom D 2 \cap \Gamma'|=3$, then \cref{r:gwtp} can place \minhom$(\Gamma',\Delta)$
in \class{PO} unless \minhom$(\Gamma',\Delta)$ is \class{NP}-hard.
This follows from~\cite[Theorem~12]{minsol3}.

If $|\binom D 2 \cap \Gamma'|<2$, then, by \cref{r:get:two:subsets}, we know that there is $\omega \in \fpol^{(3)}(\Gamma',\Delta)$
that is $(a_1,a_2,a_3)$-dominating for some $\{a_1,a_2,a_3\}=D$.
If $\{a_1,a_2\} \not\in \Gamma'$, then by \cref{r:two:subsets} we know that either $|\binom D 2 \cap \Gamma'|=2$ (a contradiction) or \minhom$(\Gamma',\Delta)$ can be proved to be in \class{PO} by either \cref{r:gwtp} or \cref{r:semilattice}, or \minhom$(\Gamma',\Delta)$ is \class{NP}-hard.
Otherwise $\{a_1,a_2\} \in \Gamma'$.
Since $|\binom D 2 \cap \Gamma'|<2$ it must hold that $\{a_1,a_3\} \not\in \Gamma'$ and $\{a_2,a_3\} \not\in \Gamma'$.
In this case, since $\{a_1,a_2,a_3\}$ is shrinkable to $\{a_1,a_2\}$, it holds that either
\minhom$(\Gamma',\Delta)$ can be proved to be in \class{PO} by either \cref{r:gwtp} or \minhom$(\Gamma',\Delta)$ is \class{NP}-hard.

The only remaining case is $|\binom D 2 \cap \Gamma'|=2$.
In this case the result follows from \cref{r:work:a}, \cref{r:class:one} and \cref{r:work:b}.
\end{proof}

\section* {Acknowledgements}
I am thankful to \PJonsson\ for rewarding discussions.

\newpage
\appendix

\section {Proofs}
\label{s:proofs}

We will use the following notation.
\begin{mydefinition}
For $x,y,z \in D$ we define $\comm x y z = \{ f \in \opers 2 : f(x,y)=f(y,x)=z \}$.
Similarly, for $x,y,z \in D^m$ we define
$\comm{x_1 &\cdots &x_m}{y_1 &\cdots &y_m}{z_1 & \cdots &z_m} = \comm{x_1}{y_1}{z_1} \cap \cdots \cap\comm{x_m}{y_m}{z_m}$.
\end{mydefinition}

\subsection{Proof of \cref{r:use:dom}}
\label{s:r:use:dom}
\newcommand\PP{P^{(k)}}
\newcommand\DD{D^k_\nu}
Let $k \ge 2$ and $a \in D^{k-1}$, $b \in D$ be \st $a_1,\dots,a_{k-1},b$ are distinct.

For $\nu : D \to \Qplusinf$ define $\nu^k : D^k \to \Qplusinf$ by $\nu^k(x_1,\dots,x_k) = \frac{1}{k} \sum_{i=1}^k \nu(x_i)$.
Set
$\DD = \{x \in D^k : \nu^k(x)<\infty\}$,
$\PP = \{g \in \pol^{(k)}(\Gamma) : \nu(g(x))<\infty \text{ for every } \nu \in \Delta \text{ and } x \in \DD \}$, 
and, for $x \in D$, let $\PP_x = \{g \in \PP : g(a_1,\dots,a_{k-1},b)=x\}$.

It is not hard to see that 
the language $(\Gamma,\Delta)$ admits a $(a_1,\dots,a_{k-1},b)$-dominating $k$-ary fractional polymorphism \iff the following system has a solution $(u_g\in \Q : g\in \PP)$.
\begin{align*}
 \sum_{g \in \PP} u_g \nu( g(x) ) &\le \nu^k(x)  & \nu \in \Delta, x \in \DD \\
-u_g &\le 0                                      & g \in \PP\\ 
 \sum_{g \in \PP} u_g &\le  1 \\ 
-\sum_{g \in \PP} u_g &\le -1 \\
-\sum_{g \in \PP_{a_i}} u_g &\le -\frac{1}{k}        & i \in [k-1]\\
 \sum_{g \in \PP_b} u_g &<    \frac{1}{k}
\end{align*}
If the system is unsatisfiable, then, by \cref{r:motzkin}, there are $(v_{\nu,x} \in \Qplus : \nu \in \Delta, x \in \DD)$, $(o_g \in \Qplus : g \in \PP)$, $w_1, w_2 \in \Qplus$, $(y_{a_i} \in \Qplus : i \in [k-1])$ and $y_b \in \Qplus$ \st 
\begin{align*}
\sum_{\nu \in \Delta, x \in \DD} \nu(g(x)) v_{\nu,x} - o_g + w_1 - w_2 - y_{a_i} &= 0 &i \in [k-1], g \in \PP_{a_i}\\
\sum_{\nu \in \Delta, x \in \DD} \nu(g(x)) v_{\nu,x} - o_g + w_1 - w_2 + y_b &= 0 &g \in \PP_b\\
\sum_{\nu \in \Delta, x \in \DD} \nu(g(x)) v_{\nu,x} - o_g + w_1 - w_2       &= 0 &g \in \PP \setminus (\bigcup_{i=1}^{k-1} \PP_{a_i} \cup \PP_b)
\end{align*}
and 
\begin{align*}
\sum_{\nu \in \Delta, x \in \DD} \nu^k(x) v_{\nu,x}
+ w_1 - w_2 - \frac{1}{k} \sum_{i=1}^{k-1} y_{a_i} + \frac{1}{k} y_b &= \alpha,
\end{align*}
where either $\alpha < 0$ or $\alpha = 0$ and $y_b>0$.
Hence, for every $f_1, \dots, f_k \in \PP$ \st $(f_1,\dots,f_k)(a_1,\dots,a_{k-1},b)=(a_1,\dots,a_{k-1},b)$, we have
\begin{align*}
   \sum_{\nu \in \Delta, x \in \DD} \nu^k(x) v_{\nu,x} + \frac{1}{k} \sum_{i=1}^k o_{f_i}
&= \sum_{\nu \in \Delta, x \in \DD} \nu^k((f_1,\dots,f_k)(x)) v_{\nu,x} + \alpha.
\end{align*}
Note that since $\pr_1,\dots,\pr_k \in \PP$ and $(\pr_1,\dots,\pr_k)(a_1,\dots,a_{k-1},b)=(a_1,\dots,a_{k-1},b)$
we must have $\alpha=0$, $o_{\pr_i} = 0$ for $i \in [k]$, and $y_b>0$.
This means that the following is true.
\begin{align*}
&\min_{f \in \PP_{a_i}} \sum_{\nu \in \Delta, x \in \DD} \nu(f(x)) v_{\nu,x} = \sum_{\nu \in \Delta, x \in \DD} \nu(\pr_i(x)) v_{\nu,x} = - w_1 + w_2 + y_{a_i} &i \in [k-1]\\
&\min_{f \in \PP_b} \sum_{\nu \in \Delta, x \in \DD} \nu(f(x)) v_{\nu,x} = \sum_{\nu \in \Delta, x \in \DD} \nu(\pr_k(x)) v_{\nu,x} = - w_1 + w_2 - y_b \\
&\min_{f \in \PP \setminus (\bigcup_{i=1}^{k-1} \PP_{a_i} \cup \PP_b)} \sum_{\nu \in \Delta, x \in \DD} \nu(f(x)) v_{\nu,x} \ge - w_1 + w_2
\end{align*}

Create an instance $I$ of \minhom$(\Gamma,\Delta)$ with variables $D^k$ and measure
\begin{align*}
m(\phi) &= \sum_{\nu \in \Delta, x \in \DD} v_{\nu,x} \nu(\phi(x)) + \epsilon \sum_{\nu \in \Delta, x \in \DD} \nu(\phi(x)),
\end{align*}
where
$\epsilon \in \Q_{>0}$ is chosen small enough \st $\phi \in \argmin_{\phi' \in \PP} m(\phi')$ implies that $\phi \in \argmin_{\phi' \in \PP} \sum_{\nu \in \Delta, x \in \DD} v_{\nu,x} \nu(\phi(x))$.
Such a number $\epsilon$ can always be found.
Note that a solution $\phi$ to $I$ with finite measure is a function $D^k \to D$ \st $\nu(\phi(x))<\infty$ for every $\nu \in \Delta$ and $x \in \DD$.

Pick, for every $g \in \opers k \setminus \pol^{(k)}(\Gamma)$, a relation $R_g \in \Gamma$ \st $g$ does not preserve $R_g$.
Add for each $k$-sequence of tuples $t^1,\dots,t^k \in R_g$ the constraint $(( (t^1_1,\dots,t^k_1), \dots, (t^1_{\ar(R_g)},\dots,t^k_{\ar(R_g)})), R_g)$.
This construction is essentially the second order indicator problem~\cite{indicator}.
Now a solution to $I$ is by construction a $k$-ary polymorphism of $\Gamma$.
Hence, if $\phi$ is a solution to $I$ with finite measure, then $\phi \in \PP$.
Clearly $\pr_1,\dots,\pr_k$ satisfies all constraints and are solutions to $I$ with finite measure.
Since $y_{a_i} \ge 0$ and $y_b>0$, it also holds that $\min_{\phi \in \PP_x} m(\phi) > \min_{\phi \in \PP_{b}} m(\phi)$ for every $x \in D \setminus \{b\}$.
So, with $\nu(x) = \min_{g \in \sol(I) : g(a_1,\dots,a_{k-1},b)=x} m(g)$, we have $\infty > \nu(a_1),\dots,\nu(a_{k-1})$
and also $\nu(c) > \nu(b)$ for every $c \in D \setminus \{b\}$.
Since $\nu \in \eclose{\Gamma,\Delta}$ we are done.

\subsection {Proof of \cref{r:subset:fpol}}
\label{s:r:subset:fpol}
If (1) is true, then (2) must be false. In the rest of the proof we show that if (1) is false, then (2) is true.

For $\nu : D \to \Qplusinf$ define $\nu^k : D^k \to \Qplusinf$ by $\nu^k(x_1,\dots,x_k) = \frac{1}{k} \sum_{i=1}^k \nu(x_i)$.
Assume $R = \{t^1,\dots,t^k\}$.
Set
$\DD = \{x \in D^k : \nu^k(x)<\infty\}$ and
$\PP = \{g \in \pol^{(k)}(\Gamma) : \nu(g(x))<\infty \text{ for every } \nu \in \Delta \text{ and } x \in \DD \}$.
Define $\Omega = \{g \in \PP : g(t^1,\dots,t^k) \not\in R\}$.

It is not hard to see that 
there exists $\omega \in \fpol^{(k)}(\Gamma,\Delta)$ with $f \in \supp(\omega)$ \st $f \in \Omega$ \iff the following system has a solution $(u_g\in \Q : g\in \PP)$.
\begin{align*}
 \sum_{g \in \PP} u_g \nu( g(x) ) &\le \nu^k(x)  & \nu \in \Delta, x \in \DD \\
-u_g &\le 0                                      & g \in \PP\\ 
 \sum_{g \in \PP} u_g &\le  1 \\ 
-\sum_{g \in \PP} u_g &\le -1 \\
-\sum_{g \in \Omega} u_g &<0
\end{align*}
If the system is unsatisfiable, then, by \cref{r:motzkin}, there are $(v_{\nu,x} \in \Qplus : \nu \in \Delta, x \in \DD)$, $(o_g \in \Qplus : g \in \PP)$, $w_1, w_2 \in \Qplus$ and $y \in \Qplus$ \st 
\begin{align*}
\sum_{\nu \in \Delta, x \in \DD} \nu(g(x)) v_{\nu,x} - o_g + w_1 - w_2     &= 0 &g \in \PP \setminus \Omega\\
\sum_{\nu \in \Delta, x \in \DD} \nu(g(x)) v_{\nu,x} - o_g + w_1 - w_2 - y &= 0 &g \in \Omega
\end{align*}
and 
\begin{align*}
\sum_{\nu \in \Delta, x \in \DD} \nu^k(x) v_{\nu,x} + w_1 - w_2 &= \alpha,
\end{align*}
where either $\alpha < 0$ or $\alpha = 0$ and $y>0$.
Hence, for every $f_1,\dots,f_k \in \PP$ \st $(f_1,\dots,f_k)(t^1,\dots,t^k) = (t^1,\dots,t^k)$ (with functions applied component-wise), we have
\begin{align*}
   \sum_{\nu \in \Delta, x \in \DD} \nu^k(x) v_{\nu,x} + \frac{1}{k} \sum_{i=1}^k o_{f_i}
&= \sum_{\nu \in \Delta, x \in \DD} \nu^k((f_1,\dots,f_k)(x)) v_{\nu,x} + \alpha.
\end{align*}
Note that since $\pr_1,\dots,\pr_k \in \PP$ and $(\pr_1,\dots,\pr_k)(t^1,\dots,t^k)=(t^1,\dots,t^k)$
we must have $\alpha=0$, $o_{\pr_i} = 0$ for $i \in [k]$, and $y>0$.
This means that the following is true.
\begin{align*}
\min_{f \in \PP \setminus \Omega} \sum_{\nu \in \Delta, x \in \DD} \nu(f(x)) v_{\nu,x}
&= \sum_{\nu \in \Delta, x \in \DD} \nu(\pr_1(x)) v_{\nu,x} = - w_1 + w_2\\
\min_{f \in \Omega} \sum_{\nu \in \Delta, x \in \DD} \nu(f(x)) v_{\nu,x}
&\ge - w_1 + w_2 + y
\end{align*}

Create an instance $I$ of \minhom$(\Gamma,\Delta)$ with variables $D^k$ and measure
\begin{align*}
m(\phi) &= \sum_{\nu \in \Delta, x \in \DD} v_{\nu,x} \nu(\phi(x)) + \epsilon \sum_{\nu \in \Delta, x \in \DD} \nu(\phi(x)),
\end{align*}
where
$\epsilon \in \Q_{>0}$ is chosen small enough \st $\phi \in \argmin_{\phi' \in \PP} m(\phi')$ implies that $\phi \in \argmin_{\phi' \in \PP} \sum_{\nu \in \Delta, x \in \DD} v_{\nu,x} \nu(\phi(x))$.
Such a number $\epsilon$ can always be found.
Note that a solution $\phi$ to $I$ with finite measure is a function $D^k \to D$ \st $\nu(\phi(x))<\infty$ for every $\nu \in \Delta$ and $x \in \DD$.

Pick, for every $g \in \opers k \setminus \pol^{(k)}(\Gamma)$, a relation $R_g \in \Gamma$ \st $g$ does not preserve $R_g$.
Add for each $k$-sequence of tuples $t^1,\dots,t^k \in R_g$ the constraint $(( (t^1_1,\dots,t^k_1), \dots, (t^1_{\ar(R_g)},\dots,t^k_{\ar(R_g)})), R_g)$.
This construction is essentially the second order indicator problem~\cite{indicator}.
Now a solution to $I$ is by construction is a $k$-ary polymorphism of $\Gamma$.
Hence, if $\phi$ is a solution to $I$ with finite measure, then $\phi \in \PP$.
Clearly $\pr_1,\dots,\pr_k \in \PP \setminus \Omega$ satisfies all constraints and are solutions to $I$ with finite measure.
Since $y>0$ it holds that $\min_{\phi \in \Omega} m(\phi) > \min_{\phi \in \PP \setminus \Omega} m(\phi)$.
So
$\{ ( \phi(t^1_1,\dots,t^k_1), \dots, \phi(t^1_{\ar(R)},\dots,t^k_{\ar(R)}) ) : \phi \in \optsol(I) \} = R$ and we are done.

\subsection {Proof of \cref{r:min:set}}
\label{s:r:min:set}
It is easy to see that (2) implies (1).
Clearly (1) implies (3) as by definition there is a $\nu$-min set function
$f$ that preserves every $R \in \Gamma$, and therefore also every $R \in \close{\Gamma}$.

We now show that (3) implies (2).
For $S \in 2^D \setminus \{\emptyset\}$ let $U(S) = \bigcap \{ S' \in \close{\Gamma}^{(1)} : S' \supseteq S\}$.
Let $g$ be any set function that preserves $\Gamma$ (by (3) such a function must exist).
Define $f(S) = g( M(U(S)) )$ where $M(X) = \argmin_{x \in X} \nu(x)$.
Note that for all $S \in 2^D \setminus \{\emptyset\}$ it holds that $M(U(S)) \ne \emptyset$ since by (3) and by the fact that $U(S) \in \close{\Gamma}$ it holds that $U(S) \cap M( U(S) ) \ne \emptyset$.
It follows that $f$ is a set function.
Since $\wclose{\Gamma,\{\nu\}}^{(1)} \subseteq \Gamma$ and $g$ preserves $\Gamma$ it must hold that $f(S) \in M(U(S))$, so $\nu(f(S)) \in \{ \nu(x) : x \in M(U(S)) \}$.
What remains is to show that $f$ preserves $\Gamma$.

Let $R$ be a $n$-ary relation in $\Gamma$ and $P \subseteq R$.
Note that $R' = R \cap (U(\pr_1(P)) \times \dots \times U(\pr_n(P))) \in \close{\Gamma}$.
Note also that by construction $\pr_i(R') = U(\pr_i(P))$,
so by (3) we know
\begin{align*}
R'' 
&= R' \cap (M(\pr_1(R')) \times \dots \times M(\pr_n(R'))) \\
&= R' \cap (M(U(\pr_1(P))) \times \dots \times M(U(\pr_n(P))))
\ne \emptyset.
\end{align*}
Since $M(U(\pr_i(P))) \in \wclose{\Gamma,\{\nu\}}^{(1)} \subseteq \Gamma$ we have
$R'' \in \close{\Gamma}$, and $g$ must preserve $R''$.
Hence, 
\begin{align*}
(f(\pr_1(P)),\dots,f(\pr_n(P)) &= ( g( M(U(\pr_1(P)))),\dots, g( M(U(\pr_n(P))))) \\
                               &= ( g( \pr_1(R'')),\dots, g( \pr_n(R'')) )
                               \in R'' \subseteq R.
\end{align*}

Note that if $\nu$ is injective and (4) is true, then $M(U(S))$ is a one-element set.
This means that $f(S) = h(M(U(S)))$ where $h(\{x\})=x$ for every $x \in D$ is a set function that preserves every $R\in\close{\Gamma}$.
Hence, (3) is true.
Clearly (3) implies (4), so the proof is complete.

\newcommand{\Pt}{\pol^{(2)}(\Gamma)}

\subsection {Proof of \cref{r:mul:is:fpol}}
\label{s:r:mul:is:fpol}
Note that, since $g,h \in \Pt$ implies $g[h,\overline{h}] \in \Pt$,
\begin{align*}
\sum_{f \in \Pt} \omega^2(f)
&= 
\sum_{f \in \Pt}
\sum_{\substack{g, h \in \Pt:\\ g[h,\overline{h}] = f}} 
\omega(g) \omega(h)
= \sum_{g, h \in \Pt} \omega(g) \omega(h)
= 1
\end{align*}
and that
\begin{align*}
  \sum_{f \in \Pt} \omega^2(f) \nu( f(x,y) )
&=
  \sum_{f \in \Pt}
  \sum_{\substack{g, h \in \Pt:\\ g[h,\overline{h}] = f}} 
  \omega(g) \omega(h) \nu( g[h,\overline{h}](x,y) ) \\
&=
  \sum_{h \in \Pt} \omega(h) \sum_{g \in \Pt} \omega(g) \nu( g[h,\overline{h}](x,y) ) \\
&\le
  \sum_{h \in \Pt} \omega(h) \frac 1 2 ( \nu(h(x,y)) + \nu(\overline{h}(x,y)) ) \\
&=
  \frac{1}{2}
  \big( 
  \sum_{h \in \Pt} \omega(h) \nu( h(x,y) )
  +
  \sum_{h \in \Pt} \omega(h) \nu( h(y,x) )
  \big) \\
&\le
  \frac 1 2 ( \nu(x) + \nu(y) ).
\end{align*}

\subsection {Proof of \cref{r:min:exists}}
\label{s:r:min:exists}
Unless $\Omega = \emptyset$ we can pick $w^*$ as the function $f \mapsto u_f$ given by the optimal solution to the following linear program.
\begin{flalign*}
&\mathrlap{\text{maximise } \sum_{x \in D^2, g\in \Pt : g(x) = \overline{g}(x)} u_g}&&                   &\\
&\rlap{subject to}& u_g &\ge 0                                                                           &\mathllap{g \in \Pt}\\ 
&                 & \sum_{g\in \Pt} u_g &=  1                                                            &\\
&                 & \sum_{g\in \Pt : \nu(g(x,y))<\infty} u_g \nu(g(x,y)) &\le \frac{1}{2}(\nu(x)+\nu(y)) &\mathllap{\begin{aligned}x,y \in D, \nu \in \Delta:&\\ \nu(x),\nu(y) < \infty&\end{aligned}} \\
&                 & \sum_{g\in \Pt : \nu(g(x,y))=\infty} u_g &=0                                         &\mathllap{\begin{aligned}x,y \in D, \nu \in \Delta:&\\ \nu(x),\nu(y) < \infty&\end{aligned}} \\
&                 & \sum_{g\in \Pt : g(x) = \overline{g}(x)} u_g &\ge \beta(x)                           &\mathllap{x \in D^2}
\end{flalign*}
An optimal solution to this finite and bounded program clearly exists.

\subsection {Proof of \cref{r:min:commutative}}
\label{s:r:min:commutative}
Pick any $\omega \in \Pi$.
Define $\beta : D^2 \to \Qplus$ as follows.
Set $\beta(x,y) = C_{\omega}(x,y)$ if $\{x,y\}=s$ for some $s \in S$, otherwise $\beta(x,y)=0$.
If follows from \cref{r:min:exists} that there is some $\omega^* \in \Omega$ that maximises $M$ (with $\Omega$, $M$ as defined in \cref{r:min:exists}).

Assume there is $p \in \supp(\omega^*)$, $q \in D^2$ and $s \in S$ \st $\{p(q),\overline{p}(q)\}=s$.
Note that
\begin{align*}
M((\omega^*)^2)
&=
  \sum_{x \in D^2} C_{(\omega^*)^2}(x) \\
&=
  \sum_{x \in D^2}
  \sum_{\substack{f \in \Pt:\\ f(x)=\overline{f}(x)}}
  \sum_{\substack{g,h \in \Pt:\\ g[h,\overline{h}] = f}} \omega^*(g) \omega^*(h) \\
&=
  \sum_{x \in D^2}
  \sum_{\substack{g, h \in \Pt:\\ g((h,\overline{h})(x)) = \overline{g}((h,\overline{h})(x))}} \omega^*(g) \omega^*(h) \\
&=
  \sum_{x \in D^2} \sum_{\substack{h \in \Pt:\\h(x)=\overline{h}(x)}}
  \sum_{g \in \Pt} \omega^*(g) \omega^*(h) 
  +
  \sum_{x \in D^2} \sum_{\substack{h \in \Pt:\\h(x)\ne\overline{h}(x)}}
  \sum_{\substack{g \in \Pt:\\ g((h,\overline{h})(x)) = \overline{g}((h,\overline{h})(x))}} \omega^*(g) \omega^*(h) \\
&\ge
  \sum_{x \in D^2} \sum_{\substack{h \in \Pt:\\h(x)=\overline{h}(x)}}
  \sum_{g \in \Pt} \omega^*(g) \omega^*(h) 
  +
  \omega^*(p) C_{\omega^*}((p,\overline{p})(q)) \\
&=
  \sum_{x \in D^2} C_{\omega^*}(x)
  +
  \omega^*(p) C_{\omega^*}((p,\overline{p})(q)) \\
&>
  M(\omega^*).
\end{align*}
So $(\omega^*)^2 \in \Omega$ which contradicts that $\omega^*$ is optimal.

\subsection {Proof of \cref{r:get:valuations}}
\label{s:r:get:valuations}
Assume that there is no $\omega \in \fpol^{(2)}(\Gamma,\Delta)$ that is $(a,b)$ or $(b,a)$-dominating.
By \cref{r:use:dom} we know that there are $\nu_1,\nu_2 \in \eclose{\Gamma,\Delta}$ \st 
$\argmin_{x \in D} \nu_1(x) = \{a\}$, $\argmin_{x \in D} \nu_2(x) = \{b\}$
and $\nu_1(x),\nu_2(x)<\infty$ for $x \in \{a,b\}$.
This means, since $\{a,b\} \not\in \Gamma$,
that $\nu_1(x),\nu_2(x) < \infty$ for $x \in D$.
It is not hard to see that, since $\{a,b\} \not\in \Gamma \supseteq \wclose{\Gamma,\Delta}^{(1)}$ we must have 
$\nu_1(a)<\nu_1(c)<\nu_1(a)$ and $\nu_2(b)<\nu_2(c)<\nu_2(a)$
as otherwise there is $\alpha > 0$ \st $\argmin_{x \in D} (\alpha \nu_1(x) + \nu_2(x)) = \{a,b\}$.

\subsection {Proof of \cref{r:two:subsets}}
\label{s:r:two:subsets}
Since $\omega$ is $(a,b,c)$-dominating we have, using \cref{r:measure}, for any instance $I$ of \minhom$(\Gamma,\Delta)$ and any variable $v$ \st $\{a,b\} \subseteq \{\phi(v) : \phi \in \sol(I) \}$,
\begin{align*}
\frac{3 W^\omega_a - 1}{1-3 W^\omega_c} m(\specialsol{I}{v}{a}) + \frac{3 W^\omega_b - 1}{1-3 W^\omega_c} m(\specialsol{I}{v}{b}) &\le m(\specialsol{I}{v}{c}).
\end{align*}
Note that the coefficients in 
the left-hand side
are non-negative and sum to one.

We will show that if there is $\psi\in\fpol^{(2)}(\Gamma,\Delta)$ that is $(a,b)$-dominating or $(b,a)$-dominating,
then either $\{a,c\},\{b,c\} \in \wclose{\Gamma,\Delta}$, or G1 or G3 is true.
Assume we have such a fractional polymorphism $\psi$ and \wlg that $\psi$ is $(a,b)$-dominating.
We have, again with non-negative coefficients summing to one;
\begin{align*}
\frac{2W^\psi_a(a,b)}{1-2W^\psi_b(a,b)} m(\specialsol{I}{v}{a}) + \frac{2W^\psi_c(a,b)}{1-2W^\psi_b(a,b)} m(\specialsol{I}{v}{c}) &\le m(\specialsol{I}{v}{b}).
\end{align*}
Clearly this implies that $m(\specialsol{I}{v}{a}) \le m(\specialsol{I}{v}{c})$ and $m(\specialsol{I}{v}{a}) \le m(\specialsol{I}{v}{b})$.
So $\{a,b,c\}$ is reducible to $\{a,b\}$ and $\{a,b\}$ is reducible to $\{a\}$, so $\{a,b,c\}$ is reducible to $\{a\}$.

If $\{a,c\},\{b,c\} \in \wclose{\Gamma,\Delta}$, then we are done.
Otherwise at most one of $\{a,c\},\{b,c\}$ is in $\wclose{\Gamma,\Delta}$.
It follows from \cref{r:np:hard} that either G1 or G3 is true.

Otherwise there is no $\psi\in\fpol^{(2)}(\Gamma,\Delta)$ that is $(a,b)$-dominating or $(b,a)$-dominating.
By \cref{r:get:valuations} there are $\nu_1,\nu_2 \in \eclose{\Gamma,\Delta}$ \st $\nu_1(a)<\nu_1(c)<\nu_1(a)$ and $\nu_2(b)<\nu_2(c)<\nu_2(a)$.
Consider the following cases.
\begin{itemize}
\item
$\{a,c\},\{c,b\} \in \wclose{\Gamma,\Delta}$

Here we are done.

\item
$\{a,c\} \in \wclose{\Gamma,\Delta}$ and $\{c,b\} \not\in \wclose{\Gamma,\Delta}$

If follows from the existence of $\nu_1,\nu_2$, \cref{r:min:set}, \cref{r:cross} and the fact that
$\{a,b\},\{b,c\} \not\in \wclose{\Gamma,\Delta}$ that either
(1)
$R_1=\{(a,c),(c,a)\} \in \wclose{\Gamma,\Delta}$,
(2)
$R_2=\{(a,b),(b,a),(c,c)\} \in \wclose{\Gamma,\Delta}$, or
(3)
there are set functions $g_1,g_2$ that preserve $\Gamma$ and satisfy $\nu_i(g_i(X)) = \min\{ \nu_i(x):x \in \bigcap_{Y \in \close{\Gamma}:Y \supseteq X} Y\}$ for $i \in [2]$.
\begin{enumerate}
\item
By \cref{r:fpol} we know, since $\{a,b\},\{b,c\} \not\in \wclose{\Gamma,\Delta}$
and because of $\nu_2$,
that there is $\omega \in \fpol^{(2)}(\Gamma,\Delta)$ with $f,g \in \supp(\omega)$ \st $f(b,c)=f(c,b)=b$
and $\{g(a,b),g(b,a)\} \subseteq \{b,c\}$.
From this it follows that $\omega^2$ have operations $f',g' \in \supp(\omega^2)$ \st $f'|_{\{b,c\}}$ and $g'|_{\{a,b\}}$
are commutative and $g'(a,b)\in \{b,c\}$.
This means, by \cref{r:min:commutative}, that there is $\varpi \in \fpol^{(2)}(\Gamma,\Delta)$ \st every $f \in \supp(\varpi)$
and every $x \in D^2$ it holds that either $f|_{x}$ is commutative, or $\{f(x),\overline{f}(x)\}=\{a,c\}$.

\begin{myclaim}
There is $f_1,f_2,f_3 \in \pol^{(2)}(\Gamma)$ \st $f_1 \in \comm{b}{c}{b}$, $f_2 \in\comm{b}{c}{a}$ and $f_3 \in\comm{b}{c}{c}$.
\end{myclaim}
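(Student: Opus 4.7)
The plan is to establish $f_1$, $f_2$ and $f_3$ one after the other. For $f_1$ we can simply take $f_1 = f$, since the binary polymorphism $f$ produced earlier by \cref{r:fpol} already satisfies $f(b,c) = f(c,b) = b$.

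For $f_2$ and $f_3$ I would consider the binary indicator relation
\[
R^* \;=\; \{ (h(b,c), h(c,b)) : h \in \pol^{(2)}(\Gamma) \}.
\]
By the standard second-order indicator construction (cf.\ the proof of \cref{r:subset:fpol}), $R^*$ is pp-definable from $\Gamma$, so $R^* \in \close{\Gamma}^{(2)} \subseteq \wclose{\Gamma,\Delta}^{(2)} \subseteq \Gamma$ by A1. It contains $(b,c)$ and $(c,b)$ via the two projections and $(b,b)$ via $f_1$, and it is invariant under the coordinate swap (replace $h$ by $\overline h$), so $\pr_1(R^*) = \pr_2(R^*) \supseteq \{b,c\}$. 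Projections of $\Gamma$-relations are themselves in $\Gamma$ (existential quantification is a pp-definition), and since $\{b,c\} \not\in \Gamma$ (because $\{b,c\} \not\in \wclose{\Gamma,\Delta}$) and the only unary $\Gamma$-relation that contains both $b$ and $c$ is $D$, we must have $\pr_1(R^*) = D$. Hence some $h \in \pol^{(2)}(\Gamma)$ satisfies $h(b,c) = a$. If moreover $h(c,b) = a$, we take $f_2 = h$; otherwise $h(c,b) \in \{b,c\}$ and I would form iterated superpositions of $h$ with $\overline h$, with $f_1$, and with the polymorphism $g \in \supp(\omega)$ from the earlier application of \cref{r:fpol} (which satisfies $\{g(a,b), g(b,a)\} \subseteq \{b,c\}$), using the $R_1$-forced identity $\{h(a,c), h(c,a)\} = \{a,c\}$ (from $R_1 = \{(a,c),(c,a)\} \in \Gamma$) to collapse the value pair at $(b,c),(c,b)$ onto the diagonal $(a,a)$. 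A completely symmetric argument, starting from a polymorphism realising $c$ in $\pr_1(R^*) = D$, yields $(c,c) \in R^*$ and hence $f_3$.

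The main obstacle is the case analysis that arises after the reduction $h(c,b) \in \{b,c\}$: one has to chase the possible values of $h$ on the remaining non-diagonal pairs $(a,b)$, $(b,a)$, $(a,c)$, $(c,a)$, using idempotence (which holds because $\Gamma^c \subseteq \Gamma$) together with the $R_1$-swap rule that forces every polymorphism of $\Gamma$ to act on $\{a,c\}$ either as a projection or as the swap. The analysis terminates because $R^* \in \Gamma$ is a fixed binary relation on a three-element domain and every composition we form remains inside $\pol^{(2)}(\Gamma)$, so the resulting pairs cannot escape $R^*$ and must eventually populate the missing diagonals.
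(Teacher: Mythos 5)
Your identification of $f_1$ and your derivation that some $h\in\pol^{(2)}(\Gamma)$ satisfies $h(b,c)=a$ are both fine and match the paper's first step (the paper phrases it as: since $\{b,c\}\not\in\close{\Gamma}$ there is $q$ with $q(b,c)=a$). The gap is in everything after that. What the claim requires for $f_2$ and $f_3$ is not that $a$ and $c$ occur in $\pr_1(R^*)$ but that the \emph{diagonal} pairs $(a,a)$ and $(c,c)$ lie in $R^*$, i.e.\ that some polymorphisms are commutative on $(b,c)$ with values $a$ and $c$. Your closing assertion that the iterated compositions ``cannot escape $R^*$ and must eventually populate the missing diagonals'' is not a proof and is not true on purely relational grounds: $R^*$ is the subalgebra of $D^2$ generated by $(b,c)$ and $(c,b)$, and nothing prevents its diagonal from being exactly $\{b\}$ (note $\{b\}\in\Gamma$ since $\Gamma^c\subseteq\Gamma$, so $\{x:(x,x)\in R^*\}=\{b\}$ is a perfectly consistent pp-definable outcome). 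Indeed, if you try the natural symmetrisation $h[h,\overline h]$ with $h(b,c)=a$, $h(c,b)=c$, the $R_1$-swap rule sends $(b,c),(c,b)$ to the pair $\{h(a,c),h(c,a)\}=\{a,c\}$ in some order, i.e.\ you stay on an off-diagonal $\{a,c\}$-pair and never reach $(a,a)$ this way.

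The missing ingredient is the valued one, which is exactly where the paper goes next: it works inside the canonicalised fractional polymorphism $\varpi$ (every support operation is, on every pair, either commutative or an $\{a,c\}$-swap) and uses the inequality for $\nu_1$ (with $\nu_1(a)<\nu_1(c)<\nu_1(b)$) at the arguments $(b,c)$ and $(c,b)$ to conclude that not every commutative support operation can take the value $b$ there; hence some $g\in\supp(\varpi)$ has $g(b,c)=g(c,b)\in\{a,c\}$ or $\{g(b,c),g(c,b)\}=\{a,c\}$. Only then do the explicit compositions with $p$ (where $p(b,a)=c$) and $q$ (where $q(b,c)=a$), namely $q[f_1,g]$ and $p[f_1,g]$ and their variants, produce $f_2\in\comm{b}{c}{a}$ and $f_3\in\comm{b}{c}{c}$. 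Your sketch never invokes $\nu_1$ or the structure of $\varpi$ on the pair $(b,c)$, so the case analysis you defer cannot be made to terminate in the claimed way; you would need to import this fractional-polymorphism argument to close the proof. (The same objection applies to your ``completely symmetric argument'' for $f_3$: $c\in\pr_1(R^*)$ holds trivially via $\pr_2$ and gives no information about $(c,c)\in R^*$.)
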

\begin{proof}
There is $g \in \supp(\varpi)$ \st $g \in \comm{b}{c}{b}$.

Since $\{a,b\},\{b,c\} \not\in \close{\Gamma}$ there are $p,q \in \pol^{(2)}(\Gamma)$ \st $p(b,a)=c$ and $q(b,c)=a$.
Because of $\nu_1$ there must be $g \in \supp(\varpi)$ \st either $g(b,c)=g(c,b) \in \{c,a\}$
or $\{g(b,c),g(c,b)\}=\{a,c\}$.
In the first case we can pick $f_2,f_3$ as $q[f_1,g],g$ or $g,p[f_1,g]$.
Consider now the latter case.
Assume \wlg that $g(b,c)=a$ and $g(c,b)=c$.
Here $p[g,\overline{g}] \in\comm{b}{c}{c}$  and $q[f_1,p'] \in\comm{b}{c}{a}$.
\end{proof}

\begin{myclaim}
There is $f_4,f_5 \in \pol^{(2)}(\Gamma)$ \st $f_4 \in\comm{b}{a}{c}$ and $f_5 \in\comm{b}{a}{a}$.
\end{myclaim}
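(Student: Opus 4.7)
The plan is to mirror the strategy of the previous claim but now centered at the point $(b,a)$ rather than $(b,c)$. The ingredients remain unchanged: the canonicalised $\varpi \in \fpol^{(2)}(\Gamma,\Delta)$, whose support is either commutative on a given pair or takes values in $\{a,c\}$ there; the valuation $\nu_1 \in \eclose{\Gamma,\Delta}$ with $\nu_1(a)<\nu_1(c)<\nu_1(b)$; the polymorphisms $p,q \in \pol^{(2)}(\Gamma)$ with $p(b,a)=c$ and $q(b,c)=a$; the operations $f_1,f_2,f_3$ produced in the previous claim; and the fact that every binary polymorphism preserves $\{a,c\} \in \Gamma$ together with the relation $R_1=\{(a,c),(c,a)\} \in \Gamma$.

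First I would extract, using that $\varpi \in \Pi$ for an $S$ containing $\{a,b\}$ (this is how $\varpi$ was produced via \cref{r:min:commutative}), an operation $h \in \supp(\varpi)$ that is commutative on $\{a,b\}$. If $h(b,a) \in \{a,c\}$ then $h$ is already one of $f_4$ or $f_5$, and only the other remains to be produced; so I may assume $h \in \comm{b}{a}{b}$, the natural analogue at the new point of the operation $f_1$ of the previous claim. Next I would apply the fractional polymorphism inequality for $\varpi$ at $(b,a)$ against $\nu_1$: since $\tfrac{1}{2}(\nu_1(b)+\nu_1(a))<\nu_1(b)$, not every $g \in \supp(\varpi)$ can satisfy $g(b,a)=g(a,b)=b$, so by $\varpi$'s dichotomy some $g \in \supp(\varpi)$ lies in $\comm{b}{a}{a}$, in $\comm{b}{a}{c}$, or satisfies $\{g(b,a),g(a,b)\}=\{a,c\}$.

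In either commutative sub-case, $g$ is already one of $f_4,f_5$, and the missing operation is obtained by combining $h$ and $g$ through $p$ or $q$ in the same manner in which $p[f_1,g]$ and $q[f_1,g]$ were used at $(b,c)$ in the previous claim: at $(b,a)$ the first argument becomes $b$ under $h$ and the second collapses to a predictable value under $g$, so $p[h,g]$ and $q[h,g]$ evaluate to $p(b,g(b,a))$ and $q(b,g(b,a))$, and the defining equalities $p(b,a)=c$ and $q(b,c)=a$ hit the missing target while the symmetric evaluation at $(a,b)$ gives the same output since $h$ and $g$ are commutative on $\{a,b\}$. In the non-commutative sub-case, say $g(b,a)=a$ and $g(a,b)=c$, I would combine $p$ and $q$ with $g,\overline{g}$ and with the already constructed $f_1,f_2,f_3$, using preservation of $\{a,c\}$ and $R_1$ to control the behaviour of $p,q$ on $\{a,c\}^2$, and case-split on which of the two patterns in $R_1$ is realised by $(p(a,c),p(c,a))$ and $(q(a,c),q(c,a))$, producing in each branch a composition that collapses $(b,a)$ and $(a,b)$ to the required common output.

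The main obstacle is precisely this last step: since $p$ and $q$ are pinned down on only a single input pair each, their values on mixed pairs from $\{a,c\}$ are only constrained to lie in $R_1$, not to be commutative. Commutativity of the final composition therefore has to be forced by replacing one of the arguments by a suitable $f_i$ from the previous claim whenever $p$ or $q$ alone would permute the two values of $R_1$ instead of identifying them, so that the resulting operation evaluates to the same element of $\{a,c\}$ at both $(b,a)$ and $(a,b)$. This mirrors exactly the analogous non-commutative branch of the first claim's argument and is the only step that genuinely needs fresh bookkeeping.
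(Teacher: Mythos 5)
There is a genuine gap, and it sits exactly where you flag ``fresh bookkeeping'': the sub-case in which the witness $g\in\supp(\varpi)$ outside $\comm{b}{a}{b}$ satisfies $\{g(a,b),g(b,a)\}=\{a,c\}$. Your plan is to symmetrise such a $g$ by composing with $p$, $q$ and $f_1,f_2,f_3$, ``using preservation of $\{a,c\}$ and $R_1$ to control the behaviour of $p,q$ on $\{a,c\}^2$''. But that preservation works against you: since $\{a,c\},R_1=\{(a,c),(c,a)\}\in\Gamma$ and every binary polymorphism is idempotent (as $\Gamma^c\subseteq\Gamma$), every $f\in\pol^{(2)}(\Gamma)$ restricted to $\{a,c\}$ is a projection, so $(f(a,c),f(c,a))$ is always again a swapped pair. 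Hence any composition whose inner terms evaluate to $(a,c)$ at $(b,a)$ and to $(c,a)$ at $(a,b)$ still produces swapped outputs; the swap propagates through every term you can form from these ingredients, and this branch cannot be closed by the route you describe. (Your treatment of the two commutative sub-cases via $p[h,g]$ and $q[h,g]$ is fine.)

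The paper avoids this branch altogether by testing $\varpi$ at $(a,b)$ against $\nu_2$ rather than $\nu_1$. Since $\nu_2(b)<\nu_2(c)<\nu_2(a)$, both a swap of $\{a,c\}$ and the commutative value $a$ contribute strictly more than $\frac{1}{2}(\nu_2(a)+\nu_2(b))$ to the fractional-polymorphism inequality, so some $g\in\supp(\varpi)$ must satisfy $g(a,b)=g(b,a)\in\{b,c\}$. If the common value is $c$, take $f_4=g$; if it is $b$, take $p\in\pol^{(2)}(\Gamma)$ with $p(a,b)=c$ (which exists because $\{a,b\}\not\in\close{\Gamma}$), set $p'=p[\pr_1,g]$ (so $p'(a,b)=c$ and $p'(b,a)=b$) and $f_4=f_3[p',\overline{p'}]\in\comm{b}{a}{c}$. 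Finally $f_5=f_2[\pr_1,f_4]\in\comm{b}{a}{a}$, using that $f_2|_{\{a,c\}}$ is wlog $\pr_1$. Replacing your use of $\nu_1$ by this use of $\nu_2$ removes the problematic sub-case, after which your compositions for the remaining cases go through.
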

\begin{proof}
Because of $\nu_2$ there must be $g \in \supp(\varpi)$ \st $g(a,b)=g(b,a) \in \{b,c\}$.
Since $\{a,b\} \not\in \close{\Gamma}$ there is $p \in \pol^{(2)}(\Gamma)$ \st $p(a,b)=c$.
So, if $g(a,b)=g(b,a)=b$, then $p'=p[\pr_1,g]$ satisfies $p'(a,b)=c$ and $p'(b,a)=b$.
Now $f_3[p',\overline{p'}] \in\comm{b}{a}{c}$.

Assume \wlg that $f_2|_{\{a,c\}}=\pr_1$.
We can pick $f_5 = f_2[\pr_1,f_4]$.
\end{proof}

\begin{myclaim}
There is $f_6 \in \pol^{(2)}(\Gamma)$ \st $f_6 \in\comm{b}{a}{b}$.
\end{myclaim}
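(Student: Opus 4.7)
The plan is to exploit two facts: the relation $R_1 = \{(a,c),(c,a)\}$ is preserved by every binary polymorphism, so any such polymorphism restricted to $\{a,c\}$ is either $\pr_1$ or $\pr_2$; and we already have $f_4 \in \comm{b}{a}{c}$ and $f_1 \in \comm{b}{c}{b}$. Intuitively, $f_4$ sends $\{a,b\}$-inputs to $c$, while $f_1$ sends $\{b,c\}$-inputs to $b$, so a carefully staged superposition should send $\{a,b\}$-inputs all the way to $b$.

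First I would split on the behavior of $f_1$ on $\{a,c\}$. Next I would construct an intermediate operation $g \in \pol^{(2)}(\Gamma)$ satisfying $\{g(a,b), g(b,a)\} = \{b,c\}$. Concretely, I would take $g = f_1[f_4, \pr_2]$ when $f_1(a,c) = a$ (so $f_1(c,a) = c$), giving $g(a,b) = f_1(c,b) = b$ and $g(b,a) = f_1(c,a) = c$; otherwise I would take $g = f_1[\pr_1, f_4]$, giving $g(a,b) = f_1(a,c) = c$ and $g(b,a) = f_1(b,c) = b$.

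Finally I would set $f_6 = f_1[g, \overline{g}]$. Since the ordered pair $(g(a,b), g(b,a))$ is either $(b,c)$ or $(c,b)$ in the two subcases, and $f_1(b,c) = f_1(c,b) = b$, both $f_6(a,b) = f_1(g(a,b), g(b,a))$ and $f_6(b,a) = f_1(g(b,a), g(a,b))$ evaluate to $b$, so $f_6 \in \comm{b}{a}{b}$. The main obstacle is the barrier imposed by $R_1$: it prevents any polymorphism from subsequently reducing an output in $\{a,c\}$ back to $b$. The two-stage construction sidesteps this by using $f_1$ once in the inner step to force an intermediate value in $\{b,c\}$, and then applying $f_1$ again at the outer step to collapse to $b$.
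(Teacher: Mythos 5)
Your construction is correct, and it takes a genuinely different -- and more elementary -- route than the paper's own proof. Your case split is sound: since $R_1=\{(a,c),(c,a)\}\in\wclose{\Gamma,\Delta}^{(2)}\subseteq\Gamma$, every binary polymorphism maps the pair of tuples of $R_1$ back into $R_1$, so either $f_1(a,c)=a,\ f_1(c,a)=c$ or $f_1(a,c)=c,\ f_1(c,a)=a$. In the first case $g=f_1[f_4,\pr_2]$ gives $(g(a,b),g(b,a))=(b,c)$, in the second $g=f_1[\pr_1,f_4]$ gives $(c,b)$, and then $f_6=f_1[g,\overline{g}]$ evaluates to $b$ on both $(a,b)$ and $(b,a)$ because $f_1(b,c)=f_1(c,b)=b$; as a superposition of polymorphisms, $f_6\in\pol^{(2)}(\Gamma)$. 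The paper argues quite differently: it forms the smallest pp-definable binary relation $R$ containing $(a,b)$ and $(b,a)$, shows $(c,b)\in R$ by contradiction (otherwise composing with $R_1$ and minimising $\nu_1(x)+\beta\nu_2(z)$ over a ternary relation would exhibit $\{b,c\}$ as wpp-definable, contrary to the case hypothesis), and then closes under $f_1$ to obtain $(b,b)\in R$, which yields $f_6$ since $R$ is generated by two tuples. Your approach buys an explicit term and dispenses entirely with the cross/wpp machinery, the valuations $\nu_1,\nu_2$, and the hypothesis $\{b,c\}\not\in\wclose{\Gamma,\Delta}$; it needs only $f_1$, $f_4$, and the fact that $R_1$ and the singletons lie in $\Gamma$. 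The paper's relational argument is less constructive but stays within the idiom used throughout the surrounding case analysis.
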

\begin{proof}
Let $R = \bigcap \{ S \in \close{\Gamma}^{(2)} : (a,b),(b,a) \in S \}$.
If $(c,b) \not\in R$, then using $P=R_1 \circ R$ (note that because of $f_4$ we have $(c,c)\in R$) we can choose $\beta$ \st $\pr_{3}( \argmin_{x,y,z \in D : (x,y,z) \in R} (\nu_1(x) + \beta \nu_2(z)) = \{b,c\}$.
This contradict that $\{b,c\}\not\in\wclose{\Gamma,\Delta}$, so $(c,b)\in R$.
This means, because of $f_1$, that $(b,b) \in R$.
Since $R$ is generated from the two tuples $(a,b),(b,a)$ there is some $f_6 \in \pol^{(2)}(\Gamma)$ \st $f_6 \in\comm{b}{a}{b}$.
\end{proof}

\begin{myclaim}
We can assume \wlg that $\{f_i(x,y),f_i(y,x)\} \in \{\{a\},\{b\},\{c\},\{a,c\}\}$ for every $i \in [6]$ and $\{x,y\} \in \{ \{a,b\},\{b,c\} \}$.
\end{myclaim}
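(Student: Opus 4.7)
The plan is to post-compose each $f_i$ uniformly with a well-behaved operation taken from $\supp(\varpi)$. Concretely, pick any $g \in \supp(\varpi)$ and set $f_i'(x,y) = g(f_i(x,y), f_i(y,x))$ for each $i \in [6]$. Since $g$ and $f_i$ are polymorphisms of $\Gamma$, so is $f_i'$, so this is a legitimate replacement. The intuition is that $\varpi$ was already canonicalised so that on every pair in $D^2$ the output multiset of any $g \in \supp(\varpi)$ is either a singleton or equal to $\{a,c\}$, and this canonical form will be inherited by $f_i'$ on every pair, regardless of what $f_i$ did there.

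The first thing to check is that $f_i'$ retains the commutativity data of $f_i$ on its good pair. Each $f_i$ satisfies $f_i(x,y) = f_i(y,x) = z$ for exactly one $(x,y) \in \{(b,c),(b,a)\}$ and the appropriate $z \in \{a,b,c\}$. On that pair we get $f_i'(x,y) = g(z,z) = f_i'(y,x)$. Because assumption A1 gives $\{d\} \in \Gamma$ for every $d \in D$, every polymorphism of $\Gamma$ is idempotent, so $g(z,z)=z$. Hence $f_i' \in \comm{x}{y}{z}$, and in particular $\{f_i'(x,y), f_i'(y,x)\}=\{z\}$ lies in the allowed family.

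Next I would handle the other pair. Fix any $\{x,y\} \in \{\{a,b\},\{b,c\}\}$ and let $S = \{f_i(x,y), f_i(y,x)\}$. If $S = \{u\}$ is a singleton, then $\{f_i'(x,y), f_i'(y,x)\} = \{g(u,u)\} = \{u\}$ by idempotency of $g$. If $|S|=2$, say $S=\{u,v\}$ with $u \ne v$, then $\{f_i'(x,y), f_i'(y,x)\} = \{g(u,v), g(v,u)\}$, and the canonical property of $g \in \supp(\varpi)$ forces this set to be either a singleton or $\{a,c\}$. In every case $\{f_i'(x,y), f_i'(y,x)\} \in \{\{a\},\{b\},\{c\},\{a,c\}\}$, so replacing $f_i$ by $f_i'$ establishes the claim.

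The only delicate point is that the single transformation $f_i \mapsto g(f_i,\overline{f_i})$ must simultaneously preserve the required commutativity on the good pair and normalise the image on the other pair; idempotency of $g$ (from $\Gamma^c \subseteq \Gamma$) takes care of the former and the canonical form of $\varpi$ (provided by the earlier appeal to \cref{r:min:commutative}) takes care of the latter. There is no real obstacle: the whole argument is a one-line superposition, and the crucial work was already done upstream when $\varpi$ was canonicalised.
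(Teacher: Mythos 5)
Your proof is correct and uses the same mechanism as the paper's (very terse) argument: normalise each $f_i$ by the superposition $f_i \mapsto s[f_i,\overline{f_i}]$ for a suitable binary polymorphism $s$, using idempotency (from $\Gamma^c\subseteq\Gamma$) to preserve the commutativity data on the good pair. Your choice of $s\in\supp(\varpi)$, whose output pair on \emph{every} $x\in D^2$ is already a singleton or $\{a,c\}$, is in fact the cleaner instantiation, since it also covers the case $\{f_i(x,y),f_i(y,x)\}=\{b,c\}$ on the non-commutative pair, which the paper's hint (composing with operations commutative on $\{a,b\}$ resp.\ $\{b,c\}$, plus projectivity on $\{a,c\}$) leaves implicit.
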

\begin{proof}
This follows from the fact that every $f \in \pol^{(2)}(\Gamma)$ is a projection on $\{a,c\}$ and that there are $g,h \in \pol^{(2)}(\Gamma)$ \st $g|_{\{a,b\}}$ and $h|_{\{b,c\}}$ are commutative.
\end{proof}

Let $f=f_3$, $g=f_6$, $h=f_4$, $d=f_1$, $q=f_2$, $r=f_5$ and assume \wlg that all of these operations equal $\pr_1$ on $\{a,c\}$.
Let $m \in \pol^{(3)}(\Gamma)$ be arithmetical on $\{a,c\}$.
By \cref{r:np:hard} such an operation must exist unless G3 is true.

We can construct a pair of term operations that is a tournament pair on $\{a,b\},\{b,c\}$ as follows
(we give different constructions depending on the values of $(h,\overline{h})(b,c)$ and $(g,\overline{g})(b,c)$).
\begin{align*}
(h,\overline{h})(b,c)=(a,c) &
\left\{\begin{aligned}
&h(h(y,x),x)\\
&d(d(y,h(x,h(y,x))),d(x,h(x,h(y,x))))
\end{aligned}\right.\\
(h,\overline{h})(b,c)=(c,a) &
\left\{\begin{aligned}
&m(h(h(y,x),y),h(x,y),h(y,h(y,x)))\\
&d(d(y,m(h(y,x),h(y,h(x,y)),h(x,y))),d(x,m(h(y,x),h(y,h(x,y)),h(x,y))))
\end{aligned}\right.\\
(h,\overline{h})(b,c)=(a,a) &
\left\{
\begin{aligned}
(g,\overline{g})(b,c)=(a,c) &
\left\{\begin{aligned}
&h(h(y,x),x)\\
&d(g(g(x,y),y),g(g(y,x),x))
\end{aligned}\right.\\
(g,\overline{g})(b,c)=(c,a) &
\left\{\begin{aligned}
&h(h(y,x),x)\\
&d(g(x,g(x,y)),g(y,g(y,x)))
\end{aligned}\right.\\
(g,\overline{g})(b,c)=(a,a) &
\left\{\begin{aligned}
&h(h(y,x),x)\\
&g(g(y,g(x,y)),g(x,g(y,x)))
\end{aligned}\right.\\
(g,\overline{g})(b,c)=(b,b) &
\left\{\begin{aligned}
&h(h(y,x),x)\\
&g(y,x)
\end{aligned}\right.\\
(g,\overline{g})(b,c)=(c,c) &
\left\{\begin{aligned}
&h(h(y,x),x)\\
&d(d(x,g(y,g(x,h(x,y)))),g(y,x))
\end{aligned}\right.
\end{aligned}\right.\\
(h,\overline{h})(b,c)=(b,b) &
\left\{\begin{aligned}
&h(q(y,h(x,y)),q(h(y,x),x))\\
&h(h(y,h(x,y)),h(x,h(y,x)))
\end{aligned}\right.\\
(h,\overline{h})(b,c)=(c,c) &
\left\{\begin{aligned}
&m(f(y,h(y,x)),h(x,y),f(x,h(y,x)))\\
&d(d(y,h(x,y)),d(x,h(y,x)))
\end{aligned}\right.
\end{align*}
This establishes G1.

\item
Since $\{a,c\} \in \wclose{\Gamma,\Delta}$ this would imply $\{b,c\}\in\wclose{\Gamma,\Delta}$.
So this case is not possible.
\item
Note that with $f_i(x,y)=g_i(\{x,y\})$ we have commutative operations $f_1,f_2 \in \pol^{(2)}(\Gamma,\Delta)$
\st 
$f_1 \in \comm{b&c&b}{a&a&c}{a&a&a}$ and $f_2 \in \comm{b&c&b}{a&a&c}{b&c&b}$.
Since $\{a,b\} \not\in \close{\Gamma}$ there is $p \in \pol^{(2)}(\Gamma)$ \st $p(a,b)=c$.
This means that $f_3=p[f_1,f_2]\in \pol^{(2)}(\Gamma) \cap \comm{b&c&b}{a&a&c}{c&x&c}$ for some $x \in D$.
\begin{itemize}
\item
If $x=c$, set $f'(x,y)=f_1(f_3(f_1(x,y),y),f_3(f_1(x,y),x))$.
\item
If $x=b$, set $f'(x,y)=f_2(f_1(f_3(x,y),y),f_1(f_3(x,y),x))$.
\item
If $x=a$, set $f'(x,y)=f_1(f_3(f_3(x,y),y),f_3(f_3(x,y),x))$.
\end{itemize}
In all cases $f' \comm{b&c&b}{a&a&c}{a&a&c}$.
So $f',f_2$ is a tournament pair,
and G1 is true.
\end{enumerate}

\item
$\{a,c\} \not\in \wclose{\Gamma,\Delta}$ and $\{c,b\} \in \wclose{\Gamma,\Delta}$

Symmetric to the previous cases.

\item
$\{a,c\},\{c,b\} \not\in \wclose{\Gamma,\Delta}$

By \cref{r:fpol} we know, since no two-element subset of $D$ is in $\wclose{\Gamma,\Delta}$
and because of $\nu_1,\nu_2$,
that there is $\omega \in \fpol^{(2)}(\Gamma,\Delta)$ with $f,g,h \in \supp(\omega)$ \st $f(a,c)=f(c,a)=a$, $g(b,c)=g(c,b)=b$
and $\{h(a,b),h(b,a)\} \ne \{a,b\}$.
From this it follows that $\omega^2$ have operations $f',g',h' \in \supp(\omega^2)$ \st $f'|_{\{a,c\}}$, $g'|_{\{b,c\}}$
and $h'|_{\{a,b\}}$ are commutative.
This means, by \cref{r:min:commutative}, that there is $\varpi \in \fpol^{(2)}(\Gamma,\Delta)$ \st every $f \in \supp(\varpi)$ is commutative.

Note that, because of $\nu_1$, we must have $W^\varpi_a(a,c) > 0$
and, because of $\nu_2$, we have $W^\varpi_b(b,c) > 0$.

If follows from the existence of $\nu_1,\nu_2$, \cref{r:min:set}, \cref{r:cross} and the fact that
$\{a,b\},\{a,c\},\{b,c\} \not\in \wclose{\Gamma,\Delta}$ that either
(1)
$R=\{(a,b),(b,a),(c,c)\} \in \wclose{\Gamma,\Delta}$ or
(2)
there are set functions $g_1,g_2$ that preserve $\Gamma$ and satisfy $\nu_i(g_i(X)) = \min\{ \nu_i(x):x \in \bigcap_{Y \in \close{\Gamma}:Y \supseteq X} Y\}$ for $i \in [2]$.
\begin{enumerate}
\item
Note that every operation $h \in \supp(\varpi)$ is commutative and therefore must satisfy $h(a,b)=c$ (otherwise $h$ does not preserve $R$).
We know that there is some $f \in \supp(\varpi)$ \st $f(a,c)=a$.
Since $f$ must preserve $R$ (as $\wclose{\Gamma,\Delta}^{(2)} \subseteq \Gamma$)
it holds that $f(b,c)=b$.

There must also be some $g \in \supp(\varpi)$ \st $g(a,c) \ne a$.
We have two cases to consider.
\begin{itemize}
\item $g(a,c)=c$

Since $g$ preserves $R$ it holds that $g(b,c)=c$.
Now $g$ is a semilattice operation as $g(a,b)=c$, so G2 is true.

\item $g(a,c)=b$

Since $g$ preserves $R$ it holds that $g(b,c)=a$.
Clearly 
\begin{align*}
\sum_{h \in \pol^{(2)}(\Gamma)} \varpi(h) \nu( h[f,g](x,y) ) \le \frac{1}{2} ( \nu(f(x,y)) + \nu(g(x,y)) )
\end{align*}
holds for every $x,y \in D$.
This means that there is another binary fractional polymorphism $\varpi'$ \st $f[f,g] \in \supp(\varpi')$ and \st every $h \in \supp(\varpi')$ is commutative.
Since $f[f,g]$ is a semilattice operation if follows that G2 holds.
\end{itemize}

\item
Note that with $f_i(x,y)=g_i(\{x,y\})$ we have $f_1,f_2 \in \pol^{(2)}(\Gamma,\Delta)$ and $f_1(x,y)=a$ if $x \ne y$ and $f_2(x,y)=b$ if $x \ne y$.
Since $\{a,b\} \not\in \close{\Gamma}$ there is $p \in \pol^{(2)}(\Gamma)$ \st $p(a,b)=c$.
This means that $f_3=p[f_1,f_2]\in \pol^{(2)}(\Gamma)$ satisfies $f_3(x,y)=c$ if $x\ne y$.
Define $p,q$ through $p(x,y)=f_3(f_1(f_3(x,y),y),f_1(f_3(x,y),x))$ and $q(x,y)=f_2(f_1(g(x,y),y),f_1(f_2(x,y),x))$.
It can be checked that $p,q$ is a tournament pair, so G1 is true.
\end{enumerate}
\end{itemize}

\subsection {Proof of \cref{r:twoset:dom}}
\label{s:r:twoset:dom}
Note that $\{a,b\}$ is shrinkable to $\{a\}$ and $\{a,b,c\}$ is shrinkable to $\{a,c\}$.
Consider the following cases.
\begin{itemize}
\item
There is $\psi \in \fpol^{(2)}(\Gamma,\Delta)$ that is $(a,c)$-dominating or $(c,a)$-dominating.
\begin{itemize}
\item
There is $\xi \in \fpol^{(2)}(\Gamma,\Delta)$ that is $(b,c)$-dominating or $(c,b)$-dominating.

Here $\{a,c\}$ is shrinkable to either $\{a\}$ or $\{c\}$ and 
$\{b,c\}$ is shrinkable to either $\{b\}$ or $\{c\}$, so G1 holds.

\item
There is no $\xi \in \fpol^{(2)}(\Gamma,\Delta)$ that is $(b,c)$-dominating or $(c,b)$-dominating.

From \cref{r:use:dom} it follows that there are $\nu_1,\nu_2 \in \eclose{\Gamma,\Delta}$ \st 
$\nu_1(x),\nu_2(x)< \infty$ for $x \in \{b,c\}$,
$\argmin_{x \in D} \nu_1(x) = \{b\}$ and $\argmin_{x \in D} \nu_2(x) = \{c\}$.
Now \cref{r:np:hard} implies that either G3 or G1 holds.

\end{itemize}

\item
There is no $\psi \in \fpol^{(2)}(\Gamma,\Delta)$ that is $(a,c)$-dominating or $(c,a)$-dominating.
\begin{itemize}
\item
There is $\xi \in \fpol^{(2)}(\Gamma,\Delta)$ that is $(b,c)$-dominating or $(c,b)$-dominating.

This case is symmetric to the last.
\item
There is no $\xi \in \fpol^{(2)}(\Gamma,\Delta)$ that is $(b,c)$-dominating or $(c,b)$-dominating.

From \cref{r:use:dom} it follows that there are $\nu_1,\nu_2,\nu_3,\nu_4 \in \eclose{\Gamma,\Delta}$ \st 
$\nu_1(x),\nu_2(x)< \infty$ for $x \in \{a,c\}$, $\nu_3(x),\nu_4(x)< \infty$ for $x \in \{b,c\}$,
$\argmin_{x \in D} \nu_1(x) = \{a\}$, $\argmin_{x \in D} \nu_2(x) = \{c\}$,
$\argmin_{x \in D} \nu_3(x) = \{b\}$ and $\argmin_{x \in D} \nu_4(x) = \{c\}$.

\begin{itemize}
\item
$\{ (a,c),(c,a) \} \in \wclose{\Gamma,\Delta}$ and $\{ (b,c),(c,b) \} \in \wclose{\Gamma,\Delta}$

From \cref{r:np:hard} it follows that, unless G3, there must be $m_1,m_2 \in \pol^{(3)}(\Gamma)$ \st $m_1|_{\{a,c\}}$ is arithmetical and
$m_2|_{\{b,c\}}$ is arithmetical. 
By \cref{r:arithmetical} we know that G1 is true.

\item
$\{ (a,c),(c,a) \} \in \wclose{\Gamma,\Delta}$ and $\{ (b,c),(c,b) \} \not\in \wclose{\Gamma,\Delta}$

From \cref{r:fpol} we know that there is some $\kappa \in \fpol^{(2)}(\Gamma,\Delta)$ 
with $f,g \in \supp(\kappa)$ \st $f\in\comm{c}{b}{c}$ and $g\in\comm{c}{b}{b}$.
From \cref{r:np:hard} it follows that, unless G3, there must be $m \in \pol^{(3)}(\Gamma)$ \st $m_1|_{\{a,c\}}$ is arithmetical. 
This implies G1.

\item
$\{ (a,c),(c,a) \} \not\in \wclose{\Gamma,\Delta}$ and $\{ (b,c),(c,b) \} \in \wclose{\Gamma,\Delta}$

Symmetric to the previous.
\item
$\{ (a,c),(c,a) \} \not\in \wclose{\Gamma,\Delta}$ and $\{ (b,c),(c,b) \} \not\in \wclose{\Gamma,\Delta}$

From \cref{r:fpol} we know that there some $\kappa \in \fpol^{(2)}(\Gamma,\Delta)$ 
with $f,g \in \supp(\kappa)$ \st $f\in\comm{c}{b}{b}$ and $g\in\comm{c}{a}{a}$.
By \cref{r:min:commutative} and the fact that $\{a,c\},\{b,c\} \in \wclose{\Gamma,\Delta}$
there is $\kappa' \in \fpol^{(2)}(\Gamma,\Delta)$ \st every $f \in \supp(\kappa')$ is commutative on $\{a,c\}$ and $\{b,c\}$.
We know that $W^{\kappa'}_a(a,c)=W^{\kappa'}_c(a,c)=\frac{1}{2}$ and 
$W^{\kappa'}_b(b,c)=W^{\kappa'}_c(b,c)=\frac{1}{2}$.
So there must be a pair $f,g \in \supp(\kappa')$ that is a tournament pair on $\{\{a,c\},\{c,b\}\}$.
This means that G1 is true.
\end{itemize}

\end{itemize}
\end{itemize}

\subsection {Proof of \cref{r:work:a}}
\label{s:r:work:a}
By \cref{r:twoset:dom} we can assume that there is no $\omega \in \fpol^{(2)}(\Gamma,\Delta)$ that is $(b,a)$-dominating or $(b,a)$-dominating.
By \cref{r:get:valuations} there are $\nu_1,\nu_2 \in \eclose{\Gamma,\Delta}$ \st $\nu_1(a)<\nu_1(c)<\nu_1(a)$ and $\nu_2(b)<\nu_2(c)<\nu_2(a)$.

If follows from \cref{r:min:set}, \cref{r:cross} and the fact that
$\{a,b\} \not\in \wclose{\Gamma,\Delta}$ and $\{a,c\},\{c,b\} \in \wclose{\Gamma,\Delta}$ that either
(1)
$R_1=\{(a,c),(c,a)\} \in \wclose{\Gamma,\Delta}$ or $R_2=\{(b,c),(c,b)\} \in \wclose{\Gamma,\Delta}$,
(2)
$R_2=\{(a,b),(b,a),(c,c)\} \in \wclose{\Gamma,\Delta}$, or
(3)
there are set functions $g_1,g_2$ that preserve $\Gamma$ and satisfy $\nu_i(g_i(X)) = \min\{ \nu_i(x):x \in \bigcap_{Y \in \close{\Gamma}:Y \supseteq X} Y\}$ for $i \in [2]$.
\begin{enumerate}
\item
In this case we are done.
\item
By \cref{r:fpol} we know, since we can assume (i) is false, since $\{a,b\} \not\in \wclose{\Gamma,\Delta}$ and because of $\nu_1,\nu_2$,
that there is $\omega \in \fpol^{(2)}(\Gamma,\Delta)$ with $f,g,h \in \supp(\omega)$ \st
$f\in\comm{c}{a}{a}$, $g\in\comm{c}{b}{b}$ and $\{h(a,b),h(b,a)\} \ne \{a,b\}$.
From this it follows that $\omega^2$ have operations $f',g',h' \in \supp(\omega^2)$ \st $f'|_{\{a,c\}}$, $g'|_{\{b,c\}}$
and $h'|_{\{a,b\}}$ are commutative.
This means, by \cref{r:min:commutative}, that there is $\varpi \in \fpol^{(2)}(\Gamma,\Delta)$ \st every $f \in \supp(\varpi)$ is commutative.

Since every operation $f \in\supp(\varpi)$ must preserve $R_2$ we know that 
$f \in \comm{b}{a}{c}$ and that if $f \in\comm{c}{a}{c}$, then $f \in\comm{c}{b}{c}$.
Note that, by $\nu_2$ there must be some $g \in \supp(\varpi)$ \st $g \in\comm{c}{a}{c}$.
It follows that $g$ is a semilattice operation, so G2 holds.

\item
Recall that $\wclose{\Gamma,\Delta}^{(1)} \subseteq \Gamma$.
With $f_i(x,y)=g_i(\{x,y\})$ we have $f_1,f_2 \in \pol^{(2)}(\Gamma,\Delta)$ and 
$f_1,f_2$ equals the $\min,\max$ with respect to the ordering $a < c < b$.
Clearly $f_1,f_2$ is a tournament pair, so G1 holds.
\end{enumerate}

\subsection {Proof of \cref{r:class:one}}
\label{s:r:class:one}
To prove the lemma we will make use of the following observations.
\begin{mylemma}
\label{r:constructions}
Let $D$ be any set.
Assume $f,g,h \in \opers 2$ and $m \in \opers 3$ are idempotent and that $a,b,c \in D$ are distinct.
\begin{enumerate}
\item
\label{construction:make:arithmetical}
If $m$ is arithmetical on $\{ \{b,c\}, \{c,a\} \}$, $f|_{\{b,c\}}=\pr_1$, $f|_{\{c,a\}}=\pr_1$ and $f(b,a)=f(a,b)=c$, then
$m'(x,y,z)=m(m(f(x,z),f(y,x),x),$ $f(x,z),$ $m(z,f(y,z),f(x,z)))$ is arithmetical on $\{ \{b,c\}, \{b,a\}, \{c,a\} \}$.
\item
\label{construction:make:ac:comm:b}
If $f|_{\{b,c\}}=g|_{\{b,c\}}=\pr_1$, $f|_{\{c,a\}}=g|_{\{c,a\}}=\pr_1$, $\{f(b,a),f(a,b)\}=\{b,c\}$ and $\{g(b,a),g(a,b)\}=\{c,a\}$,
then either $f'=f[f,\pr_1]$, $f'=f[g,\pr_1]$ or $f'=f[g,\pr_2]$ satisfies $f'(b,a)=f'(a,b)=c$.
\item
\label{construction:use:arith:b}
If $f \in\comm{c&a}{b&b}{b&c}$, $g \in\comm{c&a}{b&b}{c&c}$ and $f|_{\{c,a\}}=g|_{\{c,a\}}=\pr_1$.
Then $(x,y) \mapsto f(f(f(x,y),y),f(f(x,y),x)) \in\comm{c&a}{b&b}{b&b}$
and $(x,y) \mapsto m(g(y,g(y,x)),g(x,y),g(x,g(y,x))) \in\comm{c&a}{b&b}{c&a}$.
\end{enumerate}
\end{mylemma}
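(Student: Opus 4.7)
The lemma packages three independent gadget identities; in each case the proof amounts to a finite mechanical substitution. The plan is, for each asserted composite operation, to unfold its value at every tuple that must be checked, reduce each inner application of $f$ or $g$ using the tabulated values on the relevant pairs, and then verify the remaining short $m$-expression by idempotency and the arithmetical identities $m(x,y,y)=m(x,y,x)=m(y,y,x)=x$. No structural or algebraic argument is needed.

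For part (1), the plan is to verify the three identities $m'(x,y,y)=m'(y,y,x)=m'(x,y,x)=x$ separately on each of the three pairs $\{x,y\}\in\{\{b,c\},\{b,a\},\{c,a\}\}$. On $\{b,c\}$ and on $\{c,a\}$ the hypothesis $f|_{\{x,y\}}=\pr_1$ collapses every $f$-call in $m'$ to its first argument, so what remains is an iterated $m$-expression whose inputs all lie inside the pair, on which $m$ is arithmetical by hypothesis; the desired value $x$ then drops out after at most two applications of the arithmetical identities together with idempotency. The genuinely new case is $\{b,a\}$: here $f(b,a)=f(a,b)=c$ collapses every nontrivial $f$-call to $c$, and a quick inspection shows that every resulting inner and outer call of $m$ has its inputs in one of the pairs $\{c,a\},\{c,b\}$ and matches one of the three arithmetical patterns. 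Two arithmetical reductions per target identity then suffice to give $x$.

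For part (2), the plan is a straightforward split into the four cases determined by the two possible orderings of $\{f(b,a),f(a,b)\}=\{b,c\}$ and of $\{g(b,a),g(a,b)\}=\{c,a\}$. In each case the three candidate composites $f[f,\pr_1], f[g,\pr_1], f[g,\pr_2]$ evaluated at $(b,a)$ and $(a,b)$ simplify using $f|_{\{b,c\}}=f|_{\{c,a\}}=\pr_1$ together with idempotency; recording which candidate sends both $(b,a)$ and $(a,b)$ to $c$ yields a small table showing that at least one of the three always works. For part (3) the plan is to evaluate each composite directly at the four inputs $(c,b),(b,c),(a,b),(b,a)$: the first composite, being a pure nested $f$, reduces in all four cases to $b$ via the given values of $f$, the action $f|_{\{c,a\}}=\pr_1$, and idempotency; the second composite yields $c$ on $(c,b)$ and $(b,c)$ since every inner $g$-call collapses to $c$, and reduces to $m(c,c,a)$ and $m(a,c,c)$ on $(a,b)$ and $(b,a)$ respectively — both of which equal $a$ provided $m$ is arithmetical on $\{c,a\}$, a property of $m$ that is tacit in the statement but supplied from the surrounding context of \cref{r:class:one}.

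There is no conceptual obstacle — the whole verification is bookkeeping. The one nontrivial point is keeping the three input positions of $m'$ in part (1) straight after substituting the many $f$-values; a clean discipline is to tabulate $f(x,z),f(y,x),f(y,z)$ once per pair and only then plug these into the outer $m$-skeleton, invoking the arithmetical identities at the end.
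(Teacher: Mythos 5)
Your verification plan is correct and is exactly what the paper intends: the lemma is stated without proof as a collection of routine observations, and a direct case-by-case substitution of the tabulated values of $f$ and $g$ followed by idempotency and the arithmetical identities for $m$ is all that is needed (I checked the key evaluations, e.g.\ for part~(1) on $\{a,b\}$ the outer calls become $m(b,c,c)$, $m(c,c,b)$, $m(a,c,c)$, $m(c,c,a)$, all inside a pair on which $m$ is arithmetical). You are also right about the one real wrinkle: the second composite in part~(3) reduces to $m(c,c,a)$ and $m(a,c,c)$ on $(a,b)$ and $(b,a)$, so it needs $m$ to be arithmetical on $\{c,a\}$ rather than merely idempotent as the lemma's preamble states --- a hypothesis that is indeed supplied where the lemma is invoked in the proof of \cref{r:class:one}.
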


By \cref{r:np:hard} there is, unless G3, an operation $m \in \pol^{(3)}(\Gamma)$ that is arithmetical on $\{\{a,c\}\}$.
By \cref{r:twoset:dom} we can assume that there is no $\omega \in \fpol^{(2)}(\Gamma,\Delta)$ that is $(b,a)$-dominating or $(b,a)$-dominating.
By \cref{r:get:valuations} there are $\nu_1,\nu_2 \in \eclose{\Gamma,\Delta}$ \st $\nu_1(a)<\nu_1(c)<\nu_1(a)$ and $\nu_2(b)<\nu_2(c)<\nu_2(a)$.

By \cref{r:fpol} and the existence of $\nu_2$ there is $\omega \in \fpol^{(2)}(\Gamma,\Delta)$ with $f,h \in \supp(\omega)$ \st
$f(b,c)=f(c,b)$ and $\{h(a,b),h(b,a)\} \subseteq \{b,c\}$.
This means that there are $f',h' \in \supp(\omega^2)$ \st $f'(b,c)=f'(c,b)$ and $h'(a,b)=h'(b,a)$.
By \cref{r:min:commutative} there is $\psi \in \fpol^{(2)}(\Gamma,\Delta)$ \st $(s,\overline{s})(x) \not\in 
\{ (a,b),(b,a),(b,c),(c,b) \}$ for every $s \in \supp(\psi)$ and $x \in D^2$.

Because of $\nu_1,\nu_2$ it holds that $W_b^\psi(b,c)=W_c^\psi(b,c)=\frac{1}{2}$.
\begin{itemize}
\item
If $W_c^\psi(a,b)=0$, then since $\psi$ is not $(a,b)$ or $(b,a)$-dominating we must have $W_a^\psi(a,b)=W_b^\psi(a,b)=\frac{1}{2}$.
In this case there must be $f,g \in \supp(\psi)$ that is a tournament pair on $\{\{a,b\},\{b,c\}\}$,
and G1 is true.
\item
If $W_a^\psi(a,b)=W_b^\psi(a,b)=0$, then there is $f,g \in \supp(\psi)$ \st
$f \in\comm{c&a}{b&b}{b&c}$ and $g \in\comm{c&a}{b&b}{c&c}$.
By \cref{r:constructions}\eqref{construction:use:arith:b} there is again a tournament pair on $\{\{a,b\},\{b,c\}\}$,
and G1 is true.
\item
If $W_a^\psi(a,b)=0, W_b^\psi(a,b)>0$,
then, since $\psi$ is not $(b,a)$-dominating, it holds that $W_c^\psi(a,b)>\frac{1}{2}$.
So again there are $f,g \in \supp(\psi)$ \st
$f \in\comm{c&a}{b&b}{b&c}$ and $g \in\comm{c&a}{b&b}{c&c}$.
As in the previous case G1 holds.
\item
If $W_a^\psi(a,b)>0, W_b^\psi(a,b)=0$, 
note that $W_a^\psi(a,b) < W_c^\psi(a,b)$.
Otherwise, by $\nu_2$, the languages $(\Gamma,\Delta)$ can not admit $\psi$.
Hence, there must be $f \in \supp(\psi)$ \st $f\in\comm{a&b}{b&c}{c&x}$ where $x \in \{b,c\}$.
Define $\overline{x}$ \st $\{x,\overline{x}\}=\{b,c\}$.
Every $g \in \supp(\psi) \cap \comm{b}{c}{\overline{x}}$ can not satisfy $g\in\comm{a}{b}{a}$ as that would
imply that $\psi$ is $(a,b)$-dominating.
If some $g \in \supp(\psi) \cap \comm{b}{c}{\overline{x}}$ satisfies $g\in\comm{a}{b}{c}$,
then using \cref{r:constructions}\eqref{construction:use:arith:b} with $f,g$ we see there is a tournament pair on $\{\{a,b\},\{b,c\}\}$,
and G1 is true.
Hence, there must be some $g \in \supp(\psi) \cap \comm{b}{c}{\overline{x}}$
\st $\{g(a,b),g(b,a)\}=\{a,c\}$.
Assume \wlg that $g|_{\{a,c\}}=\pr_2$.
Note that $g[g,f] \in\comm{a&b}{b&c}{c&\overline{x}}$.
By \cref{r:constructions}\eqref{construction:use:arith:b} there is again a tournament pair on $\{\{a,b\},\{b,c\}\}$,
and G1 is true.

\item
Otherwise, $W_a^\psi(a,b)>0, W_b^\psi(a,b)>0, W_c^\psi(a,b)>0$.

We know that there is $f \in \supp(\psi)$ \st $f\in\comm{a&b}{b&c}{b&x}$ for some $x \in \{b,c\}$.

Note that, since $\psi$ is not $(b,a)$-dominating, there is some operation
$h \in \supp(\psi)$ \st $h\in\comm{b}{c}{x}$ \st $h(a,b)=h(b,a) \in \{a,c\}$ or $\{h(a,b),h(b,a)\}=\{a,c\}$.
If $h(a,b)=h(b,a)=a$, then since $\{a,b\}\not\in\close{\Gamma}$ there is $p \in\pol^{(2)}(\Gamma)$ \st $p(a,b)=c$.
This means that $p[h,f]\in\comm{a&b}{b&c}{c&x}$.
If $\{h(a,b),h(b,a)\}=\{a,c\}$, then assume \wlg that $h(a,b)=c$.
This means that $h'=h[h,f] \in\comm{b}{c}{x}$ satisfies $h'(a,b) \in \{c,b\}$ and $h'(b,a)=c$,
and that, since $W_c^\psi(b,c)=\frac{1}{2}$, there is $h'' \in \pol^{(2)}(\Gamma)$ \st $h''\in\comm{a&b}{b&c}{c&x}$.
So, we can assume \wlg that $h\in\comm{a&b}{b&c}{c&x}$.

Define $y$ \st $\{x,y\}=\{b,c\}$.
Since $\psi$ is not $(b,a)$-dominating there is some operation
$g \in \supp(\psi)$ that is not in $\comm{a&b}{b&c}{b&y}$.
If $g\in\comm{a&b}{b&c}{a&y}$,
then G1 holds.
If $g\in\comm{a&b}{b&c}{c&y}$, then by 
\cref{r:constructions}\eqref{construction:use:arith:b} there is a tournament pair on $\{\{a,b\},\{b,c\}\}$,
and G1 is true.
Otherwise $\{g(a,b),g(b,a)\}=\{a,c\}$.
Assume \wlg that $g|_{\{a,c\}}=\pr_2$.
Note that $g[g,h] \in\comm{a&b}{b&c}{c&y}$.
By \cref{r:constructions}\eqref{construction:use:arith:b} there is again a tournament pair on $\{\{a,b\},\{b,c\}\}$,
and G1 is true.
\end{itemize}

\subsection {Proof of \cref{r:work:b}}
\label{s:r:work:b}
By \cref{r:np:hard} and \cref{r:arithmetical} there is, unless \minhom$(\Gamma,\Delta)$ is \class{NP}-hard, an operation $m \in \pol^{(3)}(\Gamma)$ that is arithmetical on $\{\{a,c\},\{c,b\}\}$.
By \cref{r:twoset:dom} we can assume that there is no $\omega \in \fpol^{(2)}(\Gamma,\Delta)$ that is $(b,a)$-dominating or $(b,a)$-dominating.
By \cref{r:get:valuations} there are $\nu_1,\nu_2 \in \eclose{\Gamma,\Delta}$ \st $\nu_1(a)<\nu_1(c)<\nu_1(a)$ and $\nu_2(b)<\nu_2(c)<\nu_2(a)$.
By \cref{r:fpol} and the existence of $\nu_1,\nu_2$ there is $\omega \in \fpol^{(2)}(\Gamma,\Delta)$ with $f,g \in \supp(\omega)$ \st
$\{f(a,b),f(b,a)\} \subseteq \{a,c\}$ and $\{g(a,b),g(b,a)\} \subseteq \{c,b\}$.
\begin{itemize}
\item
If $\{f(a,b),f(b,a)\}=\{c\}$ or $\{g(a,b),g(b,a)\}=\{c\}$, then by (using $f[f,\overline{f}]$ or $g[g,\overline{g}]$)
\cref{r:constructions}\eqref{construction:make:arithmetical} implies G1.
\item
If $\{f(a,b),f(b,a)\}=\{a\}$ and $\{g(a,b),g(b,a)\}=\{b\}$, then $f,g$ is a tournament pair on $\{\{a,b\}\}$,
so G1 is true.
\item
If $\{f(a,b),f(b,a)\}=\{a,c\}$ and $\{g(a,b),g(b,a)\}=\{b,c\}$, then by (using $f[f,\overline{f}]$ or $g[g,\overline{g}]$)
\cref{r:constructions}\eqref{construction:make:ac:comm:b} 
we know that there is $h \in \pol^{(2)}(\Gamma)$ \st $h(a,b)=h(b,a)=c$.
This brings us to the first case.
\item
If $\{f(a,b),f(b,a)\}=\{a,c\}$ and $\{g(a,b),g(b,a)\}=\{b\}$,
then assume \wlg that $f|_{\{b,c\}}=\pr_1$.
If $f(a,b)=a$, then $f'=f[g,f]$ satisfies $f'(a,b)=c$ and $f'(b,a)=b$.
This takes us to the previous case.
Otherwise $f(a,b)=c$.
Here $f'=f[f,g]$ satisfies $f'(a,b)=c$ and $f'(b,a)=c$, which takes us to the first case.
\item
Otherwise $\{f(a,b),f(b,a)\}=\{a\}$ and $\{g(a,b),g(b,a)\}=\{c,b\}$.
This case is handled in a way symmetrical to the previous.
\end{itemize}

\end{document}